\newtheorem{theorem}{Theorem}[section]
\newtheorem*{lemma*}{Lemma}
\newtheorem*{theorem*}{Theorem}
\newtheorem{proposition}[theorem]{Proposition}
\newtheorem{corollary}[theorem]{Corollary}
\newtheorem*{corollary*}{Corollary}
\newtheorem{lemma}[theorem]{Lemma}
\newtheorem{definition}{Definition}[section]
\newtheorem*{question}{Question}
\theoremstyle{remark}
\newtheorem{remark}{Remark}[section]
\DeclareMathOperator{\KS}{\mathrm{C}\mskip 1.2mu}
\DeclareMathOperator{\KP}{\mathrm{K}\mskip 1.2mu}
\DeclareMathOperator{\mm}{\mathsf{m}\mskip 1.2mu}
\DeclareMathOperator{\ee}{\mathsf{e}\mskip 1.2mu}
\DeclareMathOperator{\E}{\mathrm{E}\mskip 1.2mu}
\DeclareMathOperator{\D}{\mathrm{D}\mskip 1.2mu}
\newcommand{\cnd}{\mskip 1mu|\mskip 1mu}
\newcommand{\mcI}{\mathcal{I}}
\let\le=\leqslant
\let\ge=\geqslant
\let\eps=\varepsilon
\newcommand{\param}{5}
\newenvironment{simpleitem}{\begin{itemize}[leftmargin=*,itemsep=3pt,topsep=3pt,parsep=0pt,partopsep=0pt,label=--]}{\end{itemize}}
\title{Precise Expression for the Algorithmic Information Distance}
\author{Bruno Bauwens\footnote{
  A part of the work was presented in STACS-2020, see~\cite{infoDistRevisitedSTACS} for the abstract.
  \newline
  The work was initiated by Alexander (Sasha) Shen, who informed me about the error in~\cite{mahmud} during a discussion of the paper~\cite{Vitanyi2017}.
  Afterwards I explained the proof of theorem~\ref{th:gap} to Sasha. He simplified it, and he wrote the first 5 sections of the paper.
  Later, I added theorem~\ref{th:exact}, which is proven in section~\ref{sec:exact}.
  After this was added, Sasha decided that his contribution was no longer proportional, and did no longer want to remain coauthor. 
  Finally, I added  theorem~\ref{th:exactSet}.
  I~am especially grateful for Sasha's generous permission to publish the nicely written sections 1--5, 
  with minor modifications suggested by the reviewers of STACS. 
  I thank these reviewers for their suggestions. 
  All errors in this document are solely my responsibility.
  \newline
  \indent I thank Mikhail Andreev for the proof of proposition~\ref{prop:triangle_inequality} and many useful discussions.
  Finally, I thank Artem Grachev and the participants of the Kolmogorov seminar in Moscow state university for useful discussions.
}
}
\begin{document}
\maketitle

\begin{abstract}
  We consider the notion of information distance between two objects $x$ and $y$ 
  introduced by Bennett, G\'acs, Li, Vit\'anyi, and Zurek~\cite{bglvz} 
  as the minimal length of a program that computes $x$ from $y$ as well as computing $y$ from $x$.
  In this paper it was proven that the distance is equal to $\max (\KP(x\cnd y),\KP(y\cnd x))$ up to additive logarithmic terms, 
  and it was conjectured that this could not be improved to $O(1)$ precision.
  We revisit subtle issues in the definition and prove this conjecture. 
  We show that if the distance is at least logarithmic in the length, then this equality does hold with $O(1)$ precision for strings of equal length. 
  Thus for such strings, both the triangle inequality and the characterization hold with optimal precision.
  Finally, we extend the result to sets $S$ of bounded size. 
  We show that for each constant~$s$, the shortest program that prints an $s$-element set $S \subseteq \{0,1\}^n$ given any of its elements, 
  has length at most $\max_{w \in S} \KP(S \cnd w) + O(1)$, 
  provided this maximum is at least logarithmic in~$n$.
\end{abstract}

\section{Introduction}

Informally speaking, Kolmogorov complexity measures the amount of information in an object (say, a bit string) in bits. The complexity $\KS(x)$ of $x$ is defined as the minimal bit length of a program that generates~$x$. This definition depends on the programming language used, but one can fix an optimal language that makes the complexity function minimal up to an $O(1)$ additive term. In a similar way one can define the \emph{conditional} Kolmogorov complexity $\KS(x\cnd y)$ of a string~$x$ given some other string~$y$ as a condition. Namely, we consider the minimal length of a program that transforms~$y$ to~$x$. Informally speaking, $\KS(x\cnd y)$ is the amount of information in~$x$ that is missing in~$y$, the number of bits that we should give in addition to~$y$ if we want to specify $x$.

The notion of \emph{information distance} was introduced in~\cite{bglvz} as ``the length of a shortest binary program that computes $x$ from  $y$ as well as computing $y$ from $x$''. It is clear that such a program cannot be shorter than $\KS(x\cnd y)$  or $\KS(y\cnd x)$ since it performs both tasks; on the other hand, it cannot be much longer than the sum of these two quantities (we can combine the programs that map $x$ to $y$ and vice versa with a small overhead needed to separate the two parts and to distinguish $x$ from $y$). As the authors of~\cite{bglvz} note, ``being shortest, such a program should take advantage of any redundancy between the information required to go from $x$ to $y$ and the information required to go from $y$ to~$x$'', and the natural question arises: to what extent is this possible? The main result of~\cite{bglvz} gives the strongest upper bound possible and says that the information distance equals
$
\max(\KS(x\cnd y),\KS(y\cnd x))
$
with logarithmic precision. In many applications, this characterization turned out to be useful, see~\cite[section 8.4]{lv}.
In fact, in~\cite{bglvz} the prefix version of complexity, denoted by $\KP(x\cnd y)$, and the corresponding definition of information distance were used; see, e.g.~\cite{usv} for the detailed explanation of different complexity definitions. 
The difference between prefix and plain versions is logarithmic in the complexity, 
so it does not matter whether we use plain or prefix versions if we are interested in results with logarithmic precision. 
However, the prefix version of the above characterization has an advantage: 
after adding a large enough constant, this distance satisfies the triangle inequality. 
The plain variant does not have this property, see remark~\ref{rem:trianglePlain} below. 

Several inequalities that are true with logarithmic precision for plain complexity, become true with $O(1)$-precision if prefix complexity is used. So, one could hope that the information distance is equal to $\max \{\KP(x \cnd y), \KP(y \cnd x)\}$ with $O(1)$-precision.
If this is true, then also the original definition satisfies the triangle inequality (after a constant increase). In~\cite[section VII]{bglvz}, this characterization with $O(1)$-precision was conjectured to be false, and in~\cite{mahmud} it was claimed to be true; in~\cite{lzlm} a similar claim is made with reference to~\cite{bglvz}.\footnote{The authors of~\cite{lzlm} define (section 2.2) the function $\E(x,y)$ as the prefix-free non-bipartite version of the information distance (see the discussion below in section~\ref{subsec:four-versions}) and then write: ``the following theorem proved in [4] was a surprise: Theorem 1. $\E(x,y)=\max(\KS(x\cnd y),\KS(y\cnd x))$''. They do not mention that in the paper they cited as [4] (it is~\cite{bglvz} in our list) there is a logarithmic error term; in fact, they do not mention any error terms (though in other statements the constant term is written explicitly). Probably this is a typo, since more general Theorem 2 in~\cite{lzlm} does contain a logarithmic error term.} Unfortunately, the proof in~\cite{mahmud} contains an error, and we show that the result is not valid for prefix complexity with $O(1)$-precision. On the other hand, it is easy to see that the original argument from~\cite{bglvz} can be adapted for plain complexity to obtain the result with $O(1)$-precision, as noted in~\cite{Vitanyi2017}.

In this paper we try to clarify the situation. We discuss the possible definitions of information distance in plain and prefix versions, and their subtle points (one of these subtle points was the source of the error in~\cite{mahmud}). Then we prove our main results, which apply to the prefix distance defined in~\cite{bglvz} and 3 other variants that we discuss in section~\ref{subsec:four-versions}. 

\bigskip
\noindent
Let $ \E(x,y) = \max \{\KP(x \cnd y), \KP(y \cnd x)\}.$  Our first 2 main results are the following.

\newcommand{\theoremExact}{
    If both $x$ and $y$ have length exactly~$n$ and if $\E(x,y) \ge 1.01\log n$, then the prefix information distances are equal to $\E(x,y) + O(1)$. 
  }

\begin{theorem}\label{th:exact}
  \theoremExact
\end{theorem}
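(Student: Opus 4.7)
The lower bound $\ID(x,y) \ge \E(x,y)$ is immediate: a program witnessing the information distance in particular maps $x$ to $y$, so its length is at least $\KP(y\cnd x)$, and symmetrically at least $\KP(x\cnd y)$.

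For the upper bound, the plan is to construct a conditional prefix machine $V$ such that for every pair $(x,y)$ with $|x|=|y|=n$ and $\E(x,y)\ge 1.01\log n$, some program $p$ of length $\E(x,y)+O(1)$ satisfies both $V(p,x)=y$ and $V(p,y)=x$. Simulating $V$ on the reference prefix machine then yields the theorem.

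The construction rests on an online symmetric edge-coloring of the graphs
$$G_{k,n} \;=\; \{(x,y) \in \{0,1\}^n\times\{0,1\}^n : \max(\KP(x\cnd y),\KP(y\cnd x)) \le k\}.$$
These graphs are c.e.\ uniformly in $(k,n)$, have maximum vertex degree at most $2^{k+1}$, and form an increasing chain in $k$. The goal is to assign to each pair $(x,y)$ a symmetric binary code $\chi(x,y)=\chi(y,x)$ of length $\E(x,y)+O(1)$ so that for each vertex $z$ the set $\{\chi(z,w) : w\}$ is prefix-free, with the assignment produced online from the c.e.\ enumeration. Defining $V(p,z)$ to simulate the enumeration and output the unique $w$ with $\chi(z,w)=p$ then yields a conditional prefix machine of the required form: prefix-freeness at each $z$ makes $V$ a valid prefix machine, and symmetry of $\chi$ ensures the same $p$ swaps $x$ and $y$.

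The Kraft bound $\sum_w 2^{-\E(z,w)} \le \sum_w 2^{-\KP(w\cnd z)} \le 1$ at each vertex provides the budget for such an assignment: a standard online tree argument, with one constant of slack in the code length to accommodate the symmetric constraint (the code must be simultaneously unused at both endpoints), yields existence in principle. The main obstacle is to carry this out effectively. Because $\E$ is only upper semi-computable through the dovetailed enumeration of short prefix descriptions, an edge $(x,y)$ may be first presented to the online algorithm with a confirmed upper bound on $\E(x,y)$ that exceeds the true value, forcing the initial code to be longer than $\E(x,y)+O(1)$. The hypothesis $\E(x,y)\ge 1.01\log n$ supplies a $0.01\log n$ slack that, via a refined Kraft-type accounting along the enumeration, absorbs this effectivity loss and ensures that every qualifying pair ends up with a code of length $\E(x,y)+O(1)$. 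Verifying this quantitatively, together with the prefix-freeness across different levels $k$, is the main technical content of the proof.
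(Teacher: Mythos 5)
Your reduction---build a conditional prefix-free machine $V$ from an online symmetric allocation of codes (equivalently, intervals of the Cantor space) to enumerated edges, with prefix-freeness enforced separately at each vertex---is the same framework as the paper's lemma~\ref{lem:strategyImpliesTheoremExact}, and the lower bound is indeed immediate. But the step you dismiss as routine is exactly the step that is false in general and constitutes essentially all of the work. There is no ``standard online tree argument'' that produces, with $O(1)$ slack, codes of length $\E(x,y)+O(1)$ that are simultaneously prefix-free at both endpoints: theorem~\ref{th:gap} is proven by exhibiting an adversary (Alice) who defeats \emph{every} online allocator in precisely this game when request sizes are unrestricted, by using stars of small requests to fragment the free space at many vertices in incompatible ways, so that a later request finds no common free interval at its two endpoints even though both endpoints have plenty of free measure. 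The difficulty is not the symmetric constraint as such (a constant of slack does handle the single-length, plain case, as in theorem~\ref{th:bglvz}); it is the interaction between intervals of \emph{different} sizes under the per-vertex prefix-freeness requirement, and your sketch offers nothing to address it.

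Relatedly, you misattribute the role of the hypothesis $\E(x,y)\ge 1.01\log n$. It is not there to absorb an ``effectivity loss'' from the upper semicomputability of $\E$: decreasing approximations are harmless, since each improved value $k$ simply triggers a fresh request of size proportional to $2^{-k}$, and the per-vertex budget is only affected by a constant factor (a geometric sum), exactly as in the paper's construction of $V$. Its actual role is to guarantee that every individual request has size at most about $n^{-1.01}$, i.e.\ polynomially small in the length; this smallness is what makes a winning allocation strategy possible at all. The paper's strategy for Bob partitions the Cantor space into blocks, assigns blocks to ``regions'' of each string via a pseudorandom, extractor-like incidence structure whose existence is the combinatorial lemma~\ref{lem:extractorlike} (proved by the probabilistic method with Chernoff bounds), groups requests of equal size into regions, and controls the failed allocations through a blaming and charging argument (proposition~\ref{prop:strat_blaming}, corollary~\ref{cor:strat_blaming}), finally running two copies of the strategy to reach the $n^{-1.01}$ threshold. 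None of this machinery is suggested by, or recoverable from, your sketch, so the proposal has a genuine gap at its core.
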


\noindent
More generally, for all strings $x$ and $y$ (with possibly different lengths), 
the prefix information distances are equal to $\E(x,y) + O(\log\log \KP(x,y))$, 
see corollary~\ref{cor:exact_loglog} below.
This improves the known precision from logarithmic to double logarithmic in~$\KP(x,y)$. 
In section~\ref{ss:exactStrong} we present more characterizations.

Contrary to what is claimed in~\cite{mahmud}, 
we show that the condition $\E(x,y) \ge 1.01 \log n$ in theorem~\ref{th:exact} is necessary.

\newcommand{\theoremGap}[1]{
   For all $n$, there exist $x$ and $y$ of length $n$ such that all prefix information distances exceed #1 by at least $\Omega(\log \log n)$.
 }

\begin{theorem}\label{th:gap}
  \theoremGap{$\E(x,y)$}
\end{theorem}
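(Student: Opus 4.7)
The plan is to establish the existence of the required pair via a counting argument. By theorem~\ref{th:exact}, any gap of size $\omega(1)$ between the prefix information distance and $\E(x,y)$ can only occur when $\E(x,y)$ is at most roughly $\log n$, so the construction targets this regime: we look for pairs $(x,y)$ with $\E(x,y) = k$ for some $k$ of order $\log n$ or less, and with prefix information distance at least $k + \Omega(\log\log n)$.

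First, I would upper-bound the number of pairs $(x,y) \in \{0,1\}^n \times \{0,1\}^n$ that can be covered as witnesses of a prefix information distance at most $k + d$ for $d = c\log\log n$. Each prefix program $r$ of length at most $k+d$ defines at most one partial involution on $\{0,1\}^n$ and so contributes at most $2^{n-1}$ unordered witness pairs; the Kraft inequality bounds the total weight of such programs. This would be balanced against the cardinality of a carefully chosen family $\mathcal{F}$ of candidate pairs satisfying $\E(x,y) \le k$. A natural family consists of pairs $x = zu$, $y = zv$ where $z \in \{0,1\}^{n-m}$ is a fixed prefix and $u,v \in \{0,1\}^m$ are short suffixes with $m \approx \log n$, so that $\E(x,y) \le m + O(\log m)$ by encoding the differing suffixes. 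Pigeonhole would then yield a pair in $\mathcal{F}$ whose information distance exceeds $k+d$.

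The main obstacle is that a naive count of prefix programs of length at most $k+d$ gives the same order of magnitude as the family size (both are of order $2^{n+k}$), so the desired $\log\log n$ gap does not fall out directly. The sharpening must come from a refined combinatorial analysis of the involution constraint. Specifically, the witnessing program has to self-delimit its behaviour on both $x$ and $y$, and when those behaviours correspond to two essentially different direction-specific programs of length $k$, combining them into a single prefix program incurs an overhead of order $\log k = \log\log n$. Formalising this overhead --- for instance by a Kraft-weighted count of involutions or by a direct diagonal construction that plays programs of length $\le k + d$ off the family $\mathcal{F}$ --- is the heart of the proof, and is what converts the counting argument into a genuine $\log\log n$ lower bound rather than the naive $O(1)$ bound.
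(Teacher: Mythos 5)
There is a genuine gap: everything after ``the sharpening must come from a refined combinatorial analysis'' is exactly the missing content, and the direction you point to for that sharpening is unlikely to work. First, a static candidate family cannot do the job. For your family $\{(zu,zv)\}$ with suffixes of length $m\approx\log n$, the optimal prefix machine \emph{does} achieve $\E(x,y)+O(1)$: a self-delimiting code of $m$ followed by the bitwise XOR of the suffixes maps $x$ to $y$ and $y$ to $x$, and its length matches $\E$ up to $O(1)$ on typical pairs of that family. More generally, for any family that is fixed in advance and simply describable, one can build a machine tailored to it, so no pigeonhole within a pre-specified $\mathcal{F}$ can yield an $\omega(1)$ gap; the hard pairs must be constructed \emph{adaptively against the machine}. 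Second, the mechanism you propose --- an overhead of order $\log k$ for self-delimiting a single program that serves both directions --- cannot be the source of the gap, because the theorem is proved in the paper for all four distances, including the bipartite ones where the direction is given and the two directions may even use different machines; so ``redundancy between the two directions'' is not where the hardness lies.

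The paper's proof supplies precisely the adaptive ingredient you are missing, via a game. Alice (the adversary) places weights $m_{u,v}$ with $\sum_{v}m_{u,v}\le d$ per vertex, in response to the enumeration of the machine's allocations of intervals of Cantor space to edges; the per-vertex disjointness (Kraft) constraint is what she exploits. Her strategy works at geometrically increasing scales $\eps_0\ll\eps_1\ll\cdots$, making star-shaped requests centered at fresh vertices so that the machine is forced to allocate disjoint intervals spread across the space; the accounting is done with ``everywhere dirty'' intervals (a G\'acs-style neighborhood argument), so that the \emph{unavailable} measure grows by $d$ per stage while Alice spends only $d/2$ per vertex, and after $1/d$ stages the machine has no room left for one more request. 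Crucially, the pairs witnessing the gap are only determined at the end of this play, and the fact that they have small $\E$ is not checked directly but follows from proposition~\ref{prop:qpd}: Alice's limit weights form a symmetric lower semicomputable function with bounded row sums, hence are dominated by $2^{-\E(u,v)+O(1)}$. Without some analogue of this adaptive construction and of the semimeasure step that converts the adversary's weights into an upper bound on $\E$, your counting framework cannot be completed; with them, the count of vertices used by the strategy is what produces the quantitative $\Omega(\log\log n)$ bound on $n$-bit strings.
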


\noindent
The difference is bounded by $O(\log \E(x,y))$. The theorem implies that
this is optimal up to constant factors, (thus the characterization does not hold with precision~$O(\log \log \E(x,y))$).
These 2 results above provide 2 surprising precedents regarding the precision of an equality with Kolmogorov complexities.
\\- The plain variant of the characterization is more precise than the prefix variant.
\\- The equality becomes more precise when the quantities become larger.

\bigskip
\noindent
Our last main result generalizes theorem~\ref{th:exact} from pairs of strings to finite sets.
In~\cite{lzlm}, the minimal lenght of a program that maps any element of the set to any other element is studied. 
Such a program must exploit any information shared by all elements of the set.
Given a machine $U$ and a finite set~$S$, let $\D_U(S)$ be the minimal length of a program on~$U$ that on input any element of $S$ 
prints all elements in $S$ and halts.
The distance between strings $x$ and $y$ is the special case of this measure for $S = \{x,y\}$.
In~\cite{Vitanyi2017} it was shown that for optimal plain machines $U$ and for all finite sets $S \subseteq \{0,1\}^*$, 
\[
  \D_U(S) \; = \; \max_{x \in S} \KS(S \cnd x) \;+\; O(\log \# S).
\]
Note that the precision $O(\log \# S)$ does not depend on the length of the strings in~$S$.
It is also shown that this precision is optimal up to constant factors. 
Our last main result provides a similar characterization for the prefix variant.  

\begin{theorem}\label{th:exactSet} 
  If $U$ is a prefix-free machine that makes the function $\D_U$ minimal up to additive constants, then 
  \[
    \D_U(S) \;=\; \max_{x \in S} \KP(S \cnd x) \;+\; O(\log \#S),
  \]
  provided $S \subseteq \{0,1\}^n$ and the maximum is at least~$(\param\#S )^{\# S}\log n$.
\end{theorem}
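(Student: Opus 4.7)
}

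The lower bound $\D_U(S) \ge \max_{x \in S} \KP(S \cnd x) - O(1)$ is immediate, since the program witnessing $\D_U(S)$ together with any fixed $x \in S$ computes $S$ from $x$. So I focus on the upper bound: writing $s = \#S$ and $k = \max_{x \in S}\KP(S \cnd x)$, the goal is a prefix-free program of length $k + O(\log s)$ that prints $S$ given any $x \in S$. I plan to follow the hypergraph-coloring paradigm already used for pairs in Theorem~\ref{th:exact}. For fixed parameters $(s,k)$, enumerate the candidate $s$-subsets $T \subseteq \{0,1\}^n$ satisfying $\KP(T \cnd w) \le k$ for every $w \in T$: run the enumeration of upper bounds on $\KP$ and insert $T$ once all $s$ of its membership bounds have been witnessed. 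Kraft's inequality bounds the number of candidates through any single vertex by $2^{k+1}$, so the conflict graph on candidates (an edge for each shared vertex) has maximum degree less than $s \cdot 2^{k+1}$. A greedy online coloring run in parallel with the enumeration assigns each candidate $T$ a color in a palette of size $s \cdot 2^{k+1}$; the color is a valid description of $T$ from any $x \in T$, since the decoder re-runs the same procedure and outputs the first candidate of the prescribed color that contains $x$. The color itself fits in $k + \lceil \log s \rceil + 1$ bits.

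The main obstacle is turning the triple $(s, k, \text{color})$ into a single prefix-free program of length $k + O(\log s)$. A naive self-delimiting description of $k$ would cost $\Theta(\log k)$ and violate the claimed precision as soon as $k \gg s$. The plan is to reuse the absorption mechanism underlying Theorem~\ref{th:exact}: codewords for candidates are allocated, in parallel across all levels $k' \le k$, from a single prefix-free pool of bounded Kraft sum, so that the length of the delivered codeword already determines $k$ implicitly, and no explicit description of $k$ is needed. For this allocation never to be exhausted, the pool at each level must dominate the enumeration at every smaller level after accounting for the factor $s$ in the conflict graph and for the dependence on $n$. This is where the hypothesis $k \ge (\param \#S)^{\#S}\log n$ is used: the $\log n$ factor accommodates the size of $\{0,1\}^n$, and I expect the super-exponential $(\param s)^s$ to arise either by iterating the pair-case argument $s$ times or by an induction on $\#S$ that multiplies the threshold by $\Theta(s)$ at each step.

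The hardest technical step will be making the pool allocation uniform in $s$ while preserving prefix-freeness across parameters. A single candidate $T$ may appear in the enumerations at many different $(s,k')$, and the greedy colors produced at different parameters will in general differ; nevertheless, the globally chosen codeword must be uniquely decodable across all $(s,k)$-levels. Once this machinery is in place, correctness of decoding is automatic, since encoder and decoder run the same enumeration and greedy coloring and recover the same $S$ from any of its elements.
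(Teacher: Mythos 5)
There is a genuine gap, and it sits exactly where you placed your ``absorption mechanism'': that mechanism is the entire theorem, and your sketch does not supply it. Your first paragraph (enumeration of candidate sets, greedy coloring with palette $s\cdot 2^{k+1}$, decoding by re-running the enumeration) is the plain-complexity argument of theorem~\ref{th:setDistanceVit}; it is fine as far as it goes, but it only yields a code from which $k$ must be recoverable, and making that code prefix-free \emph{in the condition} (for each fixed $x$, the halting programs form an antichain, simultaneously for all levels $k'$ and all candidates through $x$) is an adversarial packing problem, not a bookkeeping problem about describing $k$. Your proposed fix --- draw codewords for all levels from ``a single prefix-free pool of bounded Kraft sum'' with per-level reservations sized to dominate the enumeration --- is essentially the naive pooling that gives only the $O(\log \E)$ precision of theorem~\ref{th:equality_logarithmic}; and theorem~\ref{th:gap} shows that \emph{no} allocation scheme whatsoever can achieve $O(1)$ (or $O(\log\log)$) precision without exploiting the hypothesis that the maximum is large, already for $\#S=2$. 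So any correct argument must use the lower bound on $\max_x\KP(S\cnd x)$ quantitatively \emph{inside} the allocation, not merely to absorb a $\log n$ term, and your sketch never does this.

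You also misremember what the proof of theorem~\ref{th:exact} is: it is not a coloring-plus-absorption argument but a game argument in which Bob must allocate disjoint Cantor-space intervals online against an adversary issuing weighted requests, and his winning strategy requires partitioning the space into blocks, assigning regions per (string, request-size) label via the extractor-like family of lemma~\ref{lem:extractorlike}, and a ``blame'' mechanism for requests he cannot serve. The paper's proof of theorem~\ref{th:exactSet} extends exactly this: requests are now $s$-element sets, failures are handled by declaring leaders and friends, and the key point is that a string acquiring many friends would have small conditional complexity given each of them, so requests containing friends need no allocation --- this is how the hypothesis enters. The $(\param s)^{s}$ threshold comes from a recursion over partitions into groups of friends ($s$ substrategies, with the number of labels multiplying by roughly $(c\,sn)^{2t}$ at each level), together with the set version of the combinatorial lemma (lemma~\ref{lem:combLemmaSet}, proved by the probabilistic method). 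Your guess that the threshold arises by iterating the pair case $s$ times points in the right direction, but without the region/block machinery, the friends/leaders accounting, and the combinatorial lemma, the central step --- the existence of Bob's winning strategy, i.e.\ of the prefix-free allocation --- is missing, and the greedy single-level coloring does not extend to it.
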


\noindent
This implies that for sets $S \subseteq \{0,1\}^*$ of any fixed size: $\D_U(S) = \max \KP(S \cnd x) + O(\log \log K(S))$.
We also provide a different and incomparible condition for the equality of the theorem.
It holds for all sets $S \subseteq \{0,1\}^n$ in which all different elements $u$ and $v$ satisfy $\E(u,v) \ge \param\#S \log (n\# S)$, see  proposition~\ref{prop:exactSet}.

\bigskip
\noindent
The theorems are proven using the game technique, which means that we present a 2-person game, and obtain the result from a winning strategy for one of the players,
see~\cite{BauwensCompcomp,ShenCie12,VerSurvey} for other examples. 
Our strategy uses ideas from~\cite{gacs1983}. 
In~\cite{nidTheoretical} the normalized version of the information distance was studied, which has values in the interval $[0,1]$ when defined with a suitable optimal machine.
In~\cite{BauwensNID}, the game technique was used to prove that no semicomputable function differs from this normalized distance by less than~$0.5$.

We discuss the plain information distance in section~\ref{sec:plain}. Then, in section~\ref{sec:prefix} we discuss the different definitions of prefix complexity (with prefix-free and prefix-stable machines, as well as definitions using the a priori probability), and in section~\ref{sec:prefix-distance} we discuss their counterparts for the information distance. In sections~\ref{sec:game} and~\ref{sec:exact} we prove the first 2 main results. 
In section~\ref{sec:nonsharedSet} we review generalizations for sets and prove the third main result. Finally, we present open questions in section~\ref{sec:openQuestions}. 

\section{Plain complexity and information distance}\label{sec:plain}

Let us recall the definition of plain conditional Kolmogorov complexity. Let $U(p,x)$ be a computable partial function of two string arguments; its values are also binary strings. We may think of $U$ as an interpreter of some programming language. The first argument $p$ is considered as a program and the second argument is an input for this program. Then we define the complexity function
\[
\KS_U (y\cnd x) \;=\; \min\{|p|\colon U(p,x) \mathop = y\};
\]
here $|p|$ stands for the length of a binary string $p$, so the right hand side is the minimal length of a program that produces output $y$ given input~$x$. The classical Solomonoff--Kolmogorov theorem says that there exists an optimal $U$ that makes $\KS_U$ minimal up to an $O(1)$-additive term. We fix some optimal $U$ and then denote $\KS_U$ by just $\KS$. See, e.g., \cite{lv,usv} for the details.

Now we want to define the information distance between $x$ and $y$. One can try the following approach: take some optimal $U$ from the definition of conditional complexity and then define
\[
 \D_U(x,y) \;=\; \min\{|p|\colon U(p,x) \mathop=y \text{ and } U(p,y) \mathop=x\},
 \]
i.e., consider the minimal length of a program that both maps $x$ to $y$ and $y$ to $x$. However, there is a caveat, as the following simple observation shows.

\begin{proposition}
There exists some computable partial function $U$ that makes $\KS_U$ minimal up to an $O(1)$ additive term, and still $\D_U(x,y)$ is infinite for some strings $x$ and $y$ and therefore not minimal.
\end{proposition}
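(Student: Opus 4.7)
The plan is to build $U$ by minor surgery on an arbitrary optimal machine $U_0$. Fix two distinct strings $x\neq y$ (say $x = 0$, $y = 1$) and use a two-bit prefix on the programs of $U$ to separate the copy of $U_0$ used for computing $y$ from $x$ from the copy used for computing $x$ from $y$, in such a way that no single program of $U$ can do both.

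Concretely, I would define $U$ as follows. For programs of the form $01q$, let $U(01q, w) = U_0(q, w)$ except when $w = x$ and $U_0(q, x) = y$, in which case $U(01q, w)$ diverges; symmetrically, for programs of the form $10q$, let $U(10q, w) = U_0(q, w)$ except when $w = y$ and $U_0(q, y) = x$, in which case $U(10q, w)$ diverges. For programs whose first two bits are $00$ or $11$, $U$ is left undefined. This $U$ is computable: to evaluate $U(01q,w)$, simulate $U_0(q,w)$, and if it halts and produces the forbidden output, then enter an infinite loop.

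To check optimality, observe that for every pair $(z, w)$ we have $\KS_U(w\cnd z) \le \KS_{U_0}(w\cnd z) + 2$. The $01$-prefixed copy of $U_0$ works whenever $(z, w) \ne (x, y)$, and the $10$-prefixed copy works whenever $(z, w) \ne (y, x)$; together these cover all pairs. Hence $\KS_U$ is minimal up to an additive constant.

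Finally, no program $p$ can satisfy both $U(p, x) = y$ and $U(p, y) = x$: a $01$-prefix rules out the first equality by construction, a $10$-prefix rules out the second, and programs with any other initial two bits are never defined. Therefore $\D_U(x, y) = \infty$. I do not anticipate any real obstacle; the only delicate point is preserving computability of $U$ while excising the bad programs, since the excluded condition depends on the only semi-decidable halting behaviour of $U_0$, but this is handled by the standard ``run $U_0$ and then loop'' trick described above.
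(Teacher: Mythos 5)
Your construction is correct: the two-bit direction flag plus the ``simulate and loop on the forbidden outcome'' trick does yield a partial computable $U$ with $\KS_U \le \KS_{U_0}+2$ (for every pair $(z,w)$ at least one of the two flagged copies is unmutilated), while no single program of $U$ can map $x$ to $y$ and $y$ to $x$, so $\D_U(x,y)=\infty$. The paper reaches the same conclusion by a different, somewhat slicker device: instead of excising the bad behaviour for one fixed pair, it defines $U'(bp,z)=bU(p,z)$ for $b\in\{0,1\}$, i.e.\ the first bit of the program is copied to the output. Optimality is then immediate with a one-bit overhead and no simulation trick is needed, since nothing is ever deleted; and the obstruction is structural rather than targeted: a program $q$ can only ever produce outputs whose first bit agrees with the first bit of $q$, so $\D_{U'}(x,y)$ is infinite simultaneously for \emph{all} pairs $x,y$ with different first bits. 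Your approach buys flexibility (you can kill any single prescribed pair, e.g.\ two strings with the same first bit, which the paper's construction does not touch) at the cost of the case analysis and the run-and-loop argument; the paper's buys simplicity and infinitely many witnesses at once. Both are complete proofs of the proposition.
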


\begin{proof}
Consider an optimal function $U$ and then define $U'$ such that $U(\Lambda,x)=\Lambda$  where $\Lambda$ is the empty string, $U'(0p,x)=0U(p,x)$ and $U'(1p,x)=1U(p,x)$. In other terms, $U'$ copies the first bit of the program to the output and then applies $U$ to the rest of the program and the input. It is easy to see that $\KS_{U'}$ is minimal up to an $O(1)$ additive term, but $U'(q,\cdot)$ has the same first bit as $q$, so if $x$ and $y$ have different first bits, there is no $q$ such that $U'(q,x)=y$ and $U'(q,y)=x$ at the same time.
\end{proof}
\noindent
On the other hand, the following proposition is true (and can be proven in the same way as the existence of the optimal $U$ for conditional complexity):

\begin{proposition}\label{prop:defPlainID}
There exists a computable partial function $U$ that makes $\D_U$ minimal up to $O(1)$ additive term.
\end{proposition}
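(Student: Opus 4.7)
The plan is to mimic the standard Solomonoff--Kolmogorov construction of an optimal $U$ for conditional complexity, but now optimize directly for the bipartite measure $\D_U$ instead of for $\KS_U$. The previous proposition tells us that the two optimality notions are not automatically equivalent, but the universal-machine trick does not actually care about this: it only uses that ``prepending a self-delimiting index does not alter what the resulting program does on different inputs''.

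Concretely, I would fix an enumeration $U_0, U_1, \ldots$ of all partial computable functions of two string arguments, together with a computable prefix-free encoding $i \mapsto \bar{\imath} \in \{0,1\}^*$ of natural numbers (e.g.\ doubling each bit of the binary expansion of $i$ and appending $01$), whose length is $|\bar{\imath}| = O(\log i)$. Then define
\[
  U(\bar{\imath}\,q,\, x) \;=\; U_i(q, x),
\]
with $U(p,x)$ undefined whenever $p$ does not begin with some valid code $\bar{\imath}$. Prefix-freeness of the encoding guarantees that the parse of $p$ into $(\bar{\imath}, q)$ is unique, so $U$ is a well-defined computable partial function.

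The key verification is that $\D_U$ dominates each $\D_{U_i}$ up to an additive constant. If $\D_{U_i}(x,y) = \ell$ witnessed by a program $q$ of length $\ell$ with $U_i(q,x) = y$ and $U_i(q,y) = x$, then by definition $U(\bar{\imath}\,q,\,x) = y$ and $U(\bar{\imath}\,q,\,y) = x$ simultaneously, so
\[
  \D_U(x,y) \;\le\; |\bar{\imath}\,q| \;=\; \D_{U_i}(x,y) + |\bar{\imath}|,
\]
and $|\bar{\imath}|$ depends only on the index $i$, not on $x$ or $y$. Taking the infimum over $i$ for each candidate optimal $U_i$, we obtain the required $O(1)$-dominance.

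There is no real obstacle here, and in particular the bipartite constraint causes no trouble: the only property used is that the behaviour of $U(\bar{\imath}\,q, \cdot)$ as a function of its second argument coincides with that of $U_i(q,\cdot)$, so any single program $q$ that works simultaneously on inputs $x$ and $y$ inside $U_i$ continues to work simultaneously on those inputs inside $U$. The only point I would flag as worth stating carefully is that in this argument one considers the class of \emph{all} partial computable two-argument functions (not only those for which $\D_{U_i}$ happens to be finite everywhere), so the infimum $\D_U(x,y)$ is guaranteed to be at most $\D_V(x,y) + O_V(1)$ for every particular computable $V$, which is exactly the optimality statement required.
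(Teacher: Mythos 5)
Your proposal is correct and is essentially the argument the paper has in mind: the paper gives no separate proof, stating only that the proposition ``can be proven in the same way as the existence of the optimal $U$ for conditional complexity,'' which is exactly your universal-machine construction with a self-delimiting index prepended to the program. Your verification that a single program $q$ working simultaneously on $x$ and $y$ under $U_i$ continues to do so under $U(\bar{\imath}\,q,\cdot)$ is precisely the point that makes the standard construction go through for $\D_U$.
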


\noindent
Now we may define the plain information distance as the minimal function~$\D_U$. 
For example, $\D_U(x,\text{empty string}) = \KS(x) + O(1)$, by considering a program for $x$ and modify it such that on input the empty string prints $x$ and otherwise it prints the empty string. 
For all $n$-bit $x$ and $y$, we have $\D_U(x,y) \le n + O(1)$, because knowing the bitwise XOR of $x$ and $y$, we can map $x$ to $y$ and vice versa.

\bigskip
\noindent
It turns out that the original argument from~\cite{bglvz} can be easily adapted to show the following result 
(that is a special case of a more general result about several strings proven in~\cite{Vitanyi2017}):

\begin{theorem}\label{th:bglvz}
The minimal function $\D_U$ equals $\max(\KS(x\cnd y),\KS(y\cnd x))+O(1)$.
\end{theorem}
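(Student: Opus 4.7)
The plan is to establish both directions of the equality, with the lower bound being immediate and the upper bound requiring an adaptation of the graph-coloring argument of~\cite{bglvz}, organized so that the value of $k = \max(\KS(x\cnd y),\KS(y\cnd x))$ is communicated through the \emph{length} of the program rather than encoded explicitly (which would cost an extra $O(\log k)$).

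For the lower bound, observe that any program $p$ with $U(p,x)=y$ and $U(p,y)=x$ is in particular a witness for both $\KS(y\cnd x)\le|p|+O(1)$ and $\KS(x\cnd y)\le|p|+O(1)$, so $\D_U(x,y)\ge\max(\KS(x\cnd y),\KS(y\cnd x))-O(1)$ for any computable $U$, a fortiori for the optimal one from proposition~\ref{prop:defPlainID}.

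For the upper bound, I would fix $k\in\mathbb{N}$ and consider the undirected graph $G_k$ on the vertex set $\{0,1\}^*$ in which $\{u,v\}$ is an edge iff $\max(\KS(u\cnd v),\KS(v\cnd u))\le k$. Two key facts: (i) every vertex of $G_k$ has degree at most $2^{k+1}$, since the set $\{v:\KS(v\cnd u)\le k\}$ has at most $2^{k+1}$ elements; (ii) the edge set of $G_k$ is computably enumerable uniformly in $k$, because $\KS\le k$ is upper-semicomputable. Enumerating edges in some order, I would greedily assign each new edge $e$ the smallest color in $\{0,1\}^{k+c_0}$ (for a suitable constant $c_0$) that is not already used on a previously enumerated edge sharing an endpoint with $e$. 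Since each endpoint sees at most $2^{k+1}-1$ previously colored incident edges, at most $2^{k+2}-1<2^{k+c_0}$ colors are ever forbidden, so the procedure never fails.

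Now I would define a single partial computable function $U$ by $U(p,x)=y$ iff, running the greedy coloring for the parameter $k:=|p|-c_0$, the edge $\{x,y\}$ eventually receives color $p$. This $U$ is well-defined (distinct edges at $x$ get distinct colors) and symmetric in its roles for $x$ and $y$, so $U(p,x)=y$ implies $U(p,y)=x$; hence $\D_U$ is total and finite. For any pair $(x,y)$ with $k=\max(\KS(x\cnd y),\KS(y\cnd x))$, the edge $\{x,y\}$ lies in $G_k$, receives some color $p\in\{0,1\}^{k+c_0}$, and this $p$ is a valid bidirectional program, giving $\D_U(x,y)\le k+c_0$. Applying proposition~\ref{prop:defPlainID} to pass to an optimal $U^*$, we obtain $\D_{U^*}(x,y)\le\max(\KS(x\cnd y),\KS(y\cnd x))+O(1)$.

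The main obstacle, and the point that distinguishes this from the $+O(\log)$ bound in the original proof, is arranging for $k$ to be inferable from $|p|$ alone so that no prefix-code for $k$ is tacked on. This is what forces the design where the colors of $G_k$ live in $\{0,1\}^{k+c_0}$ (with one fresh constant $c_0$ large enough to absorb the $+1$ from Vizing-style greedy coloring) and where the machine $U$ reads $k$ off as $|p|-c_0$ before running the enumeration; the rest is routine.
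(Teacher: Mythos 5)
Your proof is correct and follows essentially the same route as the paper: the same graph of pairs with both conditional complexities bounded, uniformly enumerable in the parameter, greedily edge-colored so incident edges get distinct colors, with the parameter recovered from the program length and optimality of $U$ invoked at the end. The only differences are cosmetic (non-strict bounds and colors of length $k+c_0$ instead of the paper's strict bound $<n$ with $(n+1)$-bit colors).
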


\begin{proof}
  We provide the adapted proof for later reference. 
  In one direction we have to prove that $\KS(x\cnd y)\le \D_U(x,y)+O(1)$, and the same for $\KS(y\cnd x)$. This is obvious, since the definition of $\D_U$ contains more requirements for $p$, (it should map both $x$ to $y$ and $y$ to $x$, while in $\KS(x\cnd y)$ it is enough to map $y$ to $x$).

To prove the reverse inequality, consider for each $n$ the binary relation $R_n$ on strings (of all lengths) defined as
$$
R_n(x,y) \quad \Longleftrightarrow \quad \KS(x\cnd y)< n \text{ and } \KS(y\cnd x) <n.
$$
By definition, this relation is symmetric. It is easy to see that $R_n$ is (computably) enumerable uniformly in $n$, since we may compute better and better upper bounds for $\KS$ reaching ultimately its true value. We think of $R_n$ as the set of edges of an undirected graph whose vertices are binary strings. Note that each vertex $x$ of this graph has degree less than $2^n$ since there are less than $2^n$ programs of length less than $n$ that map $x$ to its neighbors.

For each $n$, we enumerate edges of this graph (i.e., pairs in $R_n$). We want to assign colors to the edges of $R_n$ in such a way that edges that have a common endpoint have different colors. In other terms, we require that for every vertex $x$, all edges of $R_n$ adjacent to $x$ have different colors. For that, $2^{n+1}$ colors are enough. Indeed, each new edge needs a color that differentiates it from less than $2^n$ existing edges adjacent to one its endpoint and less than $2^n$ edges adjacent to other endpoint.

Let us agree to use $(n+1)$-bit strings as colors for edges in $R_n$, and perform  this coloring in parallel for all $n$. Now we define $U(p,x)$ for a $(n+1)$-bit string $p$ and arbitrary string $x$ as the string $y$ such that the edge $(x,y)$ has color $p$ in the coloring of edges from $R_n$. Note that $n$ can be reconstructed as $|p|-1$. The uniqueness property for colors guarantees that there is at most one $y$ such that $(x,y)$ has color $p$, so $U(p,x)$ is well defined. It is easy to see now that if $\KS(x\cnd y)<n$ and $\KS(y\cnd x)<n$, and $p$ is the color of the edge $(x,y)$, then $U(p,x)=y$ and $U(p,y)=x$ at the same time. This implies the reverse inequality (the $O(1)$ terms appears when we compare our $U$ with the optimal one).
\end{proof}

\begin{remark}
In the definition of information distance given above we look for a program $p$ that transforms $x$ to $y$ and also transforms $y$ to $x$. Note that we \emph{do not tell the program which of the two transformations is requested}. A weaker definition would provide also this information to $p$. This modification can be done in several ways. For example, we may require in the definition of $\D$ that $U(p,0x)=y$ and $U(p,1y)=x$, using the first input bit as the direction flag. An equivalent approach is to use two computable functions $U$ and $U'$ in the definition and require that $U(p,x)=y$ and $U'(p,y)=x$. This corresponds to using different interpreters for both directions.

  It is easy to show that the optimal functions $U$ and $U'$ exist for this two-interpreter version of the definition. A priori we may get a smaller value of information distance in this way, because the program's task is easier when the direction is known, informally speaking.  But it is not the case for the following simple reason. Obviously, this new quantity is still an upper bound for both conditional complexities $\KS(x\cnd y)$ and $\KS(y\cnd x)$ with $O(1)$ precision. Therefore theorem~\ref{th:bglvz} guarantees that this new definition of information distance coincides with the old one up to $O(1)$ additive terms. For the prefix versions of information distance such a simple argument does not work anymore, because the variant of theorem~\ref{th:bglvz} for prefix complexity does not hold, see theorem~\ref{th:gap}. 
\end{remark}

\bigskip
\noindent
We have seen that different approaches lead to the same notion of plain information distance (up to $O(1)$ additive term). There is also a simple and natural quantitative characterization of this notion as a minimal function in a class of functions.

\begin{theorem}\label{th:plainDistChar}
  Consider the class of functions $E$ that are 
  symmetric, upper semicomputable, and for some $c$, all $n$ and all $x$, satisfy
  \begin{equation*}\tag{\ensuremath{*}}\label{eq:balls}
  \#\{y\colon E(x,y)<n\} \;\le\; c2^n. 
  \end{equation*}
  For every optimal $U$ this class contains $\D_U$, and for any $E$ in this class, we have $\D_U \le E + O(1)$.
\end{theorem}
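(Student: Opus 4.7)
The plan is to verify the two assertions separately, the first by inspection of the definition and the second by repeating the coloring argument of Theorem~\ref{th:bglvz}, but with the relation $R_n$ defined from an arbitrary class member $E$ instead of from $\KS$.

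\medskip\noindent
\textbf{Step 1: $\D_U$ lies in the class.} Symmetry of $\D_U$ is built into the definition: the same program $p$ witnesses $\D_U(x,y)$ and $\D_U(y,x)$. Upper semicomputability follows by dovetailing all computations $U(p,x)$ and $U(p,y)$ over programs $p$ in order of increasing length, improving the current upper bound for $\D_U(x,y)$ whenever both $U(p,x)=y$ and $U(p,y)=x$ are seen to halt. For the ball bound~(\ref{eq:balls}): fixing $x$, distinct values $y$ in $\{y:\D_U(x,y)<n\}$ must arise from distinct programs $p$ of length less than $n$, so this set has size less than $2^n$, giving $c=1$.

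\medskip\noindent
\textbf{Step 2: Minimality up to $O(1)$.} Given $E$ in the class with constant $c$, I would build a single computable partial function $V$ such that $\D_V(x,y)\le E(x,y)+O_c(1)$. For each $n$, consider the symmetric, enumerable relation $R_n^E=\{(x,y):E(x,y)<n\}$. Condition~(\ref{eq:balls}) bounds the degree of every vertex in $R_n^E$ by $c\cdot 2^n$. Hence the standard online greedy edge-coloring argument from the proof of theorem~\ref{th:bglvz} succeeds using $2c\cdot 2^n$ colors: when a new edge is enumerated, fewer than $2c\cdot 2^n$ existing adjacent edges block a color, so a fresh one is available. Encode colors by $(n+k)$-bit strings, where $2^k\ge 2c$; the leading $k$ bits are a fixed prefix, so the value of $n$ can be read off from the length of the program. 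Run this coloring in parallel for all~$n$.

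\medskip\noindent
\textbf{Step 3: Defining $V$ and concluding.} Set $V(p,x)=y$ whenever $p$ is, for the appropriate $n=|p|-k$, the color of an edge $(x,y)$ in the coloring of $R_n^E$; uniqueness of colors around each vertex guarantees $V$ is well-defined, and symmetry of $R_n^E$ together with symmetry of the coloring immediately yields $V(p,x)=y\iff V(p,y)=x$. Hence if $E(x,y)<n$ then the color $p$ of the edge $(x,y)$ in $R_n^E$ satisfies $|p|=n+k$ and $V(p,x)=y$, $V(p,y)=x$ at once, so $\D_V(x,y)\le E(x,y)+k+1$. Applying proposition~\ref{prop:defPlainID} (optimality of $U$) gives $\D_U\le \D_V+O(1)\le E+O(1)$.

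\medskip\noindent
The only real obstacle is a bookkeeping one: the coloring has to be carried out uniformly for all~$n$ from a single enumeration, and the encoding of $n$ inside the program length has to be a fixed affine function so that decoding is uniform. Once this is set up (using the fixed $k$-bit prefix as above), no genuinely new idea beyond the $\KS$-case proof is needed, and there is no conditioning subtlety because $E$ is assumed upper semicomputable without auxiliary input.
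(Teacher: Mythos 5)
Your proof is correct, but for the minimality direction it takes a genuinely different route from the paper. The paper does not redo the coloring: given $E$ in the class, it simply enumerates, for fixed $x$ and $n$, the fewer than $c2^n$ strings $y$ with $E(x,y)<n$ and describes $y$ given $x$ by its ordinal number in this enumeration (a string of $n+\lceil\log c\rceil$ bits from which $n$ is recoverable), obtaining $\KS(y\cnd x)\le n+O(1)$; by symmetry of $E$ the same bound holds for $\KS(x\cnd y)$, and then theorem~\ref{th:bglvz} immediately gives $\D_U\le E+O(1)$. You instead rerun the edge-coloring construction of theorem~\ref{th:bglvz} directly on the relation $R_n^E$, build a machine $V$ with $\D_V\le E+O(1)$, and invoke optimality of $U$; this is self-contained for the hard direction but duplicates the work that th:bglvz already encapsulates, whereas the paper's reduction is shorter and uses th:bglvz as a black box. (Similarly, for membership of $\D_U$ in the class, the paper verifies the conditions for $\max(\KS(x\cnd y),\KS(y\cnd x))$ and transfers them via th:bglvz, while your direct verification with $c=1$ is fine and arguably cleaner.) One small slip in your write-up: you say the leading $k$ bits of a color are ``a fixed prefix''; taken literally this would leave only $2^n$ colors, which is not enough for the greedy coloring that needs $2c\cdot2^n$. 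What you actually want (and what your counts support) is to use \emph{all} $(n+k)$-bit strings as the palette, with $2^{n+k}\ge 2c\cdot 2^n$, and recover $n$ as $|p|-k$; with that reading the argument goes through.
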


\noindent
Recall that upper semicomputability of $E$ means that one can compute a sequence of total upper bounds for $E$ that converges to $E$. The equivalent requirement: the set of triples $(x,y,n)$ where $x,y$ are strings and $n$ are natural numbers, such that $E(x,y)<n$, is (computably) enumerable.

\begin{proof}
  The function $\max(\KS(x\cnd y),\KS(y\cnd x))$ is upper semicomputable and symmetric. The inequality \eqref{eq:balls} is true for it since it is true for the smaller function $\KS(y\cnd x)$ (for $c=1$; indeed, the number of programs of length less than $n$ is at most $2^n$).

  On the other hand, if $E$ is some symmetric upper semicomputable function that satisfies \eqref{eq:balls}, then one can for any given $x$ and $n$ enumerate all $y$ such that $E(x,y)<n$. There are less than $c2^n$ strings $y$ with this property, so given $x$, each such $y$ can be described by a string of $n+\lceil \log c\rceil$ bits, its ordinal number in the enumeration. Note that the value of $n$ can be reconstructed from this string by decreasing its length by $\lceil \log c\rceil$, so $\KS(y\cnd x)\le n+O(1)$ if $E(x,y)<n$. It remains to apply the symmetry of $E$ and theorem~\ref{th:bglvz}.
\end{proof}

\begin{remark}\label{rem:trianglePlain}
The name ``information distance'' motivates the following question: does the plain information distance satisfy the triangle inequality? With logarithmic precision the answer is positive, because 
  \[
\KS(x\cnd z)\;\le\; \KS(x\cnd y)+\KS (y\cnd z)+O(\log (\KS(x\cnd y)+\KS(y\cnd z))).
\]
However, if we replace the last term by an $O(1)$-term, then this inequality is not true. Indeed, for all strings $x$ and $y$, the distance between the empty string $\Lambda$ and $x$ is $\KS(x)+O(1)$, and the distance between $x$ and some encoding of a pair $(x,y)$ is at most $\KS(y)+O(1)$, and the triangle inequality for distances with $O(1)$-precision would imply $\KS(x,y)\le\KS(x)+\KS(y)+O(1)$. But this is not true, see, e.g., \cite[section 2.1]{usv}.
\end{remark}

One may ask whether a weaker statement saying that there is a maximal (up to an $O(1)$ additive term) function in the class of functions 
that both satisfy the conditions of theorem~\ref{th:plainDistChar} and the triangle inequality. 
The answer is negative, as the following proposition shows.

\begin{proposition}\label{prop:triangle_inequality}
  There are two upper semicomputable symmetric functions $E_1$, $E_2$ that both satisfy the condition \eqref{eq:balls} and the triangle inequality, such that no function that is bounded both by $E_1$ and $E_2$ can satisfy \eqref{eq:balls} and the triangle inequality at the same time.
\end{proposition}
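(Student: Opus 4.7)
The plan is to realize $E_1$ and $E_2$ as shortest-path distances in two edge-weighted graphs $G_1, G_2$ on a common vertex set chosen so that the triangle inequality applied to any $E \le \min(E_1, E_2)$ forces a single ball to contain $\Omega(N \cdot 2^N)$ elements, beating every constant in \eqref{eq:balls}. The amplification comes from aggregating all splits $n_1 + n_2 = N$ of one common sum into a single star-of-stars structure attached to the same center.

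Concretely, for each integer $N \ge 2$ I introduce a fresh center vertex $a_N$; for each pair $(n_1,n_2)$ of positive integers with $n_1+n_2 = N$, a fresh set $B^{(n_1,n_2)}_N$ of $2^{n_1}$ ``middle'' vertices; and for each $b \in B^{(n_1,n_2)}_N$ a fresh set $C^{(n_1,n_2)}_{N,b}$ of $2^{n_2}$ ``leaves''. All vertices across all $(N,n_1,n_2,b)$ are distinct, realized by an injective effective encoding into $\{0,1\}^*$; any pair of strings outside this construction gets distance $\infty$. The graph $G_1$ has an edge of weight $n_1$ between $a_N$ and every $b \in B^{(n_1,n_2)}_N$, and $G_2$ has an edge of weight $n_2$ between each such $b$ and every $c \in C^{(n_1,n_2)}_{N,b}$. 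Let $E_i$ be the shortest-path distance in $G_i$. Shortest-path distances on an effectively presented graph are symmetric, upper semicomputable, and satisfy the triangle inequality. For \eqref{eq:balls} on $E_1$: the radius-$r$ ball around $a_N$ picks up only middles with $n_1 < r$, giving $1 + \sum_{n_1 < r} 2^{n_1} = O(2^r)$; balls around middles, leaves and isolated vertices are bounded similarly, and analogously for $E_2$.

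For the impossibility, suppose some $E \le \min(E_1, E_2)$ satisfies \eqref{eq:balls} with constant $c$ and the triangle inequality. For every split $(n_1,n_2)$ of $N$, every $b \in B^{(n_1,n_2)}_N$ and every $c \in C^{(n_1,n_2)}_{N,b}$,
\[
  E(a_N, c) \;\le\; E(a_N, b) + E(b, c) \;\le\; E_1(a_N, b) + E_2(b, c) \;=\; n_1 + n_2 \;=\; N.
\]
Since the leaves are pairwise distinct, the ball $\{z : E(a_N, z) < N+1\}$ has at least $\sum_{n_1=1}^{N-1} 2^{n_1}\cdot 2^{N-n_1} = (N-1)\cdot 2^N$ elements, while \eqref{eq:balls} demands at most $c \cdot 2^{N+1}$. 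This forces $N \le 2c+1$, which fails for sufficiently large $N$.

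The main obstacle is arranging that the amplification appears only through the cross-graph triangle inequality in $E$ and not inside either $E_i$ alone: a single split contributes only $2^N$ leaves and is harmless, so one needs to combine $N-1$ different splits into one star-of-stars. The key design feature that keeps each $E_i$ individually safe is that different splits put their middle vertices at different $E_1$-radii from $a_N$ (and symmetrically different $E_2$-radii from the respective middles), so each single-graph ball of fixed radius still only collects one split's contribution at a time. I expect the rest to be routine verification of ball sizes in the tree-like graphs $G_i$.
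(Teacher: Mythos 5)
Your proof is correct, and it is essentially the paper's argument: two distances each of which controls only ``one half'' of a move, so that chaining one $E_1$-step with one $E_2$-step over all splits $n_1+n_2=N$ of the budget inflates a single $E$-ball around a fixed center by a factor $\Theta(N)$, contradicting \eqref{eq:balls}. The paper realizes $E_1,E_2$ as the ``change only the first/last $k$ bits'' ultrametrics on $\{0,1\}^n$ instead of your purpose-built weighted star-of-stars graphs, and it proves the slightly stronger form with $E\le E_1+O(1)$ and $E\le E_2+O(1)$ (the form relevant to the surrounding discussion), which your construction also yields verbatim.
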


\begin{proof}
  Let us agree that $E_1(x,y)$ and $E_2(x,y)$ are infinite when $x$ and $y$ have different lengths. If $x$ and $y$ are $n$-bit strings, then $E_1(x,y)\le k$ means that all the bits in $x$ and $y$ outside the first $k$ positions are the same, and $E_2(x,y)\le k$ is defined in a symmetric way for the last $k$ positions. Both $E_1$ and $E_2$ satisfy the triangle inequality (and even the ultrametric inequality) and also satisfy condition \eqref{eq:balls}, since the ball of radius $k$ consist of strings that coincide except for the first/last $k$ bits. If $E$ is bounded both by $E_1+O(1)$ and $E_2+O(1)$ and satisfies the triangle inequality, then by changing the first $k$ and the last $l$ positions in a string $x$ we get a string $y$ such that $E(x,y)\le k+l+ O(1)$. It is easy to see that the number of strings $y$ that can be obtained in this way for all $k$ and $\ell$ with $k + \ell = n/2$ is not $O(2^{n/2})$, but $\Theta(n2^{n/2})$.
\end{proof}

\section{Prefix complexity: different definitions}\label{sec:prefix}

The notion of prefix complexity was introduced independently by Levin~\cite{gacs1974,levin1971,levin1974} and later by Chaitin~\cite{chaitin1975}. There are several versions of this definition, and they all turn out to be equivalent, so people usually do not care much about technical details that are different. However, if we want to consider the counterparts of these definitions for information distance, their differences become important if we are interested in $O(1)$-precision. 

Essentially there are four different definitions of prefix complexity that appear in the literature.

\subsection{Prefix-free definition}

A computable partial function $U(p,x)$ with two string arguments and string values is called \emph{prefix-free} (with respect to the first argument) if $U(p,x)$ and $U(p',x)$ cannot be defined simultaneously for a string $p$ and its prefix $p'$ \emph{and for the same second argument $x$}. In other words, for every string $x$ the set of strings $p$ such that $U(p,x)$ is defined is prefix-free, i.e., does not contain a string and its prefix at the same time.

For a prefix-free function $U$ we may consider the complexity function $\KS_U(y\cnd x)$. In this way we get a smaller class of complexity functions compared with the definition of plain complexity, 
and the Solomonoff--Kolmogorov theorem can be easily modified to show that there exists a minimal complexity function in this smaller class (up to $O(1)$ additive term, as usual). This function is called \emph{prefix conditional complexity} and usually is denoted by $\KP(y\cnd x)$. It is greater than $\KS(y\cnd x)$ since the class of available functions $U$ is more restricted; the relation between $\KS$ and $\KP$ is well studied, see, e.g.,~\cite[chapter 4]{usv} and references within.

The unconditional prefix complexity $\KP(x)$ is defined in the same way, with $U$ that does not have a second argument. We can also define $\KP(x)$ as $\KP(x\cnd y_0)$ for some fixed string $y_0$. This string may be chosen arbitrarily; for each choice we have $\KP(x)=\KP(x\cnd y_0)+O(1)$ but the constant in the $O(1)$ bound depends on the choice of $y_0$.

\subsection{Prefix-stable definition}

The prefix-stable version of the definition considers another restriction on the function~$U$. Namely, in this version the function $U$ should be \emph{prefix-stable} with respect to the first argument. This means that if $U(p,x)$ is defined, then $U(p',x)$ is defined and equal to $U(p,x)$ for all $p'$ that are extensions of $p$ (i.e., when $p$ is a prefix of $p'$). We consider the class of all computable partial prefix-stable functions $U$ and corresponding functions $\KS_U$, and observe that there exists an optimal prefix-stable function $U$ that makes $\KS_U$ minimal in this class. 

It is rather easy to see that the prefix-stable definition leads to a version of complexity that is bounded by the prefix-free one, since each prefix-free computable function can be easily extended to a prefix-stable one. The reverse inequality is not so obvious and there is no known direct proof; the standard argument compares both versions with the forth definition of prefix complexity, (the logarithm of a maximal semimeasure, see section~\ref{subsec:semimeasure} below). 

Prefix-free and prefix-stable definitions correspond to the same intuitive idea: the program should be ``self-delimiting''. This means that the machine gets access to an infinite sequence of bits that starts with the program and has no marker indicating the end of a program. The prefix-free and prefix-stable definitions correspond to two possible ways of accessing this sequence. The prefix-free definition corresponds to a blocking read primitive, which means that if the program is given to the machine bit by bit in a queue, then upon each request of the next bit, the computation halts until the bit is provided. In this setting, the program itself decides when it has seen enough bits. The prefix-stable definition corresponds to a non-blocking read primitive, which means that if the bit is not provided, the machine may continue computations and may still produce an output, even if the requested bit is never provided, but if the bit were provided, then the output should be the same. For more details we refer \mbox{to~\cite[section~4.4]{usv}}.

\subsection{A priori probability definition}\label{subsec:apriori}

In this approach we consider the \emph{a priori probability} of $y$ given $x$, the probability of the event ``a random program maps $x$ to $y$''. More precisely, consider a prefix-stable function $U(p,x)$ and an infinite sequence $\pi$ of independent uniformly distributed random bits (a random variable). We say that $U(\pi,x)=y$ if $U(p,x)=y$ for some $p$ that is a prefix of $\pi$. Since $U$ is prefix-stable, the value $U(\pi,x)$ is well defined. For given $x$ and $y$, we denote by $m_U(y\cnd x)$ the probability of this event (the measure of the set of $\pi$ such that $U(\pi,x)=y$). For each prefix-stable $U$ we get some function $m_U$. It is easy to see that there exists an optimal $U$ that makes $m_U$ maximal (up to an $O(1)$-factor). Then we define prefix complexity $\KP(y\cnd x)$ as $-\log m_U(y\cnd x)$ for this optimal $U$, where the logarithm has base 2.

It is also easy to see that if we use prefix-free functions $U$ instead of prefix-stable ones, we obtain the same definition of prefix complexity. Informally speaking, if we have an infinite sequence of random bits as the first argument, we do not care whether we have blocking or non-blocking read access, the bits are always there. The non-trivial and most fundamental result about prefix complexity is that this definition, as the logarithm of the probability, is equivalent to the two previous ones. As a byproduct of this result we see that the prefix-free and prefix-stable definitions are equivalent. This proof and the detailed discussion of the difference between the definitions can be found, e.g., in~\cite[chapter 4]{usv}.

\subsection{Semimeasure definition}\label{subsec:semimeasure}

The semimeasure approach defines a priori probability in a different way, as a convergent series that converges as slow as possible. More precisely, a \emph{lower semicomputable semimeasure} is a non-negative real-valued function $m(x)$ on binary strings such that $m(x)$ is a limit of an increasing sequence of rational numbers and $\sum_x m(x)\le 1$ that is computable uniformly in~$x$. There exists a lower semicomputable semimeasure $\mm(x)$ that is maximal up to $O(1)$-factors, and its negative logarithm coincides with unconditional prefix complexity $\KP(x)$ up to an $O(1)$ additive term. 

We can define conditional prefix complexity in the same way, considering semimeasures with parameter $y$. Namely, we consider lower semicomputable non-negative real-valued functions $m(x,y)$ such that  $\sum_x m(x,y)\le 1$ for every $y$. Again there exists a maximal function among them, denoted by $\mm(x\cnd y)$, and its negative logarithm equals $\KP(x\cnd y)$ up to an $O(1)$ additive term.

To prove this equality, we note first that the a priori conditional probability $m_U(x\cnd y)$ is a lower semicomputable conditional semimeasure. The lower semicomputability is easy to see: we can simulate the machine $U$ and discover more and more programs that map $y$ to $x$. The inequality $\sum_x m_U(x\cnd y)$ also has a simple probabilistic meaning: the events ``$\pi$ maps $y$ to $x$'' for a given $y$ and different $x$ are disjoint, so the sum of their probabilities does not exceed $1$. The other direction (starting from a semimeasure, construct a machine) is a bit more difficult, but in fact it is possible (even exactly, without additional $O(1)$-factors). See~\cite[chapter 4]{usv} for details.

The semimeasure definition can be reformulated in terms of complexities by taking exponents: $\KP(x\cnd y)$ is a minimal (up to $O(1)$ additive term) upper semicomputable non-negative integer function $k(x,y)$ such that
\[
\sum_x2^{-k(x,y)} \;\le\; 1
\]
for all $y$.
A similar characterization of plain complexity would use a weaker requirement
$$
\#\{x\colon k(x,y) \mathop < n\} \;<\; c2^n
$$
for some $c$ and all $y$. (We discussed a similar result for information distance where the additional symmetry requirement was used, but the proof is the same.)

\subsection{Warning}\label{subsec:warning}

There exists a definition of plain conditional complexity that does \emph{not} have a prefix-version counterpart. Namely, the plain conditional complexity $\KS(x\cnd y)$ can be equivalently defined as the \emph{minimal unconditional plain complexity of a program that maps $y$ to $x$}.  In this way we do not need the programming language used to map $y$ to $x$ to be optimal; it is enough to assume that we can computably translate programs in other languages into our language; this property, sometimes called \emph{$s$-$m$-$n$-theorem} or \emph{G\"odel property of a computable numbering}, is true for almost all reasonable programming languages. Of course, we still assume that the language used in the definition of unconditional Kolmogorov complexity is optimal. 

One may hope that $\KP(x\cnd y)$ can be similarly defined as the minimal unconditional prefix complexity of a program that maps $y$ to $x$.  The following proposition shows that it is not the case.

\begin{proposition}\label{prop:not-program-complexity}
The prefix complexity $\KP(x\cnd y)$ does not exceed the minimal prefix complexity of a program that maps $y$ to $x$; however, the difference between these two quantities is not bounded.
\end{proposition}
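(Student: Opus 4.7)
For the first inequality, I would give a composition argument: given any $p$ with $U(p,y)=x$, fix a shortest prefix-free unconditional description $q$ of $p$ under an optimal prefix-free machine $U_0$, so $|q|=\KP(p)$, and define a new conditional machine $V(q, y) := U(U_0(q), y)$. Prefix-freeness in $q$ is inherited from $U_0$, and $V(q, y) = x$, so by optimality of the prefix-free conditional machine one gets $\KP(x \cnd y) \le |q| + O(1) = \KP(p) + O(1)$. Taking the minimum over $p$ gives the bound.

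For the unbounded gap, the plan is to exhibit, for each $c$, a pair $(x, y)$ with the desired separation. The guiding observation is that for a shortest conditional program $p^*$ of length $n := \KP(x \cnd y)$, the best universal bound is $\KP(p^*) \le n + \KP(n) + O(1)$, and the overhead $\KP(n)$ can be $\Theta(\log n)$ when $n$ is a complex natural number (such $n$ exist by counting). I would fix such an $n$ and choose $y$ to be a canonical short description of $n$ (so $\KP(y) \approx \KP(n)$) together with $x$ a random $n$-bit string drawn after $n$ is fixed; then $\KP(x \cnd y) = n + O(1)$ while $\KP(x) \approx n + \KP(n)$, making it plausible that every $p$ with $U(p, y) = x$ inherits the overhead $\KP(n)$ in its $\KP$-value.

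The main obstacle is verifying the lower bound $\min_{p : U(p,y)=x} \KP(p) \ge n + \KP(n) - O(1)$ over \emph{all} competitors, not just the shortest program. The naive alternative ``output $x$ ignoring the input'' already has $\KP(p) \le \KP(x) + O(1) \approx n + \KP(n)$, matching the target; the real difficulty is ruling out clever programs that exploit $y$ to shortcut the description. I would handle this by a Kraft-type counting argument: programs with $\KP$ below $n + \KP(n) - \Omega(\log \log n)$ form a set of bounded measure under the $\KP$-semimeasure, and for a generic choice of $(x, y)$ obeying the length and complexity constraints above, none of these low-$\KP$ programs coincidentally maps $y$ to $x$. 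Formalizing the sparsity estimate and the genericity of the witnesses is the technical heart of the proof and is where I expect the bulk of the work to lie.
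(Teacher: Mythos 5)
Your first half is fine and is essentially the paper's argument: compose an optimal unconditional prefix-free machine with the conditional machine, so that a prefix-free description $q$ of a program $p$ mapping $y$ to $x$ yields $\KP(x\cnd y)\le \KP(p)+O(1)$.

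The second half has a genuine gap. You reduce the unboundedness to a \emph{pointwise} lower bound: for one cleverly chosen pair $(x,y)$ (with $y$ a description of $n$ and $x$ random of length $n$), every program $p$ with $U(p,y)=x$ should satisfy $\KP(p)\ge n+\KP(n)-O(1)$. You concede this is the ``technical heart'' and propose a Kraft-type counting/genericity argument, but that argument cannot work as sketched: the competitor class, programs with $\KP(p)< n+\KP(n)-O(\log\log n)$, has total $\KP$-measure at most $1$ but \emph{cardinality} up to roughly $2^{\,n+\KP(n)-O(\log\log n)}$, which exceeds the $2^n$ candidate strings $x$ of length $n$ by a factor of about $2^{\KP(n)}$. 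Each such program hits at most one $x$ for the fixed $y$, but collectively they could hit every $x$ of length $n$, so no counting over the $2^n$ choices of $x$ (nor any ``generic $x$'' argument) rules out a low-$\KP$ program that happens to map $y$ to $x$. Indeed, for a single fixed $n$ there is no Kraft obstruction at all to having $2^n$ distinct programs of prefix complexity $\le n+c$, one per $x$; the obstruction only appears when you demand this simultaneously for infinitely many $n$ with the same constant $c$.

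That global viewpoint is exactly how the paper closes the argument, and it avoids your difficulty entirely: since $\KP(x\cnd n)\le n+O(1)$ for all $x$ of length $n$, assuming the two quantities agree up to a constant $c$ gives, for every $n$, at least $2^n$ pairwise distinct programs $q_x$ (distinct because they map $n$ to distinct outputs) with $\KP(q_x)\le n+c$. Then $\sum_q 2^{-\KP(q)}\le 1$, yet every tail of this series contains, for large $n$, at least $2^{n-1}$ terms of size at least $2^{-n-c}$, so the tails stay above $2^{-c-1}$, contradicting convergence. If you want to salvage your pointwise plan you would need a substantially different lower-bound technique for the specific pair; as written, the step you flagged as the heart of the proof is not only missing but the proposed method for it fails.
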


\begin{proof} To prove the first part, assume that $U_1(p)$ is a prefix-stable function of one argument that makes the complexity function 
  \[
\KS_{U_1}(q) \;= \;\min\{ |p|\colon U(p)\mathop =q\}
\]
 minimal. Then $\KS_U(q)=\KP(q)+O(1)$. (We still need an $O(1)$ term since the choice of an optimal prefix-stable function is arbitrary). Then consider the function
  \[
U_2(p,x)\;=\;[U_1(p)](x)
\]
where $[q](x)$ denotes the output of a program $q$ on input $x$. Then $U_2$ is a prefix-stable function from the definition of conditional prefix complexity, and 
  \[
\KS_{U_2}(y\cnd x) \;\le\; \KS_{U_1}(q)
  \]
for any program $q$ that maps $x$ to $y$ (i.e., $[q](x)=y$). This gives the inequality mentioned in the proposition. Now we have to show that this inequality is not an equality with $O(1)$-precision.

Note that $\KP(x\cnd n)\le n+O(1)$ for every binary string $x$ of length $n$. Indeed, a prefix-stable (or prefix-free) machine that gets $n$ as input can copy 
$n$ first bits of its program to the output. (The prefix-free machine should check that there are exactly $n$ input bits.) In this way we get $n$-bit programs for all strings of length $n$. 

Now assume that the two quantities coincide up to an $O(1$) additive term. Then for every string $x$ there exists a program $q_x$ that maps $|x|$ to $x$ and $\KP(q_x)\le |x|+c$ for all $x$ and some~$c$. Note that $q_x$ may be equal to $q_y$ for $x\ne y$, but this may happen only if $x$ and $y$ have different lengths. Consider now the set $Q$ of all $q_x$ for all strings $x$, and the series 
  \[
\sum_{q\in Q} 2^{-\KP(q)}.\eqno(**)
\]
This sum does not exceed $1$ (it is a part of a similar sum for all $q$ that is at most $1$, see above). On the other hand, we have at least $2^n$ different programs $q_x$ for all $n$-bit strings $x$, and they correspond to different terms in $(**)$; each of these terms is at least $2^{-n-c}$. We get a converging series that contains, for every $n$, at least $2^n$ terms of size at least $2^{-n-c}$. It is easy to see that such a series does not exist. Indeed, each tail of this series should be at least $2^{-c-1}$ (consider these $2^n$ terms for large $n$ when at least half of these terms are in the tail), and this is incompatible with convergence. 
\end{proof}

Why do we get a bigger quantity when considering the prefix complexity of a program that maps $y$ to $x$? The reason is that the prefix-freeness (or prefix-stability) requirement for the function $U(p,x)$ is formulated separately for each $x$: the decision where to stop reading the program $p$ \emph{may depend on its input}~$x$. This is not possible for a prefix-free description of a program that maps $x$ to~$y$. It is easy to overlook this problem when we informally describe prefix complexity $\KP(x\cnd y)$ as ``the minimal length of a program, written in a self-delimiting language, that maps $y$ to $x$'', because the words ``self-delimiting language''  implicitly assume that we can determine where the program ends while reading the program text (and before we know its input), and this is a wrong assumption. 

\subsection{Historical digression}\label{subsec:history}

Let us comment a bit on the history of prefix complexity. It appeared first in 1971 in Levin's PhD thesis~\cite{levin1971}; Kolmogorov was his thesis advisor. Levin used essentially the semimeasure definition (formulated a bit differently). This thesis was in Russian and remained unpublished for a very long time. In 1974 G\'acs' paper~\cite{gacs1974} appeared where the formula for the prefix complexity of a pair was proven. This paper mentioned prefix complexity as ``introduced by Levin in [4], [5]'' (\cite{levin1973} and \cite{levin1974} in our numbering).  The first of these two papers does not say anything about prefix complexity explicitly, but defines the monotone complexity of sequences of natural numbers, and prefix complexity can be considered as a special case when the sequence has length $1$ (this is equivalent to the prefix-stable definition of prefix complexity).  The second paper has a comment ``(to appear)'' in G\'acs' paper. We discuss it later in this subsection. 

G\'acs does not reproduce the definition of prefix complexity, saying only that it is ``defined as the complexity of specifying $x$ on a machine on which it is impossible to indicate the endpoint\footnote{The English translation says ``halting'' instead of ``endpoint'' but this is an obvious translation error.} of a master program: an infinite sequence of binary symbols enters the machine and the machine must itself decide how many binary symbols are required for its computation''. This description is not completely clear, but it looks more like a prefix-free definition if we understand it in such a way that the program is written on a one-directional tape and the machine decides where to stop reading.  G\'acs also notes that prefix complexity (he denotes it by $KP(x)$) ``is equal to the [negative] base two logarithm of a universal semicomputable probability measure that can be defined on the countable set of all words''. 

Levin's 1974 paper~\cite{levin1974} says that ``the quantity $KP(x)$ has been investigated in details in [6,7]''. Here [7] in Levin's numbering is G\'acs paper cited above (\cite{gacs1974} in our numbering) and has the comment ``in press'', and [6] in Levin's numbering is cited as \begin{otherlanguage*}{russian}``\emph{Левин Л.А.}, О различных видах алгоритмической сложности конечных объектов (в печати)'' \end{otherlanguage*}
[Levin L.A., On a different version of algorithmic complexity of finite objects, to appear]. Levin does not have a paper with exactly this title, but the closest approximation is his 1976 paper~\cite{levin1976}, where prefix complexity is defined as the logarithm of a maximal semimeasure.  Except for these references, \cite{levin1974} describes the prefix complexity in terms of prefix-stable functions: ``It differs from the Kolmogorov complexity measure $\langle\ldots\rangle$ in that the decoding algorithm $A$ has the following ``prefix'' attribute: if $A(p_1)$ and $A(p_2)$ are defined and distinct, then $p_1$ cannot be a beginning fragment of~$p_2$''.  

The prefix-free and a priori probability definitions were given independently by Chaitin in~\cite{chaitin1975} (in different notation) together with the proof of their equivalence, so~\cite{chaitin1975} was the first publication containing this (important) proof.
 
Now it seems that the most popular definition of prefix complexity is the prefix-free one, for example, it is given as the main definition in~\cite{lv}.

\section{Prefix complexity and information distance}\label{sec:prefix-distance}

\subsection{Four versions of prefix information distance}\label{subsec:four-versions}

Both the prefix-free and prefix-stable versions of prefix complexity have their counterparts for the information distance.

Let $U(p,x)$ be a partial computable prefix-free [respectively, prefix-stable] function of two string arguments having string values. Consider the function 
\[
\D_U(x,y) \;=\; \min\{ |p|\colon U(p,x) \mathop = y \text{ and } U(p,y) \mathop=x\}.
\]
As before, one can easily prove that there exists a minimal (up to $O(1)$) function among all functions $\D_U$ of the class considered.  It will be called the \emph{prefix-free} [respectively~\emph{prefix-stable}] \emph{information distance}. We clarify the difference between these variants.

Note that only the cases when $U(p,x)=y$ and also $U(p,y)=x$ matter for $\D_U$. So we may assume without loss of generality that $U(p,x)=y \Leftrightarrow U(p,y)=x$ waiting until both equalities are true before finalizing the values of $U$. Then for every $p$ we have some matching $M_p$ on the set of all strings: an edge $x$--$y$ is in $M_p$ if $U(p,x)=y$ and $U(p,y)=x$. This is indeed a matching: for every $x$ only $U(p,x)$ may be connected with $x$.

The set $M_p$ is enumerable uniformly in $p$. In the prefix-free version the matchings $M_p$ and $M_q$ are disjoint (have no common vertices) for two compatible strings $p$ and $q$ (one is an extension of the other). For the prefix-stable version $M_p$ increases when $p$ increases (and remains a matching).  It is easy to see that a family $M_p$ that has these properties, always corresponds to some function~$U$, and this statement holds both in the prefix-free and prefix-stable version.

There is another way in which this definition could be modified. As we have discussed for plain complexity, we may consider two different functions $U$ and $U'$ and consider the distance function
\[
\D_{U,U'}(x,y) \;=\; \min\{ |p|\colon U(p,x) \mathop = y \text{ and } U'(p,y) \mathop=x\}.
\]
Intuitively this means that we know the transformation direction in addition to the input string. This corresponds to matchings in a bipartite graph where both parts consist of all binary strings; the edge $x$--$y$ is in the matching $M_p$ if $U(p,x)=y$ and $U'(p,y)=x$.  Again instead of the pair $(U,U')$ we may consider the family of matchings that are disjoint (for compatible $p$, in the prefix-free version) or monotone (for the prefix-stable version). In this way we get two other versions of information distance that could be called \emph{bipartite prefix-free} and \emph{bipartite prefix-stable} information distances.

In~\cite{bglvz} the information distance is defined as the prefix-free information distance with the same function $U$ for both directions, not two different ones. The definition in section III considers the minimal function among all $\D_U$. This minimal function is denoted by $E_0(x,y)$, while $\max(\KP(x\cnd y),\KP(y\cnd x))$ is denoted by $E_1(x,y)$, see section I of the same paper. The inequality $E_1\le E_0$ is obvious, and the reverse inequality with logarithmic precision is proven in~\cite{bglvz} as Theorem~3.3. 

Which of the four versions of prefix information distance is the most natural?  Are they really different? It is easy to see that the prefix-stable version (bipartite or not) is bounded by the corresponding prefix-free version, since every prefix-free function has a prefix-stable extension. Also each bipartite version (prefix-free or prefix-stable) does not exceed the corresponding non-bipartite version for obvious reasons: one may take $U=U'$. It is hard to say which version is most natural, and the question whether some of them coincide or all four are different, remains open. Let $\E(x,y)$ denote the maximum of the conditional prefix complexities. Since the non-bipartite prefix-free distance is the maximal of all 4, the result from~\cite{bglvz} implies the following.

\begin{theorem}\label{th:equality_logarithmic}
  All prefix information distances are equal to $\E(x,y) + O(\log \E(x,y))$.
\end{theorem}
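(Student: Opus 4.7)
The plan is to adapt the graph-colouring argument from the proof of theorem~\ref{th:bglvz} to the prefix setting, with the extra cost of writing out a self-delimiting code for the radius~$n$. Since the non-bipartite prefix-free distance dominates the three other variants (each prefix-stable variant is bounded by its prefix-free counterpart because any prefix-free partial function extends to a prefix-stable one, and each bipartite variant is bounded by its non-bipartite counterpart by taking $U=U'$), it suffices to prove the theorem for the non-bipartite prefix-free distance.

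For the lower bound, observe that if a prefix-free $U(p,x)$ satisfies $U(p,x)=y$ and $U(p,y)=x$, then for any fixed second argument $y_0$ the map $p \mapsto U(p,y_0)$ is a prefix-free function from the definition of conditional prefix complexity, so $\KP(x\cnd y) \le \D_U(x,y)+O(1)$ and symmetrically $\KP(y\cnd x) \le \D_U(x,y)+O(1)$. Hence $\E(x,y) \le \D_U(x,y) + O(1)$ for all four variants.

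For the upper bound I would revisit the coloring argument from theorem~\ref{th:bglvz}. For each $n$ enumerate the symmetric relation
\[
  R_n \;=\; \{(x,y) \colon \KP(x\cnd y) < n \text{ and } \KP(y\cnd x) < n\}.
\]
The Kraft-type inequality $\sum_y 2^{-\KP(y\cnd x)} \le 1$ (from section~\ref{subsec:semimeasure}) implies that every vertex has degree at most $O(2^n)$ in $R_n$, so edges of $R_n$ can be greedily coloured with $(n+c)$-bit strings for a suitable constant $c$, in such a way that edges sharing an endpoint receive distinct colours. Define the machine $U$ by: on input $p$ consisting of a self-delimiting code for~$n$ (using $O(\log n)$ bits) concatenated with an $(n+c)$-bit colour $q$, let $U(p,x)$ be the unique $y$ such that the edge $\{x,y\}$ of $R_n$ has colour $q$, if any. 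Programs for different values of $n$ have incompatible self-delimiting headers, and for a fixed $x$ the colours at $x$ within one $R_n$ are pairwise distinct; therefore the set of programs on which $U(\cdot,x)$ is defined is prefix-free, so $U$ legitimately enters the definition of the prefix-free information distance. Whenever $\E(x,y) < n$, the pair $(x,y)$ lies in $R_n$ and receives some colour, giving $\D_U(x,y) \le n + O(\log n)$; choosing $n = \E(x,y)+1$ gives the bound $\E(x,y) + O(\log \E(x,y))$.

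The main technical point to check is the combinatorial one, namely that a single self-delimiting header for $n$ is enough to make the programs for all radii jointly prefix-free while costing only $O(\log \E(x,y))$; once that is granted everything else is bookkeeping. The fact that this header cannot in general be shrunk below $\Omega(\log \log \E(x,y))$, and that sometimes it genuinely is needed, is precisely what theorems~\ref{th:exact} and~\ref{th:gap} quantify.
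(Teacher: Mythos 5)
Your proof is correct, and the key idea — prepend a self-delimiting header encoding the length to convert a plain-style program into a prefix-free one — is exactly the idea the paper uses (the paper phrases it as prepending a prefix-free description of the length of a minimal plain program, citing theorem~\ref{th:bglvz}). You re-derive the edge-colouring argument with $\KP$ in place of $\KS$ (using the Kraft inequality for the degree bound instead of counting programs) rather than citing theorem~\ref{th:bglvz} directly, but this is the same construction presented a bit more self-containedly.
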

\noindent
This result also follows from theorem~\ref{th:bglvz}. 
Indeed, one can convert a program on a plain machine to a program on a prefix-free machine by prepending a prefix-free description of its length. 
Consider a minimal program for the plain distance, and prepend a prefix-free description of length at most $O(\log \E(x,y))$.
This is possible because by theorem~\ref{th:bglvz} the plain distance is bounded by $\max{\KS(x \cnd y),\KS(y \cnd x)} \le \E(x,y)$ up to $O(1)$ constants.
The length of the concatenation satisfies the bound of theorem~\ref{th:equality_logarithmic}.

As we prove in theorem~\ref{th:gap}, the smallest of all four distances, the prefix-stable bipartite version, is still bigger than the maximum $\E$ of conditional complexities, and the difference is unbounded. Hence, for all four versions, including the prefix-free non-bipartite version used both in~\cite{bglvz,lzlm,mahmud}, the equality with $O(1)$-precision is not true. This confirms the conjecture in section VII of~\cite{bglvz} and contradicts what is claimed in~\cite[Theorem 3.10]{mahmud}. However, if $\E(x,y)$ is at least logarithmic, then all 4 distances are equal to $\E$ with $O(1)$ precision, see theorem~\ref{th:exact}.

Before proving these results, we prove some positive results about the definition of information distance that is a counterpart of the a priori probability definition of prefix complexity.

\subsection{A priori probability of going back and forth}\label{subsec:apriori-distance}

Fix some prefix-free function $U(p,x)$. The conditional a priori probability $m_U (y\cnd x)$ is defined as
$$
\Pr_{\pi} [U(\pi,x)=y],
$$
where $\pi$ is a random infinite sequence, and $U(\pi,x)=y$ means that $U(p,x)=y$ for some $p$ that is a prefix of $\pi$. 
As we discussed, there exists a maximal function among all $m_U$, and its negative logarithm equals the conditional prefix complexity $\KP(y\cnd x)$.

Now let us consider the counterpart of this construction for the information distance. The natural way to do this is to consider the function
$$
e_{U}(x,y)\;=\;\Pr_{\pi} [U(\pi,x) \mathop{=} y \text{ and } U(\pi,y) \mathop{=}x].
$$
Note that in this definition the prefixes of $\pi$ used for both computations are not necessarily the same. It is easy to show, as usual, that there exists an \emph{optimal} machine $U$ that makes $e_U$ maximal. Fixing some optimal $U$, we get some function $\ee(x,y)$. Note that different optimal $U$ lead to functions that differ only by $O(1)$-factor. The negative logarithm of this function coincides with the maximum of the conditional complexities, 
as the following result says.

\begin{theorem}\label{thm:apriori}
  \[
-\log \ee(x,y) \;= \;\max(\KP(x\cnd y),\KP(y\cnd x))+O(1).
\]
\end{theorem}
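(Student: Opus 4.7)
The statement is an equality, so the plan is to prove both inequalities separately.

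The direction $-\log\ee(x,y) \ge \max(\KP(x\cnd y), \KP(y\cnd x)) - O(1)$ is immediate: the event witnessing $e_U$ is contained in each of the one-way events, so $e_U(x,y) \le \min(m_U(y\cnd x), m_U(x\cnd y))$. Applying this to the optimal $U$ for $\ee$ and comparing with the maximal lower semicomputable conditional semimeasure gives $\ee(x,y) \le O(1)\cdot \min(\mm(x\cnd y), \mm(y\cnd x))$, and taking $-\log$ yields the stated bound.

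For the other direction I would construct a single prefix-stable machine $U$ realizing $e_U(x,y) \ge \sigma(x,y)/4$, where $\sigma(x,y) := \min(\mm(x\cnd y), \mm(y\cnd x))$. Note that $\sigma$ is symmetric, lower semicomputable, and satisfies $\sum_y \sigma(x,y) \le \sum_y \mm(y\cnd x) \le 1$. Let $\sigma^t$ be a computable dyadic enumeration from below. The idea is to maintain, for every unordered pair $\{x,y\}$, an open set $C_{\{x,y\}}^t \subseteq \{0,1\}^\omega$ (a union of basic cylinders) of measure exactly $\sigma^t(x,y)/4$, such that for every vertex $x$ the family $\{C_{\{x,y\}}^t\}_{y \ne x}$ is pairwise disjoint. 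Then $U$ is defined by $U(p,x) = y$ if some prefix of $p$ is a basic cylinder contained in $C_{\{x,y\}}$. Disjointness makes $U$ well-defined, prefix-stability is automatic, and computability follows from enumerability of the $C$-sets. Because $C_{\{x,y\}} = C_{\{y,x\}}$ as sets of sequences, one gets $A_{x,y} = A_{y,x} = C_{\{x,y\}}$ and therefore $e_U(x,y) = \mu(C_{\{x,y\}}) \ge \sigma(x,y)/4$; then $-\log \ee(x,y) \le -\log\sigma(x,y)+O(1) = \max(\KP(x\cnd y),\KP(y\cnd x))+O(1)$.

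The online allocation is the only bit requiring care. On each increment $\sigma^{t+1}(x,y) - \sigma^t(x,y) = \Delta$, one adds to $C_{\{x,y\}}$ an open set of measure $\Delta/4$ taken from the ``free-at-both'' region $\{0,1\}^\omega \setminus (U^t_x \cup U^t_y)$, where $U^t_x := \bigcup_{y'} C^t_{\{x,y'\}}$. By the maintained invariant, $\mu(U^t_x) = \sum_{y'}\mu(C^t_{\{x,y'\}}) \le \tfrac{1}{4}\sum_{y'}\sigma^t(x,y') \le \tfrac{1}{4}$, and similarly for $y$, so the free-at-both region has measure at least $\tfrac{1}{2}$ and always accommodates the required $\Delta/4 \le \tfrac{1}{4}$.

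The step I expected to be the main obstacle is the fragmentation problem: if each edge were required to occupy one basic cylinder of length $\approx\log 1/\sigma(x,y)$, the free-at-both region could consist entirely of much shorter cylinders and no cylinder of the required length would fit; this is what forces level-separated schemes to lose a $\log^2$ factor and yields only $O(\log\log)$-precision. The resolution is that $C_{\{x,y\}}$ is permitted to be an arbitrary open set rather than a single cylinder, and any free open set of measure $\ge \tfrac{1}{2}$ can be carved to contribute any smaller dyadic measure using a union of basic cylinders of various sizes, which is all that is required to define $U$ and to evaluate $e_U$.
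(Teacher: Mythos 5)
Your proposal is correct and follows essentially the same route as the paper: the easy direction by inclusion of the back-and-forth event in each one-way event, and the hard direction by an online allocation that, for each unordered pair, grows an open set of measure a fixed fraction of $\min(\mm(x\cnd y),\mm(y\cnd x))$ while keeping the sets disjoint at every vertex, and then reads off a machine from these sets. The only differences are cosmetic: the paper reserves the factor $1/2$ (using the accounting $0.5\sum_z u'(x,z)+0.5\sum_z u'(z,y)\le 1-r$) and packages the cylinders into a prefix-free machine, whereas you use the factor $1/4$ and a prefix-stable machine, which yields the same bound up to $O(1)$.
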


\begin{proof}
Rewriting the right-hand side in the exponential scale, we need to prove that 
  \[
\ee(x,y)\;=\;\min (\mm(x\cnd y),\mm(y\cnd x))
\]
up to $O(1)$-factors. One direction is obvious: $\ee(x,y)$ is smaller than $\mm(x\cnd y)$ since the set of $\pi$ in the definition of $\ee$ is a subset of the corresponding set for $\mm$, if we use the probabilistic definition of $\mm=m_U$. The same is true for $\mm(y\cnd x)$. 

The non-trivial part of the statement is the reverse inequality. Here we need to construct a machine $U$ such that 
  \[
e_U(x,y) \;\ge\; \min (\mm(x\cnd y),\mm(y\cnd x))
\]
up to $O(1)$-factors.

Let us denote the right-hand side by $u(x,y)$. The function $u$ is symmetric, lower semicomputable and $\sum_y u(x,y)\le 1$ for all $x$ (due to the symmetry, we do not need the other inequality where $y$ is fixed). This is all we need to construct $U$ with the desired properties; in fact $e_U(x,y)$ will be at least $0.5u(x,y)$, (and the factor $0.5$ is important for the proof).

Every machine $U$ has a ``dual'' representation: for every pair $(x,y)$ one may consider the subset $U_{x,y}$ of the Cantor space that consists of all $\pi$ such that $U(\pi,x)=y$ and $U(\pi,y)=x$. These sets are effectively open (i.e., are computably enumerable unions of intervals in the Cantor space) uniformly in $x,y$, are symmetric ($U_{x,y}=U_{y,x}$) and have the following property: for a fixed $x$, all sets $U_{x,y}$ for all $y$ (including $y=x$) are disjoint.

What is important to us is that this correspondence works in both directions. If we have some family $U_{x,y}$ of uniformly effectively open sets that is symmetric and has the disjointness property mentioned above, there exists a prefix-free machine $U$ that generates these sets as described above. This machine works as follows: given some $x$, it enumerates the intervals that form $U_{x,y}$ for all $y$ (it is possible since the sets $U_{x,y}$ are effectively open uniformly in $x,y$). One may assume without loss of generality that all the intervals in the enumeration are disjoint. Indeed, every effectively open set can be represented as a union of a computable sequence of disjoint intervals (to make intervals disjoint, we represent the set difference between the last interval and previously generated intervals as a finite union of intervals). Note also that for different values of~$y$ the sets $U_{x,y}$ are disjoint by the assumption. If the enumeration for $U_{x,y}$ contains the interval $[p]$ (the set of all extensions of some bit string~$p$), then we let $U(p,x)=y$ and $U(p,y)=x$ (we assume that the same enumeration is used for $U_{x,y}$ and $U_{y,x}$). Since all intervals are disjoint, the function $U(p,x)$ is prefix-free.

  Now it remains (and this is the main part of the proof) to construct the family $U_{x,y}$ with the required properties in such a way that the measure of $U_{x,y}$ is at least $0.5u(x,y)$. In our construction it will be \emph{exactly} $0.5u(x,y)$. For that we use the same idea as in theorem~\ref{th:bglvz} but in the continuous setting. Since $u(x,y)$ is lower semicomputable, we may consider the increasing sequence $u'(x,y)$ of approximations from below (that increase with time, though we do not explicitly mention time in the notation) that converge to $u(x,y)$. We assume that at each step one of the values $u'(x,y)$ increases by a dyadic rational number $r$. In response to that increase, we add to $U_{x,y}$ one or several intervals that have total measure $r/2$ and do not intersect $U_{x,z}$ and $U_{z,y}$ for any $z$. For that we consider the unions of all already chosen parts of $U_{x,z}$ and of all chosen parts of $U_{z,y}$. The measure of the first union is bounded by $0.5\sum_z u'(x,z)$ and the measure of the second union is bounded by $0.5\sum_z u'(z,y)$ where $u'$ is the lower bound for $u$ before the $r$-increase. Since the sums remain bounded by $1$ after the $r$-increase, we may select a subset of measure $r/2$ outside both unions. (We may even select a subset of measure $r$, but this will destroy the construction at the following steps, so we add only $r/2$ to $U_{x,y}$.)
\end{proof}

\begin{remark}
As for the other settings, we may consider two functions $U$ and $U'$ and the probability of the event 
  \[
e_{U,U'}(x,y) \;=\;\Pr_{\pi} [U(\pi,x)\mathop = y \text{ and } U'(\pi,y)\mathop = x]
\]
for those $U,U'$ that make this probability maximal. The equality of theorem~\ref{thm:apriori} remains valid for this version. Indeed, the easy part can be proven in the same way, and for the difficult direction we have proven a stronger statement with additional requirement $U=U'$. 
\end{remark}

One can also describe the function $\ee$ as a maximal function in some class, and we will explain that this provides a characterization of the maximum $\E$ of conditional complexities as an optimal function in some class. 

\begin{proposition}\label{prop:qpd}
Consider the class of symmetric lower semicomputable functions $u(x,y)$ with string arguments and non-negative real values such that $\sum_y u(x,y)\le 1$ for all $x$. This class has a maximal function that coincides with $\min(\mm(x\cnd y), \mm(y\cnd x))$ up to an $O(1)$ factor.
\end{proposition}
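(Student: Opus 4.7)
The plan is to verify both halves of the claim by direct comparison with the definition of the maximal conditional semimeasure $\mm(\cdot \cnd \cdot)$. The only content beyond the existing theory of conditional $\mm$ is the interplay between the symmetry of $u$ and the one-sided summability condition $\sum_y u(x,y) \le 1$.

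First I would check that $f(x,y) := \min(\mm(x \cnd y), \mm(y \cnd x))$ is itself a member of the class. Symmetry is immediate, and lower semicomputability follows because the pointwise minimum of two lower semicomputable functions is lower semicomputable and $\mm$ is lower semicomputable by definition. For the summability, observe that $f(x,y) \le \mm(y \cnd x)$, so for any fixed $x$,
\[
  \sum_y f(x,y) \;\le\; \sum_y \mm(y \cnd x) \;\le\; 1,
\]
using that $\mm(\cdot \cnd x)$ is a conditional semimeasure in its first argument.

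For the reverse direction (maximality), fix any $u$ in the class. Keeping $x$ fixed, the function $y \mapsto u(x,y)$ is lower semicomputable uniformly in $x$ and satisfies $\sum_y u(x,y) \le 1$; hence it is a lower semicomputable conditional semimeasure, and by the maximality of $\mm$ we get $u(x,y) \le O(1) \cdot \mm(y \cnd x)$. Now invoke symmetry: $u(x,y) = u(y,x) \le O(1) \cdot \mm(x \cnd y)$ by the same argument with the roles of $x$ and $y$ exchanged. Combining both bounds yields $u(x,y) \le O(1) \cdot f(x,y)$, which is the required maximality.

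There is no real obstacle here, just one point worth highlighting in the write-up: the class does \emph{not} postulate summability in both arguments separately, only in one; symmetry is what converts the one-sided summability into the two-sided bound that produces the minimum. Combined with theorem~\ref{thm:apriori}, this also provides the promised characterization of $\E(x,y)$ as (the negative logarithm of) the maximal function of this class, by taking negative logarithms of the inequality $\ee(x,y) = f(x,y) + O(1)$.
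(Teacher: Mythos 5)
Your proof is correct and follows essentially the same route as the paper: membership of $\min(\mm(x\cnd y),\mm(y\cnd x))$ in the class via one-sided summability of the conditional semimeasure, and maximality by comparing $u$ with both $\mm(y\cnd x)$ and (via symmetry) $\mm(x\cnd y)$. Your remark that symmetry is what supplies the second bound is exactly the point the paper makes in passing, so nothing is missing.
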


\begin{proof}
We have already seen that this minimum has the required properties; if some other function $u(x,y)$ in this class is given, we compare it with conditional semimeasures $\mm(x\cnd y)$ and $\mm(y\cnd x)$ and conclude that $u$ does not exceed both of them.
\end{proof}

In logarithmic scale this statement can be reformulated as follows: \emph{the class of upper semicomputable symmetric functions $D(x,y)$ with string arguments and real values such that $\sum_y 2^{-D(x,y)}\le 1$ for each $x$, has a minimal element that coincides with $\max(\KP(x\cnd y),\KP(y\cnd x))$ up to an $O(1)$ additive term}. Theorem 4.2 in~\cite{bglvz} says the same with the additional condition for~$D$: it should satisfy the triangle inequality. This restriction makes the class smaller and could increase the minimal element in the class, but this does not happen since the function
\[
\max (\KP(x\cnd y),\KP(y\cnd x))+c
\]
satisfies the triangle inequality for large enough $c$. This follows from the inequality $\KP(x\cnd z)\le \KP(x\cnd y)+\KP(y\cnd z)+O(1)$ since the left hand size increases by $c$ and the right hand size increases by $2c$ when $\KP$ is increased by $c$.

\begin{remark}
To be pedantic, we have to note that in~\cite{bglvz} an additional condition $D(x,x)=0$ is required for the functions in the class; to make this possible, one has to exclude the term $2^{-D(x,x)}$ in the sum (now this term equals $1$) and require that $\sum_{y\ne x} 2^{-D(x,y)}\le 1$ (p.~1414, the last inequality). Note that the triangle inequality remains valid if we change $D$ and let $D(x,x)=0$ for all $x$.
\end{remark}

\section{Proof of theorem~\ref{th:gap}}\label{sec:game}

For notational convenience, we first prove the following qualitative version of theorem~\ref{th:gap} for the non-bipartite distances. 

\begin{proposition}\label{prop:gap_qualitative}
  The difference between the non-bipartite prefix-stable distance and 
  \\ $\max(\KP(x \cnd y), \KP(y \cnd x))$ is unbounded.
\end{proposition}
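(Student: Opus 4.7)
My plan is to prove this by the game technique, using the a-priori-probability characterization from proposition~\ref{prop:qpd}. Fix a constant $c$ (which will parametrize the gap). I set up a two-player game in which Alice gradually builds a symmetric lower semicomputable function $w(x,y)$ with $\sum_y w(x,y) \le 1$ for each $x$, while Bob gradually builds a prefix-stable machine $U$, equivalently a family of matchings $\{M_p\}_{p \in \{0,1\}^*}$ with $M_p \subseteq M_{p'}$ whenever $p$ is a prefix of $p'$, each $M_p$ a matching on binary strings. Alice wins at parameter $n$ if she eventually exhibits a pair $(x,y)$ with $w(x,y) \ge 2^{-n}$ but $(x,y) \notin M_p$ for every $p$ of length at most $n+c$. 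By the standard reduction, a computable winning strategy combines with the maximality of $\min(\mm(x\cnd y),\mm(y\cnd x))$ given by proposition~\ref{prop:qpd} and with the optimality of Bob's machine to produce actual pairs with $\E(x,y) \le n + O(1)$ and non-bipartite prefix-stable distance strictly greater than $n+c$. Letting $c \to \infty$ then yields the unbounded gap.

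For Alice's strategy I would work inside a large finite universe $S \subseteq \{0,1\}^*$ and play reactively in rounds. In each round she waits for Bob to enumerate more edges into the matchings $M_p$ with $|p| \le n+c$, and then assigns weight $2^{-n}$ to a pair in $S \times S$ not yet covered by any such matching, subject to the row-budget that allows at most $2^n$ placements per vertex. The combinatorial feature I would exploit is that under prefix-stability Bob's commitments couple across the program tree: each pair $(x,y) \in M_p$ simultaneously blocks both $x$ and $y$ from appearing in any other edge of $M_{p'}$ for $p' \succeq p$, and the set of programs along which edges incident to any fixed vertex are first committed forms an antichain with Kraft weight at most $1$. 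The strategy should be designed so that Alice's placements persistently force Bob to spend his short programs in configurations that conflict with later placements.

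The main obstacle is the counting. A naive budget comparison actually favors Bob by a factor of roughly $2^c$: he controls $2^{n+c+1}$ programs, each a matching of up to $|S|/2$ pairs, giving him about $|S|\cdot 2^{n+c}$ slots, while Alice is limited to about $|S|\cdot 2^n/2$ placements. The win therefore cannot be extracted from raw capacity and must come from the online discipline of Bob's commitments. The delicate step is to choose Alice's enumeration order and the set $S$ so that the antichain / matching constraints on Bob's short programs accumulate into a provable contradiction: for some placement Bob has no short program left which is compatible simultaneously with prefix-stability and with Bob's earlier commitments. I expect this part to require the most care, and it is where ideas from~\cite{gacs1983} should enter. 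Once the qualitative gap of this proposition is established, the sharper $\Omega(\log\log n)$ quantification of theorem~\ref{th:gap} should follow by tracking how $|S|$ must scale with $n$ and $c$.
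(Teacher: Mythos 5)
Your reduction of the proposition to a game is essentially the paper's (lemma~\ref{lem:gapGameImpliesTheorem} there), and your diagnosis that raw capacity counting favors Bob is correct. But the proposal stops exactly where the proof begins: you never give Alice's winning strategy, and that strategy is the entire substance of the argument. The paper's construction is as follows. Alice maintains a large finite set of active vertices and plays $N=1/d$ stages with geometrically increasing request sizes $\eps_0\ll\eps_1\ll\dots\ll\eps_N=d/2$, $\eps_{i-1}/\eps_i=d/2$. The accounting is not in terms of the measure Bob has actually allocated but in terms of \emph{everywhere dirty} intervals: after stage $i$ one takes, for each active vertex, the $\eps_i$-neighborhood of the space Bob has spent on edges from that vertex to inactive vertices, and intersects these over all active vertices. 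Because Alice's later requests have size $\ge\eps_i$, a whole $\eps_i$-interval becomes useless as soon as any part of it is touched, so the unavailable space can vastly exceed the occupied space (this is the G\'acs amplification you correctly guessed should enter). Within a stage, each substage is a star whose center is a \emph{fresh} vertex and whose leaves are active vertices: the center may absorb total weight up to $d$ while each leaf pays only one request of size $\eps_{i-1}$; Bob must answer with pairwise disjoint intervals of total measure $d$, all outside the old dirty zone, so at least one new interval per star lies outside the current everywhere-dirty set; running many stars and applying pigeonhole, Alice retains a large subset of active vertices sharing one \emph{common} new dirty $\eps_i$-interval and discards the rest. Each stage thus raises the everywhere-dirty measure by $d$ at a cost of at most $\eps_{i-1}\cdot(\text{dirty measure gained})/\eps_i = (d/2)\cdot(\dots)$ per active vertex, so the per-vertex budget $d/2$ is respected; after $N$ stages the whole space is dirty and one final request of size $d/2$ defeats Bob. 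None of this (multi-scale sizes, neighborhood accounting, fresh-center stars, pigeonhole on surviving active vertices) is present or implied in your write-up, so as it stands the proof has a genuine hole at its central step.

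A secondary, fixable issue: in your reduction you assert $\E(x,y)\le n+O(1)$ with a constant independent of $c$, but if Alice plays a separate unit-budget game for each $c$, the comparison of her weight function with the universal conditional semimeasure costs a constant depending on $c$ (roughly $\KP(c)$). This still gives an unbounded gap since $\KP(c)\ll c$, but it needs to be said; the paper avoids the issue by scaling the $k$-th game's budget to $d=2^{-k}$ and summing all games into a single semimeasure with row sums at most $1$, which yields a single comparison constant and, incidentally, the quantitative $\Omega(\log\log n)$ bound of theorem~\ref{th:gap}.
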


\noindent
The quantitative statement can be easily obtained from the qualitative proof 
using a small calculation, given at the end of section~\ref{subsec:game-strategy}.
The modifications for the bipartite distances are also easy, and are explained in section~\ref{subsec:game-bipartite}. 
Together, this implies theorem~\ref{th:gap}.

\subsection{It is enough to win a game}\label{subsec:game-enough}

Consider the following two-player full information game. Fix some parameter $c$, a positive rational number. The game field is the complete graph on a countable set (no loops); we use binary strings as graph vertices.  Alice and Bob take turns. 

Alice increases \emph{weights} of the graph edges. We denote the weight of the edge connecting vertices $u$ and $v$ by $m_{u,v}$ (here $u\ne v$). Initially all $m_{u,v}$ are zeros. At her move, Alice may increase weights of finitely many edges using rational numbers as new weights. The weights should satisfy the inequality $\sum_{v\ne u} m_{u,v}\le 1$ for every $u$, i.e.,  the total weight of the edges adjacent to some vertex should not exceed~$1$.

Bob assigns some subsets of the Cantor space to edges. For all $u$ and $v \ne u$, the set $M_{u,v}$ assigned to the edge $u$--$v$ is a clopen subset of the Cantor space (clopen subsets are subsets that are closed and open at the same time, i.e., finite unions of intervals in the Cantor space). Initially all $M_{u,v}$ are empty. At each move Bob may increase sets assigned to finitely many edges (using arbitrary clopen sets that contain the previous ones). For every $u$, the sets $M_{u,v}$ (for all $v\ne u$) should be disjoint.

The game is infinite, and the winner is determined in the limit, assuming that both Alice and Bob follow the rules. Namely, Bob wins if for every $u$ and $v \ne u$, the limit value $\lim M_{u,v}$ (the union of the increasing sequence of Bob's labels for edge $u$--$v$) contains an interval in the Cantor space whose size is at least $c\cdot \lim m_{u,v}$ (the limit value of Alice's labels for $u$--$v$, multiplied by~$c$). Recall that the interval $[z]$ in the 
Cantor space is the set of all extensions of some string~$z$, and its size is $2^{-|z|}$. In the sequel, the size of the maximal interval contained in $X$ is denoted by~$\nu(X)$.

We claim that the existence of a winning strategy for Alice that is computable uniformly in~$c$, is enough to prove proposition~\ref{prop:gap_qualitative}. But first let us make some remarks on the game rules.

\begin{remark}
Increasing the constant $c$, we make Bob's task more difficult, and Alice's task easier. So our claim says that Alice can win the game even for arbitrarily small positive values of~$c$.
\end{remark}

\begin{remark}
In our definition the result of the game is determined by the limit values of $m_{u,v}$ and $M_{u,v}$, so both players may postpone their moves. Two consequences of this observation will be used. First, we may assume that Bob always has empty $M_{u,v}$ when~$m_{u,v}=0$. 
Second, we may assume that Bob has to satisfy the requirement $\nu(M_{u,v})\ge c \cdot m_{u,v}$ after each of his moves. Indeed, Alice may wait until this requirement is satisfied by Bob: if this never happens, Alice wins the game in the limit (due to compactness: if an infinite family of intervals covers some large interval in the Cantor space, a finite subfamily exists that covers it, too).
\end{remark}

\begin{lemma}\label{lem:gapGameImpliesTheorem}
  If Alice has a winning strategy in all games with~$c>0$ and if these strategies are uniformly computable in~$c$,
  then proposition~\ref{prop:gap_qualitative} is true.
\end{lemma}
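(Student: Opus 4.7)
The plan is to turn Alice's winning strategies into an unbounded gap for the non-bipartite prefix-stable distance. Let $U^*$ be an optimal prefix-stable machine, so its distance $D_{U^*}$ equals the non-bipartite prefix-stable distance $D$ up to $O(1)$. For distinct strings $u,v$, define the effectively open set
\[
  M^*_{u,v} \;=\; \bigl\{\pi \in \{0,1\}^\omega : \exists p \subseteq \pi,\ U^*(p,u)=v \text{ and } U^*(p,v)=u\bigr\}.
\]
This family is symmetric, and for each fixed $u$ the sets $\{M^*_{u,v}\}_{v \ne u}$ are pairwise disjoint, because $U^*(\pi,u)$ is single-valued whenever defined. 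Any shortest witness $p$ for the pair $(u,v)$ yields $[p] \subseteq M^*_{u,v}$, so $\nu(M^*_{u,v}) \ge 2^{-D_{U^*}(u,v)}$.

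For each integer $k \ge 1$ set $c = 2^{-k}$ and simulate the game in which Alice follows her winning strategy $A_c$ (computable uniformly in $c$ by hypothesis) against a computable Bob who dovetails the enumeration of $U^*$ and, at each of his moves, extends his clopen approximations of the $M^*_{u,v}$. Disjointness at each $u$ is preserved since it already holds in the limit. Let $m^{(k)}_{u,v}$ denote Alice's limit weights in this play. The rules of the game guarantee that $m^{(k)}$ is symmetric, non-negative, and satisfies $\sum_{v \ne u} m^{(k)}_{u,v} \le 1$ for every $u$; computability of the whole simulation guarantees that $m^{(k)}$ is lower semicomputable. By proposition~\ref{prop:qpd} there is a constant $C$ \emph{independent of $k$} such that $m^{(k)}_{u,v} \le C \min(\mm(u\cnd v), \mm(v\cnd u))$ for all $u,v$.

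Since $A_c$ is winning, Bob must fail at some edge $(u,v)$, that is, $\nu(M^*_{u,v}) < c \cdot m^{(k)}_{u,v}$. Combining the three inequalities gives
\[
  2^{-D_{U^*}(u,v)} \;\le\; \nu(M^*_{u,v}) \;<\; c\, m^{(k)}_{u,v} \;\le\; c\,C \min(\mm(u\cnd v),\mm(v\cnd u)),
\]
and taking logarithms yields $D_{U^*}(u,v) > k + \E(u,v) - \log C$. Absorbing $D = D_{U^*} + O(1)$ gives $D(u,v) - \E(u,v) > k - O(1)$, and since $k$ is arbitrary the difference is unbounded, which is proposition~\ref{prop:gap_qualitative}.

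The delicate point in the argument is the combinatorial scheduling of the simulated play: Bob's clopen approximations must be produced step by step so that Alice's moves remain legal at every finite stage and the limit function $m^{(k)}$ is genuinely lower semicomputable with $\sum_{v} m^{(k)}_{u,v} \le 1$ holding in the limit. Only then can proposition~\ref{prop:qpd} be applied with a constant $C$ that does not depend on~$k$, which is what lets the gap $k - O(1)$ grow without bound.
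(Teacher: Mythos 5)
Your overall architecture (blind Bob driven by an optimal prefix-stable machine, Alice's winning strategy with $c=2^{-k}$, comparison of the limit weights with $\min(\mm(u\cnd v),\mm(v\cnd u))$ via proposition~\ref{prop:qpd}) matches the paper, but there is a genuine gap at the step you yourself flag as delicate: the claim that proposition~\ref{prop:qpd} gives a constant $C$ \emph{independent of $k$} with $m^{(k)}_{u,v}\le C\min(\mm(u\cnd v),\mm(v\cnd u))$. Proposition~\ref{prop:qpd} is a maximality statement: each individual symmetric lower semicomputable function with per-vertex sums at most $1$ is dominated by the minimum of the two conditional a priori probabilities, but with a multiplicative constant that depends on that function. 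For a family $m^{(k)}$ that is only uniformly lower semicomputable with $\sum_{v\ne u} m^{(k)}_{u,v}\le 1$ for each $k$, no uniform constant follows; the generic way to get one is to pass to a weighted sum such as $\sum_k 2^{-k-1}m^{(k)}$, which yields $m^{(k)}\le 2^{k+1}\cdot O(\min(\mm,\mm))$ — a constant growing like $2^{O(k)}$ that exactly cancels the factor $c=2^{-k}$ in your final chain of inequalities and leaves no gap at all. Careful scheduling of the simulation (which you point to) guarantees lower semicomputability and legality of the play, but it does not, by itself, deliver the $k$-independence of $C$; as written, the conclusion $\D(u,v)-\E(u,v)>k-O(1)$ is not established.

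The paper closes this gap by exploiting the arbitrariness of $c$ once more: a winning strategy for factor $cd$ can be read as a winning strategy in a game where Alice's per-vertex budget is $d$ and Bob's matching factor is $c$, so one takes $c=2^{-k}$ \emph{and} $d=2^{-k}$ in the $k$-th game. Then the limit weights satisfy $\sum_{v\ne u} m^{(k)}_{u,v}\le 2^{-k}$, hence $m=\sum_k m^{(k)}$ is a single symmetric lower semicomputable function with $\sum_{v\ne u} m_{u,v}\le 1$, and proposition~\ref{prop:qpd} is applied once, to $m$, producing one constant that dominates every $m^{(k)}$ simultaneously. With that modification your chain $2^{-\D_U(u,v)}\le\nu(M^*_{u,v})<c\,m^{(k)}_{u,v}\le c\,C\,2^{-\E(u,v)}$ (at an edge where the blind Bob fails in the $k$-th game) does give the unbounded difference, and the rest of your argument — the dual sets $M^*_{u,v}$, their disjointness, and the bound $\nu(M^*_{u,v})\ge 2^{-\D_U(u,v)}$ — is fine.
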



\begin{proof}
Since the factor $c$ is arbitrary, we may strengthen the requirement for Alice and require $\sum_{v\ne u} m_{u,v}\le d$ for some $d>0$. This corresponds to the factor $cd$ in the original game. Given some integer $k>0$, consider Alice's winning strategy for $c=2^{-k}$ and $d=2^{-k}$. We  play all these strategies simultaneously 
against a ``blind'' strategy for Bob that ignores Alice's moves and just follows the optimal machine $U$ used in the definition of information distance. Here are the details.

Consider a prefix-stable computable partial function $U$ that makes the function 
$$
\D_U(u,v) \;=\; \min\{ |p|\colon U(p,u) \mathop{=} v \text{ and } U(p,v)\mathop{=}u\}
$$
minimal. For each edge $u$--$v$ consider the union of the sets $[p]$ for all $p$ such that $U(p,u)=v$ and $U(p,v)=u$ at the same time. This union is an effectively open set, and Bob enumerates the corresponding intervals and adds 
them to the label for the edge $u$--$v$ when they appear in the enumeration. Note that this set is the same for $(u,v)$ and $(v,u)$ by definition. For the limit set $M_{u,v}$ we then have $\nu(M_{u,v})\ge 2^{-\D_U(u,v)}$ by construction (consider the interval that corresponds to the shortest $p$ in the definition of $\D_U(u,v)$).

Let Alice use her winning strategy for $c=2^{-k}$ and $d=2^{-k}$ against Bob. Since Bob's actions and Alice's strategy are computable, the limit values of Alice's weights are lower semicomputable uniformly in $k$. Let us denote these limit values by $m^k_{u,v}$. We know that for every $u$ and $k$ the sum $\sum_{v\ne u} m^k_{u,v}$ does not exceed $2^{-k}$. Therefore the sum 
$$
 m_{u,v} = \sum_k m^k_{u,v}
$$
satisfies the requirement
$$
\sum_{v\ne u} m_{u,v} \le 1
$$
and we can apply proposition~\ref{prop:qpd}, where we let $m_{u,u}=0$. Recall that
$
\E(u,v) = \max\left(\KP(u \cnd v), \KP(v \cnd u)\right).
$ 
This proposition guarantees that
$$
m_{u,v} \le O(\min(\mm(u\cnd v),\mm(v\cnd u))=2^{-\E(u,v)+O(1)}.
$$
If, contrary to the statement of proposition~\ref{prop:gap_qualitative}, 
we have $\D_U(u,v) \le \E(u,v) + O(1)$, 
  then $\E(u,v)$ in the right hand side of the last inequality can be replaced by $\D_U(u,v)$. But this means, by our construction, that Bob wins the $k$th game for large enough~$k$, since the maximal intervals in $M_{u,v}$ are large enough to match $m_{u,v}$ (and therefore $m^k_{u,v}$) for large enough $k$, according to this inequality. 
We get a contradiction that finishes the proof of proposition~\ref{prop:gap_qualitative} for the non-bipartite case, assuming the existence of a uniformly computable winning strategy for Alice. 
\end{proof}

\begin{remark}\label{rem:gap_quantitative}
  The quantitative variant of the above lemma is as follows.
  If the winning strategy of the $k$-th game uses at most $N_k$ strings, 
  then difference between $\D_U - \E$ is at least~$k - O(1)$ on pairs of strings of length~$\lceil \log N_k\rceil$. 
\end{remark}

\subsection{How to win the game}\label{subsec:game-strategy}

Now we present a winning strategy for Alice. It is more convenient to consider an equivalent version of the game where Alice should satisfy the requirement $\sum m_{u,v}\le d$  and Bob should match Alice's weights without any factor, i.e., satisfy the requirement $\nu(M_{u,v})\ge m_{u,v}$. Assume $d$ is a negative power of $2$. 

The idea of the strategy is that Alice maintains a finite set of ``currently active'' vertices, initially very large and then decreasing. The game is split into $N$ stages where $N=1/d$. 
After each stage the set of active vertices and the edge labels satisfy the following conditions.
\begin{itemize}[leftmargin=*]
\item Alice has zero weights on edges that connect active vertices (as we have said, we may assume without loss of generality that Bob has empty labels on these edges, too).
\item For each active vertex, only a small weight is used by Alice on edges that connect it to other vertices (inactive ones; edges to active ones are covered by the previous condition and do not carry any weight); this weight will never exceed $d/2$.
\item More and more space is ``unavailable'' to Bob on each active vertex, 
  since it is already used on edges connecting 
    to inactive vertices. 
\end{itemize}
The amount of ``unavailable space'' for Bob grows from stage to stage until no more space is available and Alice wins. In fact, at each stage the amount of ``unavailable space'' grows by $d$, so Alice needs $N=1/d$ stages to make all space unavailable for Bob; then she makes one more request, i.e., increases a weight between an active vertex and a fresh one, and she wins, since Bob has no ``available space'' to fulfill this request.

In the previous paragraph we used the words ``unavailable space'' informally. What do we mean by unavailable space? Consider some active vertex $x$ and edges that connect it to inactive ones. These edges have some of Bob's labels, which are subsets of the Cantor space. The part of the Cantor space occupied by these labels is not available to Bob for edges between $x$ and other active vertices. Moreover, if Alice requests an interval of size $\eps$, i.e. increases some weight from $0$ to $\eps$, and some part (even a small one) of an interval of this size is occupied, then this interval cannot be used by Bob and is ``unavailable''. In this way the unavailable space can be much bigger than the occupied space, and this difference is the main tool in our argument.\footnote{This type of accounting goes back to G\'acs' paper~\cite{gacs1983} where he proved that monotone complexity and continuous a priori complexity differ more than by a constant, see also~\cite{usv} for the detailed exposition of his argument.} 

Let us explain this technique. First, let us agree that Alice increases only zero weights, and the new non-zero value of the weight
depends on the stage only. At the first stage she uses some very small $\eps_0$, at the second stage she uses some bigger $\eps_1$, etc. (so at the $i$th stage weights $\eps_{i-1}$ are used). We will use values of $\eps_i$ that are powers of~$2$ (since interval sizes in the Cantor space are powers of~$2$ anyway), and assume that $\eps_0\ll \eps_1\ll \eps_2\ldots$. More precisely, we let $\eps_N=d/2$ and assume that $\eps_{i-1}/\eps_i = d/2$. 
\begin{center}
\includegraphics[scale=1]{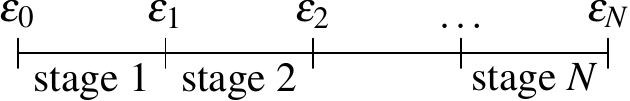}
\end{center}
This commitment about the weights implies that, starting from the $(i+1)$th stage, only the $\eps_i$-neighborhood of the space used by Bob matters. Here by $\eps$-neighborhood (where $\eps$ is a negative power of $2$) of a subset $X$ of the Cantor space we mean the union of all intervals of size $\eps$ that have nonempty intersection with $X$; note that the $\eps$-neighborhood of $X$ increases when $\eps$ increases (or $X$ increases).  

More precisely, let us call an interval \emph{dirty for 
vertex~$x$} (at some moment) if some part of this interval already appears in Bob's labels for edges that connect $x$ to other vertices. 
This interval cannot be used later by Alice. After stage $i$, we consider all the intervals of size $\eps_i$ that are ``everywhere dirty'', i.e., dirty for all active vertices (those that are dirty for some active vertices but not for the others, do not count). The everywhere dirty intervals form the \emph{unavailable space after stage $i$}, and the total measure of this space increases at least by~$d$ at each stage. In other terms, after stage~$i$ we consider for every active vertex $x$ the space allocated by Bob to all edges connecting $x$ with (currently) inactive vertices, and the $\eps_i$-neighborhood of this space. The intersection of these neighborhoods for all active vertices $x$ is the unavailable space after stage~$i$. 

After stage $i$ the total size of unavailable space will be at least $i/N$; recall that $N=1/d$. At the end, after the $N$th stage, we have $\eps_N=d/2$, so the total size of everywhere dirty intervals of size $d/2$ is $N/N=1$. Our strategy also implies that the total weight used by Alice at any vertex is~$d/2$. Finally, Alice makes one more request with weight $d/2$ and wins. Of course, we need that at least 1 vertex remains active after stage $N$, and this will be guaranteed if the initial number of active vertices is large enough.

The picture above places $\eps_i$ between stages since $\eps_i$ is used for accounting after stage $i$ and before stage $i+1$.

\bigskip
\noindent
It remains to explain how Alice plays at stage $i$ using requests of size $\eps_{i-1}$ and creating (new) everywhere dirty intervals of size $\eps_i$ with total size (=the size of their union) at least~$d$. This happens in several substages; each substage decreases the set of active vertices and increases the set of everywhere dirty intervals of size $\eps_i$ (for the remaining active vertices).

Before starting each substage, we look at two subsets of the Cantor space:
\begin{itemize}
\item[(a)] the set of intervals of size $\eps_{i-1}$ that were everywhere dirty after the previous stage;
\item[(b)] the set of intervals of size $\eps_i$ that are everywhere dirty now (after the substages that are already performed).
\end{itemize}
The second set is bigger for two reasons. First, we changed the granularity (recall the $\eps_i$-neighborhood of some set can be bigger than $\eps_{i-1}$-neighborhood). Second, the previous substages create new everywhere dirty intervals of size $\eps_i$. Our goal is to make the second set larger than the first one; the required difference in size is~$d$. If this goal is already achieved, we finish the stage (no more substage are necessary). If not, we initiate a new substage that creates a new everywhere dirty $\eps_i$-interval.

\begin{center}
\vbox{%
\begin{center}%
\includegraphics[scale=1]{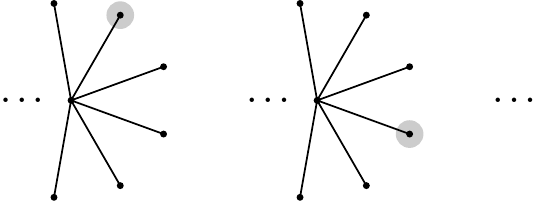}\\[1ex]
Alice's strategy for a substage
\end{center}%
}%
\end{center}

The key idea is that Alice makes requests for all edges of a large star. The center is a fresh vertex (all weights are zero), and the other vertices are active vertices. She may use a lot of weight for the central vertex, because the sum of the weights could be up to~$d$. 
Still for all other vertices of the star only one new edge of non-zero weight $\eps_{i-1}$ is added. Bob has to allocate some intervals of size at least $\eps_{i-1}$ for every edge in the star, and these intervals should be disjoint (due to the restrictions for the center of the star). 
The total measure of these intervals is $d$, and all of them are outside the zone (a). Therefore, since the goal is not yet achieved, one of these new intervals used by Bob is also outside the zone (b). Corresponding neighbors of the central vertex are indicated with a gray circle in the above picture.

Alice does the same for many stars (assuming that there are enough active vertices) and gets many new $\eps_i$-intervals outside the (b)-zone (at least one per star). Some of them have to coincide: if we started with many stars, we may select many new active vertices that have the same new $\eps_i$-dirty interval. Making all other vertices inactive, we get a smaller (but still large if we started with a large set of active vertices) set of active vertices and a new everywhere dirty $\eps_i$-interval. The goal of a substage is achieved. We look again at the set of everywhere dirty $\eps_i$-intervals (with the new intervals added) to decide whether the difference between (b) and (a) is at least $d$, or whether a new substage is needed. The maximal number of substages needed to finish the stage is $d/\eps_i$, since each substage creates a new $\eps_i$-interval.

The same procedure is repeated for all $N$ stages. We need to check that Alice uses at most $d/2$ weight connecting some active vertex to all inactive vertices. For that, we look at the ``amplification factor''. In the construction Alice uses a single weight $\eps_{i-1}$ (for every active vertex) to get a new dirty interval of size $\eps_i$, therefore the amplification factor is $\eps_i/\eps_{i-1}=2/d$. Since the total size of dirty intervals is at most $1$, the total weight used by Alice (for each active vertex) never exceeds $d/2$, as required.

It remains to explain why Alice can choose enough active vertices in the beginning, so she will never run out of them in the construction and at least 1 vertex remains active at the end (so the last request of size $d/2$ wins the game). Indeed, the backwards induction shows that for each substage of each stage there is some finite number of active vertices that is sufficient for Alice to follow her plan till the end. If we want to upper bound the length on the strings where a given value of the difference~$\D_U-\E$ is achieved, we need to compute this number explicitly. 
But the qualitative statement of  proposition~\ref{prop:gap_qualitative}, (the unbounded difference) is already proven. 

\bigskip
\noindent
We now perform this computation. 
As explained in remark~\ref{rem:gap_quantitative}, we need to compute the number of different strings for which a strategy with 
$d = 2^{-2k}$ makes (recall that in lemma~\ref{lem:gapGameImpliesTheorem} we used strategies with $c = d = 2^{-k}$, but above we assumed $c = 1$, corresponding to $d = 2^{-2k}$). 
The logarithm of this number provides us with the length for which the gap is at least $k-O(1)$. (With a more careful analysis we could obtain $2k - O(\log k)$, but this does not matter for the statement of  theorem~\ref{th:gap}.) 

Recall that $N = 1/d$. We have $1/\eps_0 \le 2^{O(N^2)}$ by the choices $\eps_N = d/2$ and $\eps_{i-1}/\eps_i = d/2$. 
There are $1/\eps_0$ intervals of size $\eps_0$ and each star contains $2N$ active strings, thus the fraction of active vertices that remain after the first substage is $\eps_0 /(2N) \le O((\eps_0)^2)$.
The number of substages is $d/\eps_0 \le 1/\eps_0$, hence the total fraction of active vertices that are lost during the first stage is at most ${\eps_0}^{O(1/\eps_0)}$. The fraction of vertices lost in future stages increases double exponentially, and hence the same expression determines the total number of active vertices that we need to start with (in order to end with at least 1 active vertex). These vertices can be associated to strings of length 
\[
  O(1/\eps_0) \cdot \log (1/\eps_0) \;\le\; 2^{O(N^2)} \;\le\; 2^{O(2^{4k})}.
\]
Taking twice the logarithm, we conclude that the difference on $n$-bit strings is at least $\tfrac 1 4 \log \log n - O(1)$. 
We have proven theorem~\ref{th:gap} for the prefix-stable non-bipartite case. 
The prefix-free case is a corollary (the distance becomes bigger), but for the bipartite case we need to adapt the argument, and this is done in the next section.

\begin{remark}\label{rem:lowerbound}
  With a more careful argument, the difference in theorem~\ref{th:gap} can be shown to be at least~$\log \log n - O(\log \log \log n)$.
\end{remark}

\subsection{Modifications for the bipartite case}\label{subsec:game-bipartite}

In the bipartite case the game should be changed. Namely, we have a complete bipartite graph where left and right parts contain all strings. Alice increases weights on edges; for each vertex (left or right) the sum of the weights for all adjacent edges should not exceed some $d$ (the parameter of the game). As before,  Alice increases weights $m_{x,y}$, and at each moment these weights are symmetric, i.e., $m_{x,y} = m_{y,x}$. In our strategy, this requirement will not matter, because we will only increase weights for pairs $(x,y)$ in a product set $X \times Y$ with disjoint sets $X$ and $Y$. Thus, we drop the requirement of symmetry and require instead that
$$
\forall x\, \left(\sum_y m_{x,y}\le d\right) \quad \text{and} \quad
\forall y\, \left(\sum_x m_{x,y}\le d\right).
$$

Bob replies by assigning increasing sets $M_{x,y}$ to edges such that $\nu(M_{x,y})\ge m_{x,y}$. For each $x$ the sets $M_{x,y}$ (with different $y$) should be disjoint; the same should be true for sets $M_{x,y}$ for fixed $y$ and different $x$. The sets $M_{x,y}$ and $M_{y,x}$ can be different, (but this does not matter, since our strategy uses pairs in a product $X \times Y$ of disjoint sets).

Again, to prove that the bipartite prefix-free information distance exceeds $\E(x,y)$ 
by a constant, we show that for every $d$ Alice has a computable (uniformly in $d$) winning strategy in this game. Then we consider games with total weight $2^{-k}$ and condition $\nu(M_{x,y})\ge 2^{-k}m_{x,y}$. We let Alice play her winning strategy against the ``blind'' strategy for Bob that (for the edge $x$--$y$) enumerates all intervals $[p]$ such that $U(p,x)=y$ and $U'(p,y)=x$ at the same time.

\begin{figure}
\begin{center}
\includegraphics[scale=0.8]{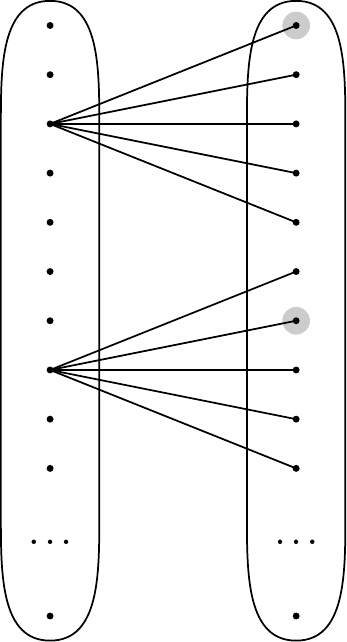}
\end{center}
  \caption{Stars in the bipartite strategy} \label{fig:bipartite}
\end{figure}

The winning strategy for Alice works in almost the same way.  Alice keeps the list of active vertices 
on the right and the centers of the star are chosen on the left, see figure~\ref{fig:bipartite}.  
As before, she uses fresh strings for these centers.
Thus only weights on edges $(x,y) \in X \times Y$ are increased, where the set of centers $X$ is disjoint from the set $Y$ of active vertices.

In each center of a star the sum of Alice's weights is $d$, and this implies the left condition on~$m_{x,y}$ above. 
In each right node, the sum of all weights is $d/2$ by the same density argument as before, and this implies the right condition.
After the last stage, there is an active vertex for which all intervals of size $\eps_N = d/2$ are dirty. 
Alice wins by making a final request.
The quantitative analysis does not change, and theorem~\ref{th:gap} for the bipartite distances is proven.

%

\section{Proof of theorem~\ref{th:exact} }\label{sec:exact}

Recall that $ \E(x,y) = \max \left( \KP(x \cnd y), \KP(y \cnd x)\right)$.  We restate the theorem.

\begin{theorem*}
  \theoremExact
\end{theorem*}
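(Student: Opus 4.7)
The lower bound $\D_U(x, y) \ge \E(x, y) - O(1)$ is immediate from the definitions, so my plan is to construct a prefix-stable machine $U$ achieving the matching upper bound. By theorem~\ref{thm:apriori} and proposition~\ref{prop:qpd}, this reduces to producing, for each pair $(x, y)$ of length-$n$ strings with $\E(x, y) \ge 1.01 \log n$, a single interval $[p_{x, y}]$ in the Cantor space of size $\Omega(2^{-\E(x, y)})$, with $p_{x, y} = p_{y, x}$, such that at any fixed vertex the intervals assigned to its various partners are pairwise disjoint and the assignment is enumerable. Thus the new content beyond theorem~\ref{thm:apriori} is to upgrade a measure bound into a single-interval bound.

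I would construct the assignment by online enumeration. At each stage, enumerate triples $(x, y, k)$ with $|x| = |y| = n$, $k \ge 1.01 \log n$, and both $\KP(x \cnd y) < k$ and $\KP(y \cnd x) < k$ witnessed by the current approximation, and process the pairs in ascending order of their currently best upper bound $k$. For each new triple I greedily choose a binary string $p$ of length $k + c$ (with $c$ a sufficiently large constant) such that $[p]$ is disjoint from every previously assigned interval at vertex $x$ and at vertex $y$; then I set $U(p, x) = y$ and $U(p, y) = x$. The semimeasure inequality $\sum_{y'} 2^{-\E(x, y')} = O(1)$ keeps the total measure allocated at any vertex below $O(2^{-c})$. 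In the ideal case where all previously assigned intervals at $x$ have size at least $2^{-k-c}$, they block at most $2^{k + c} \cdot O(2^{-c}) = O(2^k)$ length-$(k+c)$ strings, and similarly at $y$, leaving plenty of free candidates once $c$ is large enough.

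The main obstacle is that witnesses for $\KP$ improve only over time, so pairs do not become processable in the order of their final $\E$-values. I would handle this by allocating a fresh interval whenever the upper bound on $\E(x, y)$ improves; the total measure committed to a single pair is then a geometric sum dominated by $O(2^{-\E(x, y) - c})$, preserving the semimeasure accounting. The side effect, however, is that earlier rough allocations leave behind intervals shorter than $2^{-k - c}$, and each of these blocks one length-$(k+c)$ string. Controlling the count of these ``small'' blocking intervals is the technical heart of the argument, and this is exactly where the hypothesis of the theorem enters: restricting to $n$-bit strings caps the number of potential partners of each vertex by $2^n$, while the condition $\E \ge 1.01 \log n$ forces all interval sizes to be at most $n^{-1.01}$, giving a geometric series in the levels $k$ that converges to a constant. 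I expect the constant $1.01$ to come out of this convergence computation and to be essentially optimal, consistent with theorem~\ref{th:gap}, which exhibits an $\Omega(\log \log n)$ gap just below the threshold.
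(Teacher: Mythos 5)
Your reduction is set up sensibly (the lower bound is trivial, and the real content is upgrading a measure bound to a single-interval bound for an enumerable, per-vertex-disjoint assignment), but the greedy allocation at the heart of your construction fails, and the failure is exactly the fragmentation phenomenon that the paper's own theorem~\ref{th:gap} is built on. The problem is not measure but cardinality. When a pair $(x,y)$ with $\E(x,y)=k\approx 1.01\log n$ is finally witnessed, the vertex $x$ typically already carries allocations for a huge number of other partners $y'$ made at higher levels $k'>k$: every $n$-bit string $y'$ satisfies $\KP(x\cnd y')\le n+O(\log n)$, these trivial witnesses are enumerated early, and each such allocation is a dyadic interval of size as small as $2^{-n-O(\log n)}$. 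Each of these small intervals sits inside exactly one candidate interval of size $2^{-k-c}$ and blocks it entirely, even though it occupies a negligible fraction of it. There are only $2^{k+c}=O(n^{1.01}2^{c})$ candidates at the critical level, while the number of small blocking intervals at $x$ can grow to order $2^{n}\mathrm{poly}(n)$ (one or more per partner already processed), so the greedy choice is blocked almost immediately. Your proposed rescue --- ``a geometric series in the levels $k$ that converges to a constant'' --- does not exist: the semimeasure bound controls total allocated \emph{measure}, which is indeed $O(2^{-c})$, but the obstruction is the \emph{number} of intervals smaller than the current request size, and that number is not controlled by any measure or geometric argument. Delayed witnessing is unavoidable for a semicomputable upper bound, so one cannot process pairs in order of their final $\E$-values.

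This is precisely why the paper does not use a greedy strategy. Its proof goes through a game (lemma~\ref{lem:strategyImpliesTheoremExact}) in which Bob must keep requests of comparable size \emph{localized}: the Cantor space is cut into blocks, blocks are grouped into regions assigned per string via a list of index sets with extractor-like intersection properties (lemma~\ref{lem:extractorlike}, proved by the probabilistic method), and failed allocations are handled by a blame-counting argument plus reserve (``extra'') regions, so that small allocations for one string never pulverize the space available to another string's larger requests. The threshold $1.01\log n$ does not come from a convergence computation in a greedy accounting; it comes from the requirement $\eps\le 1/(cm^{3}n)$ on request sizes in proposition~\ref{prop:strat_blaming} and from running two copies of the strategy with granularities $m=n$ and $m=4\log n$ (section~\ref{ss:blamedRequests}), which covers all sizes up to roughly $(\log n)^{3}/n$, i.e.\ any exponent strictly above $1$. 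A further, minor point: the theorem concerns all four prefix distances, so the upper bound must be established for the largest of them, the non-bipartite prefix-free distance; your disjoint-interval assignment would in fact yield a prefix-free machine, but you should state the construction that way rather than for a prefix-stable $U$. As written, however, the key step of your argument --- finding a free candidate interval at the moment a small value of $\E(x,y)$ is witnessed --- is exactly the step that breaks, so the proposal does not prove the theorem.
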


\subsection{It is enough to win a game}\label{ss:gameExact}

We only need to prove the theorem for the largest of the 4 distances, 
which is the non-bipartite prefix-free information distance. 
We first present the game that corresponds to this distance.
It is similar as before. 
The main differences are that it is played on strings of a fixed length and Bob's requirement 
must only hold for small weights.
This time, we need a winning strategy for Bob.

\medskip
\noindent
The game has 3 parameters: a positive integer $n$, a real number $d>0$ and a real number $\delta > 0$.
It is played on a graph with $2^n$ vertices that are labelled by $n$-bit strings.
Initially these weights are all~$0$. 

At her turn, Alice makes  {\em requests} of the form~$(\{u,v\},\eps)$ with $\eps \le \delta$, 
(in the language of the previous game, this means that she increases weight $m_{u,v}$ by~$\eps$). 
For each vertex $u$, the sum of request sizes of requests of the form $(\{u, \cdot\},\cdot)$ 
should be at most $d$, (this is the same as the requirement~$\sum_v m_{u,v} \le d$ in the previous game). 

As before, for each edge~$(u,v)$ Bob maintains a subset $M_{u,v}$ of the Cantor space. 
Initially, these sets are empty.
Bob enumerates basic intervals $[q]$ in $M_{u,v}$. 
He may only enumerate an interval in $M_{u,v}$ that
does not intersect the current set~$M_{u,v}$ and
does not intersect $M_{u,v'}$ for each $v' \not= v$.


In order to win, for each request $(\{u,v\},\eps)$ Bob should enumerate an interval
interval~$[q]$ in~$M_{u,v}$ of size at least~$\eps$. 
If he did not already do this before the request, he should do it immediately after the request, and if he fails to do so, he looses.

\begin{remark}
  In the next lemma, it is shown that the theorem follows from a winning strategy for Bob for some $d>0$. 
  Thus a constant factor in $d$ is not important for us, (it is absorbed in the additive $O(1)$ term of the equality in the theorem).
  One might be interested to compare Bob's requirement to the one in the previous section, and we discuss 3 differences. 

  Firstly, we compare the interval size of a reply not to the total weight, 
  but only to the size of the last increase. 
  But this does not matter, because we can postpone Alice's moves so that for a fixed edge, the request sizes increase geometrically. 
  After this, the requirements differ by a constant factor which is equivalent to a constant rescaling of~$d$.

  The second difference, is that previously we considered winning in the limit, 
  but now we require that Bob should give the required reply immediately.
  As said before, this does not matter because
  Alice may postpone her moves as long as she is in a winning position.

  The last difference is that previously, we considered the maximal size $\nu(M_{u,v})$ of an interval that is a subset of~$M_{u,v}$, 
  and this means that a largest interval could be gradually increased.
  Now we consider the maximal size of an interval $[q]$ in a single reply. 
  (Thus, if a reply is needed, all previously allocated intervals in $M_{u,v}$ are irrelevant.)
  This is because we prove an upper bound for the prefix-free distance, which is the largest distance, 
  while previously, we considered the prefix-stable distance.
  We do not know whether these requirements are equivalent when the restriction $\epsilon \le \delta$ is removed. 
  If they are equivalent, then the prefix-stable and the prefix-free distances are always the same, and this is an open question. 
  (With requests of size at most~$\delta \le 1.01 \log n$, the theorem implies that these games are indeed equivalent up to constant rescaling of~$d$, 
  but we do not know how to prove this more directly.)
\end{remark}


\begin{lemma}\label{lem:strategyImpliesTheoremExact}
  If there exist $d>0$ and a function $\delta$ such that $\delta(n) \ge \Omega(n^{-1.01})$ for all $n$ and such that Bob has a winning strategy in the above game,
  then theorem~\ref{th:exact} is true.
\end{lemma}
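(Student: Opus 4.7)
\medskip
\noindent\textbf{Proof plan.}
Since the non-bipartite prefix-free distance dominates the other three prefix distances, it suffices to prove the upper bound $\E(x,y)+O(1)$ for the former; the reverse inequality is immediate from the definitions, as any program witnessing the distance is also a conditional program in both directions. The plan is to play Bob's winning strategy against a fixed computable ``blind'' strategy for Alice, and to read off a prefix-free machine $U$ directly from the intervals Bob is forced to allocate.

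Fix $n$ and run the game with parameters $(n, 2, \delta(n))$. Alice ignores Bob and uses only the enumeration from above of the prefix complexities. For every pair $u\ne v$ of $n$-bit strings she watches the integers $k\ge\lceil 1.01\log n\rceil$ currently witnessed to satisfy both $\KP(u\cnd v)\le k$ and $\KP(v\cnd u)\le k$, in order of decreasing $k$. The first time such a $k$ appears for the edge $\{u,v\}$, and each subsequent time a strictly smaller such $k$ appears, she issues the request $(\{u,v\},2^{-k})$. Each request has size at most $2^{-\lceil 1.01\log n\rceil}\le n^{-1.01}\le\delta(n)$ (for all sufficiently large $n$; finitely many small $n$ are absorbed in the $O(1)$ term). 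For every vertex $u$ the per-vertex total is a geometric sum dominated by
\[
  \sum_{v\ne u}\sum_i 2^{-k_i(u,v)}\;\le\;2\sum_{v\ne u}2^{-\E(u,v)}\;\le\;2\sum_{v\ne u}2^{-\KP(v\cnd u)}\;\le\;2,
\]
since $2^{-\KP(\cdot\cnd u)}$ is a semimeasure. Thus Alice respects the budget $d=2$ and the size bound, and the strategy is a legitimate play.

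Because Alice is computable and Bob plays his computable winning strategy against her, the intervals $[q]$ that Bob enumerates into each $M_{u,v}$ can be listed algorithmically. Define $U_n(q,u)=v$ and $U_n(q,v)=u$ whenever $[q]$ is enumerated into $M_{u,v}$. Bob's disjointness constraints, both within each $M_{u,v}$ and across $M_{u,v}$ and $M_{u,v'}$ for $v\ne v'$, guarantee that for every fixed $u$ the set $\{q:U_n(q,u)\text{ defined}\}$ is prefix-free. Amalgamate the $U_n$ by $U(p,x)=U_{|x|}(p,x)$; this is prefix-free in its first argument because distinct input lengths are handled by independent submachines. For any $n$-bit pair $(x,y)$ with $\E(x,y)\ge 1.01\log n$ (hence $\ge\lceil 1.01\log n\rceil$, as $\E$ is integer-valued), Alice eventually issues the request $(\{x,y\},2^{-\E(x,y)})$, and Bob must reply with some $[q]\subseteq M_{x,y}$ of size at least $2^{-\E(x,y)}$, so $|q|\le \E(x,y)$ with $U(q,x)=y$ and $U(q,y)=x$. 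Hence $\D_U(x,y)\le \E(x,y)$, and optimality of the machine defining the non-bipartite prefix-free distance converts this into the claimed $\E(x,y)+O(1)$ bound.

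The main point to verify is the global prefix-free property of the amalgamated $U$: for each fixed $x$ it must hold across the possibly many requests Alice issues on edges incident to $x$ and across all edges $\{x,v\}$ simultaneously. This is exactly what Bob's disjointness rules enforce on the family $\{M_{x,v}:v\ne x\}$, and the amalgamation over input lengths introduces no interaction since each $U_n$ only reads inputs of length $n$. I expect the only genuine subtlety to be the accounting that shows Alice's per-vertex requests fit into the constant budget $d=2$ uniformly in $n$; this is handled by the geometric sum above combined with the semimeasure bound, which is why the hypothesis $\E\ge 1.01\log n$ (forcing each request size below $\delta(n)$) is precisely what is needed.
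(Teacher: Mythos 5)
Your skeleton is the same as the paper's (play Bob's winning strategy against a blind Alice driven by upper approximations of $\E$, and read a prefix-free machine off Bob's allocations), but one step fails as written: you run the game with budget $d=2$ and have Alice request $2^{-k}$ directly. The hypothesis of the lemma only supplies a winning strategy for Bob for \emph{some} fixed $d>0$, which is necessarily small; for $d=2$ no winning strategy can exist at all, because Alice may spread total weight $2$ over distinct edges incident to one vertex $u$, while Bob's replies for those edges must be pairwise disjoint intervals of the Cantor space, hence of total measure at most $1$. So the strategy you invoke in the game with parameters $(n,2,\delta(n))$ is simply not available, and your per-vertex bound $\sum_v\sum_i 2^{-k_i(u,v)}\le 2$ only certifies legality in a game Bob cannot win. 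A related slip: the claim $2^{-\lceil 1.01\log n\rceil}\le\delta(n)$ ``up to finitely many small $n$'' misreads the $\Omega$; if $\delta(n)=c\,n^{-1.01}$ with $c<1$, the inequality fails for \emph{every} $n$, not finitely many.

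Both problems are repaired by the scaling the paper uses: assume without loss of generality $\delta(n)=d\,n^{-1.01}$ (decreasing $d$ if $\delta$ is smaller by a constant factor) and let Alice request $d\,2^{-k}$ rather than $2^{-k}$. Then her total weight at each vertex is at most (a constant times) $d$, so she plays legally in the game for which Bob's winning strategy is actually hypothesized; every request issued for a pair with $\E\ge 1.01\log n$ has size at most $d\,2^{-1.01\log n}=\delta(n)$; and the only cost is $\log(1/d)=O(1)$ extra bits, absorbed in the $O(1)$ of theorem~\ref{th:exact}. A secondary gap: you assert that Bob ``plays his computable winning strategy,'' but the hypothesis only asserts existence. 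The paper first reduces to a finite game (request sizes that are powers of two and at least $2^{-n-2}$, at the price of halving $d$), so that a winning strategy can be computed by exhaustive search uniformly in $n$; some such step is needed before Bob's play can be turned into the machine $U$.
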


\begin{remark}\label{rem:makeGameFinite}
  The game tree is infinite because request sizes $\eps$ can be real numbers. 
  However, we explain that the assumption of the lemma does not change if we use a finite variant of the game.
  In other words, the existence of a winning strategy does not change up to a constant rescaling of~$d$.
  
  Recall from the previous remark, that we may assume that Alice's request sizes increase geometrically.
  For example, we may assume she uses only negative powers of~2. Now consider the game in which Alice's requests have size at least~$2^{-n-2}$.
  Imagine that Bob connects all pairs of strings with intervals of size~$2^{-n-2}$.
  He can do this using one half of the Cantor space. 
  On the other half, he plays a scaled version of a winning strategy for the restricted game (with request sizes larger than~$2^{-n-2}$).
  Thus after a decrease of $d$ by a factor 2, he also wins in the unrestricted game. 

  Hence, the game is finite, and a winning strategy can be computed given the value of the parameters, for example, by exhaustive search.
  Note that we may assume that Alice uses at most $m \le n + 2$ different request sizes.
\end{remark}

\begin{proof}[Proof of lemma~\ref{lem:strategyImpliesTheoremExact}.] 
  We may assume $\delta(n) = d\cdot n^{-1.01}$, because if $\delta(n)$ is smaller (by a constant factor), 
  we decrease~$d$.
  We also use a family of winning strategies that can be computed uniformly in~$n$, 
  which exists by the remark above. 

  We construct a prefix-free machine $V$ such that for all $n$ and all pairs $(x,y)$ of different $n$-bit strings,
  there exists a program $q$ for which $V(q,x) = y$, $V(q,y) = x$ and $|q| \le \E(x,y) + \log \tfrac{1}{d}$, 
  provided $\E(x,y) \ge 1.01 \log n$.

  \medskip
  \noindent
  {\em Construction of $V$.}  On  input $(q,x)$, machine $V$ runs
  Bob's winning strategy for strings of length $n = |x|$ 
  against Alice's strategy in which she has limit weights $m_{u,v} = d2^{-\E(u,v)}$. 
  More precisely, she uses an (integer) approximation of~$\E$ from above, 
  and for each update of an approximated value of~$\E(u,v)$ to a value~$k$, 
  she generates a request of the form $(\{u,v\},d2^{-k})$, provided that $d  2^{-k} \le \delta(n)$.
  Bob replies by enumerating intervals $M_{u,v}$. If for some $v$, 
  the interval $[q]$ is enumerated in some set~$M_{x,v}$, 
  then $V(q,x)$ halts with output~$v$.

  \medskip
  \noindent
  Note that Alice's requests indeed satisfy the restriction $\sum_v m_{u,v} \le d$, since $\sum_v 2^{-\E(u,v)} \le 1$.
  For every pair $(x,y)$ with $\E(x,y) \ge 1.01\log n$, Alice makes requests of size at most $d 2^{-1.01\log n} = \delta(n)$.
  The winning condition implies that at some point an interval $[q]$ 
  with $|q| \le \log \tfrac{1}{d} + \E(x,y)$ is enumerated in $M_{x,y}$. 
  By construction, this implies $V(q,x) = y$ and $V(q,y) = x$. 
  Thus the information distance defined by $V$ is at most~$\E(x,y) + O(1)$.
 
  We show that $V$ is a prefix-free machine. 
  For a fixed~$y$ and different $u$, the sets $M_{u,y}$ are disjoint.
  Moreover, each time a new set $[q]$ is enumerated in $M_{u,y}$, 
  it does not overlap with intervals that were previously enumerated in $M_{u,y}$.
  (The last requirement is not needed for the prefix-stable distance.)
  Thus, for each $y$, the set of programs $p$ for which $V(p,y)$ halts, is prefix-free.
  The prefix-free non-bipartite information distance defined for machine $V$ satisfies the conditions of theorem~\ref{th:exact}, and 
  hence the same holds when defined with an optimal machine. 
\end{proof}

\subsection{Easy strategies for Bob}\label{ss:easyStrat}

Recall that Alice's winning strategy in the previous section was to force Bob's allocations to be spread,
i.e., somewhat smaller requests are uniformly distributed over the Cantor space, 
and no space is available for large requests. 
A winning strategy for Bob must be able to avoid this. 
In other words, Bob must ensure that requests of the same size are allocated in a few contiguous intervals of the Cantor space.
As we know from Alice's winning strategy above, Bob can only do this for requests that are sufficiently small. 

We first consider variants of the game where Alice satisfy strong restrictions and where Bob can easily localize requests of the same size.
Afterwards, we weaken these restrictions gradually.
We always assume that requests sizes $\eps$ are are negative powers of~$2$.
The first variant is the most basic one, and the strategy is obtained from the proof of the characterization of the plain distance.


\medskip
\noindent
\textbf{Variant A.} 
Alice can only make requests of a fixed size~$\eps$.

\medskip
\noindent
 {\em Winning strategy for $d=\tfrac 1 2$.} 
Bob can win using a greedy strategy. 
Indeed, the game is equivalent to the edge-coloring problem in the proof 
of  theorem~\ref{th:bglvz}, because increasing a weight from $0$ to $\eps$ 
corresponds to enumerating an edge in $R_{\log (1/\eps)}$, and 
colors can be associated to intervals of size~$\eps$.

\bigskip
\noindent
In the following example we show that if for each string, the distribution of request sizes is the 
same, then Bob can win by playing several copies of this greedy strategy in parallel.

\medskip
\noindent
\textbf{Variant B.} 
Both players are given sizes $\eps_1, \ldots, \eps_m$ and positive integers $N_1, \ldots, N_m$ such that $\eps_1 N_1 + \ldots + \eps_m N_m \le d$. 
For each string $u$ and each size $\eps_i$, Alice can make at most $N_i$ requests of the form~$(\{u,\cdot\}, \eps_i)$.

\medskip
\noindent
{\em Winning strategy for $d = \tfrac 1 4$.} 
We provide a strategy for $d = \tfrac 1 2$ assuming that all $N_i$ and hence $\eps_iN_i$ are powers of~2. 
This is enough, because rounding up may at most double the sum
$\eps_1 N_1 + \ldots + \eps_m N_m$.
To each string $u$, we associate a copy $\Omega_u$ of the Cantor space. 
If Bob replies by enumerating an interval $[q]$ in $M_{u,v}$, we say that he {\em allocates} 
$[q]$ both in $\Omega_u$ and $\Omega_v$. 
In the beginning of the strategy, Bob divides each Cantor space $\Omega_u$ in regions, 
and associates each size $\eps_i$ to a region of size~$2\eps_iN_i$.
All Cantor spaces $\Omega_u$ are partitioned in the same way.
When given a request of size $\eps_i$, Bob uses the above greedy strategy inside the corresponding region.
This strategy works inside each region, and hence Bob wins.

\bigskip
\noindent
We now consider the same variant as above, but Alice can choose the values $N_1, \ldots, N_m$ during the game.

\medskip
\noindent
\textbf{Variant C.} 
The players are given $\eps_1, \ldots, \eps_m$.
Let $N_{i,u}$ be the number of requests of the form $(\{u,\cdot\}, \eps_i)$ during the whole game. 
Alice's requests should satisfy: 
\[
  \sum_{i\le m} \eps_i \cdot \max_{u}   N_{i,u}   \;\le\;   d\,.
  \]

\medskip
\noindent
 {\em Winning strategy for $d = \tfrac 1 4$ if $m$ is a power of $2$.} 
Bob creates $2m$ regions of equal size in the Cantor space. 
The idea is to play a similar strategy as above and dynamically associate request sizes to regions when needed.
Initially, all regions are unassigned.
Given a request $(\{u,v\},\eps_i)$ Bob searches for an interval inside the regions associated to~$\eps_i$ that is free both for $u$ and~$v$.
If such an interval exists, it is allocated and we are finished.
Otherwise, he associates a new region to the request size~$\eps_i$, and allocates some interval in it.

What is the maximal number~$r$ of assigned regions that can appear?
Let $r_i$ be the number of regions associated to size~$\eps_i$. Thus $\sum_i r_i = r$.
Assume a request $(\{u,v\},\eps_i)$ can not be allocated. Then all the measure of regions associated to $\eps_i$ is either 
allocated in $u$ or in $v$. Thus $\max(N_{i,u}, N_{i,v}) \ge \tfrac 1 2 \cdot \tfrac{r_i}{2m}$. 
When assigning a new region, the value of $r_i$ increases by 1. Thus at any moment we have
\[
\eps_i  \max_u N_{i,u} \;\ge\; (r_i - 1) \cdot \tfrac{1}{2} \cdot \tfrac{1}{2m}.
\]
Summing over $i$, we obtain $d \ge (r-m)/(4m)$.
Thus for $d=\tfrac 1 4$, at most~$r = 2m$ regions can be assigned, and the strategy can always assign a new region when needed.

\subsection{Blaming strings for failed requests}\label{ss:givingBlame}

In the previous subsection, the distribution of request sizes involving a fixed string, is the same for each string. 
Now we allow this distribution to vary. For example, one string may only receive requests of size $\eps_1$, while another 
string only receives requests of size~$\eps_2$. 
In this case, we can no longer associate request sizes to regions in the same way for all strings.

\medskip
\noindent
\textbf{Variant D.}
The players are given $\eps_1, \ldots, \eps_m$
and for each string $u$, a list $N_{u,1}, \ldots, N_{u,m}$ of non-negative integers with $\sum_i \eps_i N_{u,i} \le d$.
For each $u$ and size $\eps_i$, Alice may make at most $N_{u,i}$ requests of the form $(\{u,\cdot\}, \eps_i)$.

\medskip
\noindent
For some choices of $\eps_i$ and $N_{u,i}$, Alice has a winning strategy.
Indeed, in Alice's winning strategy in the proof of theorem~\ref{th:gap}, requests sizes can be fixed in advance.
Therefore, the existence of a winning strategy for Bob requires that all request sizes $\eps_i$ are sufficiently small.

All strategies that we consider below involve regions and we can always use the technique from variant C to associate 
sizes to regions dynamically for each individual string (more details below). 
Therefore, knowing all $N_{u,i}$ in advance does not help Bob.
After dropping the requirement and using request sizes $2^{-n-2}, 2^{-n-1}, \ldots, 2^{\lfloor \log \delta\rfloor}$, 
we obtain a variant to which lemma~\ref{lem:strategyImpliesTheoremExact} can be applied, see remark~\ref{rem:makeGameFinite}.

As an intermediate step, we now consider another way to simplify the task for Bob: we allow him to ignore a few requests.
If he does not make an allocation for a request, 
he must blame one of the strings of the request.

\medskip
\noindent
\textbf{Variant E.}
The players are given $\eps_1, \ldots, \eps_m$ and an integer~$T$.
Given a request $(\{u,v\}, \eps_i)$, Bob must either allocate an interval, 
{\em blame}~$u$, or {\em blame}~$v$. 
During the whole game, a given string may be blamed at most~$T$ times.

\begin{proposition}\label{prop:strat_blaming}
  Assume $d$ is small, $c$ is large and $\eps_i \le 1/(cm^3n)$ for all $i \le m$.
  Bob has a strategy in \textnormal{variant E} in which each string is blamed at most $O(m^2)$ times.
\end{proposition}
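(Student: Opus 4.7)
The plan is to combine the dynamic region-association idea from variant~C with per-string bookkeeping, and to use blames only when the ``shared region'' between two endpoints of a request either saturates for one of them or fails to overlap. The key counting will be that each string's local reservoir of associated regions is controlled by its weight budget, which is $d=O(1)$, so the relevant quantities scale with $m$ and not with the much larger $1/\eps_i$.

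First I would fix a pool of $P=O(m)$ dyadic regions of equal size $\Omega(1/m)$ partitioning the Cantor space. For each pair $(u,i)$, Bob maintains a dynamic list $L(u,i)$ of pool regions (initially empty) inside which all size-$\eps_i$ allocations at $u$ will be placed; the lists $L(u,i)$ are kept pairwise disjoint across $i$ for each fixed $u$, so different-size allocations at the same string never conflict. Upon a request $(\{u,v\},\eps_i)$, Bob searches for a region $H\in L(u,i)\cap L(v,i)$ containing a dyadic $\eps_i$-subinterval disjoint from every prior interval at $u$ or $v$, and enumerates it in $M_{u,v}$ if found. Otherwise, he extends whichever of $L(u,i),L(v,i)$ is the natural candidate with a pool region that creates a fresh common non-saturated choice, and charges one blame to the extended string.

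The counting then runs as follows. An extension of $L(u,i)$ occurs in one of two circumstances: either a pool region was saturated for $u$ at size $\eps_i$ (consuming $\Omega(1/m)$ of $u$'s $\eps_i$-budget, which is at most $d$, so at most $O(md)=O(m)$ such events per pair $(u,i)$), or the lists $L(u,i)$ and $L(v,i)$ were disjoint (in which case $|L(u,i)|$ grows by~$1$ and is bounded above by $P=O(m)$). Either way, at most $O(m)$ extensions occur per pair $(u,i)$, which sums to $O(m^2)$ blames per string across the $m$ sizes. The main obstacle I expect is the extension step itself: whenever needed, a pool region must exist that can be added to $L(u,i)$ or $L(v,i)$ and that yields a free $\eps_i$-subinterval for both endpoints. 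This is precisely where the hypothesis $\eps_i\le 1/(cm^3n)$ does the work --- it guarantees at least $cm^2n$ dyadic $\eps_i$-subintervals per pool region, so a non-saturated region is essentially empty at the $\eps_i$-scale, while the total saturated measure from either endpoint is at most $d\ll 1$, leaving more than half of the pool available as an extension target.
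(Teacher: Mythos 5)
You have a genuine gap, and it sits exactly where the difficulty of Variant E lies. Your construction uses $O(m)$ \emph{contiguous} pool regions of size $\Theta(1/m)$ as atomic units that the two endpoints of a request must agree on, with hard disjointness across sizes for each string. In Variant E the two endpoints' size-to-region assignments need not be aligned, and Alice can force misalignment cheaply: whenever a size-$\eps_i$ allocation is forced (by the partner's constraints, e.g.\ the star construction of Alice's strategy in section~\ref{sec:game}) to land in a pool region not yet in $L(x,i)$, a whole region of measure $1/m$ of $x$'s space becomes committed to size $\eps_i$ at a cost to Alice of only $\eps_i\le 1/(cm^3n)$ --- an unbounded amplification factor, which is precisely the mechanism behind theorem~\ref{th:gap}; the hypothesis on $\eps_i$ does not help here because the granularity of commitment in your scheme is $1/m$, not $\eps_i$. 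Iterating this over the $m$ sizes, Alice can at negligible cost per string exhaust the $O(m)$-element pool of many strings with mutually misaligned commitments; a request pairing two such strings at a misaligned size then can neither be allocated in a common region nor repaired by your extension step (no pool region can legally join $L(u,i)$ or $L(v,i)$ any more), so the resulting blame is not chargeable to either of your two ledgers (own saturation, list growth), and with ever fresh misaligned partners the same string can be blamed far beyond $O(m^2)$. Independently, your counting omits the failure mode in which every common region is blocked by the \emph{partner's} allocations: blaming the partner lets one saturation event (budget $\Omega(1/m)$, paid once) be recycled against unboundedly many different co-requesters, while blaming the other string is exactly the unchargeable case above.

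The paper's proof is built to avoid both problems, and its two ingredients have no analogue in your proposal. First, the space is cut into $\ell\approx m^3 n$ small \emph{blocks} (this is where $\eps_i\le 1/(cm^3n)$ is really used: a request fits inside a block, so a stray allocation commits only $O(1/\ell)$ of measure). Second, each string's regions are pseudo-random unions of blocks, chosen from a fixed list $\mcI$ with the properties of lemma~\ref{lem:extractorlike}: any region of $u$ intersects any region of $v$ in about a $1/r^2$ fraction, so alignment between the two endpoints is automatic no matter how sizes were associated, and a request can fail only when at least half of the common blocks are full for one endpoint, who is then blamed. The second (extractor-like) item of the lemma then bounds, for a fixed region of $u$ whose full blocks occupy at most a quarter of it, the number of index sets in the entire list that can see ``half of the intersection full'' by $O(m)$; this is what gives $O(m)$ blames per region and hence $O(m^2)$ per string. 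Without the fine block granularity and this cross-string sampling property, the $O(m^2)$ bound does not follow from your argument.
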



\subsection{Winning strategy for Variant E}

We first describe the strategy and explain informally why it works. 
Afterwards, we present formal definitions and a combinatorial lemma from which the winning condition follows. 

As in variant C, we use $r = 2m$ regions. Again, regions of a string are disjoint and used for a single request size. 
But now the regions are assigned differently for different strings. Moreover, each region of a string
intersects with each region of any other string.  
More precisely, we partition the Cantor space in a large (but polynomial in $r$) number of blocks of equal size. 
Blocks are assigned to one of $r$ regions randomly and independently for each string. 
(In fact, we will use a deterministic assignment that satisfies the conditions of lemma~\ref{lem:extractorlike}. 
However, its existence follows by showing that some random assignment satisfies the conditions with positive probability.)
Thus each region is equal to the union of blocks that are assigned to~it.

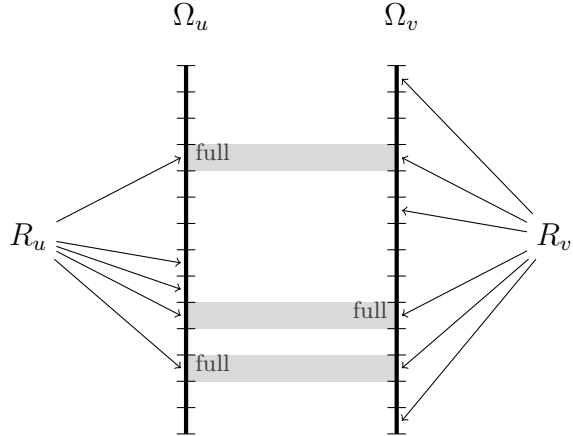
\begin{figure}[h]
  \centering
  \begin{tikzpicture}[scale=0.7]
    \foreach \sym/\sgn/\nameString/\symm in {A/-/u/u,B//v/v}
    {
      \node (R\symm) at (\sgn 5,0) {\large $R_{\symm}$};
      \foreach \x in {0,1,2,...,14}
	  {
	    \coordinate (\sym\x) at ($(\sgn 2,-3.75)+(0,\x*0.5cm)$) {};
	  }
      \node[above=1em] at (\sym14) {\large{$\Omega_\nameString$}\!\!};
      }

      \foreach \x in {2,4,5,6,10}
      {
	\draw[->] (Ru) -- ($(A\x)+(-1mm,2.5mm)$);
      }
      \foreach \x in {0,2,4,8,10,13}
      {
	\draw[->] (Rv) -- ($(B\x)+(1mm,2.5mm)$);
      }

      \foreach \leftBot/\rightUp in {A2/B3, A4/B5, A10/B11}
      \draw[draw=none,fill=gray!30] (\leftBot) rectangle (\rightUp);

      \foreach \pos/\direction in {A2/west, A10/west, B4/east}
      \node[yshift=2.5mm,anchor=\direction,darkgray] at (\pos) {\small{full}};

      \foreach \sym in {A,B}
      {
	\foreach \x in {0,1,...,14}  \draw ($(\sym\x)+(5pt,0pt)$) -- ($(\sym\x)-(5pt,0pt)$);
	\draw[ultra thick] (\sym0) -- ($(\sym14)$);
      }
  \end{tikzpicture}
  \caption{Illustration of Bob's strategy.
  $R_u$ is a region of~$u$ consisting of 5 blocks, 
  and $R_v$ is a region of~$v$ consisting of 6 blocks. 
  If no interval can be allocated, the strategy inspects the common blocks, 
  indicated in gray. 
  The top and bottom block are full for $u$ and the middle block is full for~$v$, 
  thus~$u$ is blamed.
  }\label{fig:strategyExact}
\end{figure}

The expected measure of a region is~$1/r$, (each block belongs to the region with the same probability).
The Chernoff bound implies that with high probability
each region has measure equal to~$1/r$ up to a factor $1 + \xi$ for some small~$\xi>0$.
Similarly, consider 2 regions $R_u$ and $R_v$ of different strings $u$ and $v$, see figure~\ref{fig:strategyExact}. 
With high probability, the total measure of all blocks belonging both to $R_u$ and $R_v$, equals~$1/r^2$ up to a factor $1+\xi$.
In particular, each such pair of regions has at least 1 block in common.
%

\medskip
\noindent
{\em Bob's Strategy to allocate an interval for the request $(\{u,v\},\eps)$.} 
Select a region $R_u$ of~$u$ to which the request size $\eps$ is associated
and for which the fraction of full blocks is less than~$1/4$. 
If no such region exists, select a fresh region of~$u$ and associate the request size~$\eps$. 
Similarly for the selection of a region~$R_v$ of~$v$. 

If there exists a free interval of size $\eps$ in the intersection of $R_u$ and $R_v$, 
then allocate an interval. 
Otherwise,  {\em blame} a string in $\{u,v\}$ for which at least half of the common blocks of~$R_u$ and~$R_v$ are full.
(Note that in this case, all common blocks are either full for $u$ or for~$v$, thus one of these strings is indeed blamed.)
{\em End of the strategy.}

\medskip
\noindent
See figure~\ref{fig:strategyExact} for an illustration. 
We first explain that for small $d$, at most $2m$ regions are assigned for each string. 
This is similar as for variant C above.
At most $m$ regions can be non full, i.e., have less than $1/4$-th fraction of full blocks, 
because a new region is only assigned when all other regions are full.
For small $d$, at most half of the regions can be full, since they have approximately equal size. 
Thus, at most $m+ (2m)/2 = 2m$ regions are assigned.

We qualitatively explain that with positive probability over the random assignment of blocks to regions, the following holds: 
for each strategy of Alice, each string is blamed only a few times.
Consider a fixed string $u$ and a region $R_u$ for requests of size~$\eps$. 
Let $B$ be the set of all regions $R_v$ that were used by the strategy when $R_u$ 
was selected and $u$ was blamed. 
We must explain why $B$ is small, and thus $u$ is blamed only a few times.
Note that at the end of the game, at most a fraction~$1/4$ of blocks in a region can be full, 
because the region can only be selected when this fraction is smaller.

For the sake of contradiction, let us assume that $B$ is (very) large.
Imagine we do the following experiment: select a random region $R_v$ in $B$, and 
randomly select a common block of~$R_u$ and~$R_v$. We determine the probability of the event 
``the block is full for $u$'' in 2 different ways.
On one hand, since $B$ is large, we expect that the obtained block is close to uniformly 
distributed in~$R_u$, because blocks are randomly assigned to regions, 
so 
the average distribution must be close to uniform, (random functions are good extractors).
Thus the probability that the block is full for~$u$ is at most~$1/4$ plus a small amount, 
since the fraction of full blocks of $R_u$ is at most~$1/4$.
On the other hand, recall that $u$ is blamed, thus for each $R_v \in B$, half of the common blocks are full for~$u$.
Thus, with probability at least~$1/2$ the resulting block must be full for~$u$. 
This is a contradiction, and our assumption must be false: $B$ can not be too large.

\bigskip
\noindent
We present the combinatorial requirements for the assignment that guarantees that the above set $B$ has size~$O(m)$, 
and hence each string is blamed at most $O(m^2)$ times, since a set $B$ exists for each region of~$u$ and there are $O(m)$~regions.

Recall that we partition the Cantor space in blocks of equal size. 
Let $\ell$ represent the number of such blocks. 
Also recall that regions are subsets of the Cantor space consisting of the union of all blocks with indices in some set $I \subseteq [\ell] = \{1,2,\ldots, \ell\}$. 
Such a set~$I$ is called an {\em index set}.
Let $\mcI$ represent a list of $r2^n$ subsets of $[\ell]$, where $r$ represents some upper bound on the number of regions that we use for each string. 

We will use a slightly different variant of the strategy, where regions are assigned in an online way when needed.
This extra feature is important in section~\ref{ss:blamingFriends}.\footnote{
  It also allows for a nicer combinatorial lemma. 
  But unfortunately, it requires some extra technical steps in the proof of proposition~\ref{prop:strat_blaming}.
  A proof with offline assignments can be found in earlier versions on ArXiv.
} 
For this, we use an upper bound $r$ that is larger than $2m$ but still satisfies $r \le O(m)$.
Each time Bob assigns a new region for some string~$u$, he selects an unused 
index set from the list~$\mcI$, and creates the region $R_u$ given by the union of the {\em unused} blocks in the index set, 
(here, {\em used} blocks are blocks that belong to previously assigned regions of $u$, and they must be excluded).
The following lemma provides a list~$\mcI$ of index sets for which Bob's strategy satisfies the conditions of proposition~\ref{prop:strat_blaming}.

\newcommand{\combLemmaOne}{
Let $\xi > 0$ be small and $e$ be large. For all $r, N$ and $\ell \ge er^3\log(N)$, there exists a list $\mcI$ of $N$ subsets of $[\ell]$ such that:
  \begin{itemize}[leftmargin=*]
    \item 
      Up to a factor $1+\xi$, each triple $(I,J,K)$ of different items in $\mcI$ satisfies:
      \[
       \# I = \ell/r, \qquad \# (I \cap J) = \ell/r^2 \qquad and\qquad \# (I \cap J \cap K) = \ell/r^3.
     \]

    \item 
      For each item $I$ of $\mcI$ and for each $I' \subseteq I$ of size $\tfrac{1}{4}\#I$, there exist at most $O(r)$ items $J$ in $\mcI$ for which
      \[
	\# \Big( I' \cap J \Big) \;\ge\; (\tfrac{1}{2}-3\xi) \;\# \Big( I \cap J \Big).
      \]
  \end{itemize}
  }
\begin{lemma}\label{lem:extractorlike}
  \combLemmaOne
\end{lemma}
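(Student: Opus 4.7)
The plan is to use the probabilistic method. I would define $\mcI = (I_1,\ldots,I_N)$ by independently placing each element $j \in [\ell]$ into each $I_i$ with probability $1/r$, and then show that with positive probability both requirements are met.

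For the size conditions, Chernoff's inequality applied to the independent Bernoulli sums $\#I$, $\#(I\cap J)$, and $\#(I\cap J\cap K)$---with expectations $\ell/r$, $\ell/r^2$, and $\ell/r^3$---each yields failure probability at most $2\exp(-\Omega(\xi^2 \ell/r^3))$ for a multiplicative $(1+\xi)$-deviation. A union bound over the $O(N^3)$ triples makes the total failure probability $\ll 1$ as soon as $\ell \ge er^3\log N$ with $e$ sufficiently large.

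For the extractor-like condition, fix a specific $I$ and a subset $I'\subseteq I$ with $|I'|=|I|/4$. For each $J\ne I$, set $Y=\#(I'\cap J)$ and $Z=\#((I\setminus I')\cap J)$; these are independent binomials with means $|I'|/r$ and $3|I'|/r$. The bad event $\#(I'\cap J)\ge(\tfrac12-3\xi)\#(I\cap J)$ rewrites as $Y\ge\alpha Z$ with $\alpha=(\tfrac12-3\xi)/(\tfrac12+3\xi)$; for $\xi$ small this $\alpha$ is close to $1$, while $\E[Y]/\E[Z]=1/3$, so Chernoff gives probability at most $\exp(-c(\xi)\,\ell/r^2)$ for a constant $c(\xi)>0$ depending only on~$\xi$. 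Because different $J$'s give independent events, the probability of more than $kr$ bad $J$'s is at most $\binom{N}{kr}\exp(-c(\xi)\,k\ell/r)$.

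Finally I would union-bound over the $N$ choices of~$I$ and the at most $2^{|I|}\le N^{O(er^2)}$ choices of $I'$ (using the size condition). The overall exponent evaluates to $O(\log N)\cdot\bigl(1+O(er^2)+kr-c(\xi)\,ekr^2\bigr)$, which is strongly negative for $k>2/c(\xi)$ and $e$ large. The main obstacle is precisely this final step: the inner union bound over subsets $I'$ is exponential in $|I|\approx\ell/r$, so the Chernoff decay on the count of bad $J$'s must overpower it, which is why the hypothesis $\ell\ge er^3\log N$ is needed (the $r^3$, as opposed to $r^2$, provides exactly the slack required to absorb the $2^{|I|}$ union bound). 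The resulting constant hidden in ``$O(r)$'' equals $k=O(1/c(\xi))$, independent of $r$ and $N$.
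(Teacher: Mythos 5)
Your proposal is correct, and it follows the same probabilistic-method skeleton as the paper: the identical i.i.d.\ random construction (each element placed in each set with probability $1/r$), Chernoff plus a union bound over triples for the first item, and for the second item a union bound over the $N$ choices of $I$, over the subsets $I'$ (bounding their number by $2^{\# I}$ with $\# I = O(\ell/r)$, which is exactly the refinement the paper uses to bring the answer from $O(r^2)$ down to $O(r)$), and over collections of candidate bad sets $J$. The one genuine difference is how the bad event is handled: the paper fixes a sublist $J_1,\dots,J_{br}$, aggregates the count $\#\{(i,j)\colon i\in I'\cap J_j\}$ and applies a single multiplicative Chernoff bound to that sum, after first replacing the random threshold $\#(I\cap J)$ by $\ell/r^2$ via the first item (hence its ``variant'' statement with $\#I'=(\tfrac14+2\xi)\ell/r$); you instead bound each event per $J$ by rewriting $\#(I'\cap J)\ge(\tfrac12-3\xi)\#(I\cap J)$ as $Y\ge\alpha Z$ for the independent binomials $Y=\#(I'\cap J)$ and $Z=\#((I\setminus I')\cap J)$, which neatly decouples the second item from the first, and then use independence across $J$ together with a union bound over $kr$-element sets of bad indices. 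Both routes close in the same way and yield the bound $kr$ with $k=O(1/c(\xi))$ a constant. The only inaccuracy is attributive, not logical: the $r^3$ in the hypothesis $\ell\ge er^3\log N$ is what the triple-intersection concentration in the first item requires; absorbing the $2^{\#I}=2^{\Theta(\ell/r)}$ union bound is achieved by taking $k$ large relative to $1/c(\xi)$ (both exponents scale as $\ell/r$), and the $\binom{N}{kr}\le N^{kr}$ term would already be absorbed with $\ell\gtrsim r^2\log N$.
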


\noindent
The proof of this lemma is given in section~\ref{ss:combinatorial}.

\begin{proof}[Proof of proposition~\ref{prop:strat_blaming}.]
  Let $r = (4/\xi) \cdot m$, where $\xi$ is the constant of  lemma~\ref{lem:extractorlike}. 
  We assume $\xi \le 1/8$.
  Let $N = r2^n$, and let $\ell$ be the smallest power of two that satisfies the bound of lemma~\ref{lem:extractorlike}.
  Let~$c$ in the assumption of proposition~\ref{prop:strat_blaming} be large enough such that $\eps_i \le 1/2\ell$ for all~$\eps_i$.
  We apply Bob's strategy as explained above using the list $\mcI$ from the lemma. 

  \medskip
  \noindent
  The first step is technical. We show that {\em for each string, Bob's strategy assigns at most~$\xi r$ regions}.
  Thus, in total we assign at most $\xi N \le N$ regions, and hence $\mcI$ contains enough index sets.

  Recall that a region is  {\em full} if at most a quarter of its blocks are full.
  The total measure of full regions must be smaller than~$d/(\tfrac 1 4 \cdot \tfrac 1 2) = 8d$ and at most~$m$ regions can be non-full.
  Hence, the total measure of assigned regions is at most $8d + \xi/4$. 
  For small~$d$, this is at most~$\xi/2$.
  We prove that each assigned region has measure at least~$1/(2r)$. 
  This implies our goal, because at most $\tfrac {\xi/2} {1/2r} = \xi r$ regions can be assigned.
  
  For the first assigned region, this follows from $\# I/\ell \ge (1-\xi)/r \ge 1/(2r)$ for each index set~$I$ in~$\mcI$.
  For the sake of induction, suppose that all previously assigned regions have measure at least~$1/(2r)$. 
  As we already explained, this implies that at most $\xi r$ regions have previously been assigned.
  We need to subtract the used blocks from a fresh index set. 
  By the first item of the lemma, each assigned region overlaps in a measure at most $(1+\xi)/r^2$. 
  Therefore, each region has measure at least
  \[
    \tfrac{1-\xi}{r} \;-\; (\xi r) \cdot \frac{1+\xi}{r^2}.
  \]
  This is at least $r/2$ for small~$\xi$, and in particular for~$\xi \le 1/8$. 
  The first step is finished.

  \medskip
  \noindent
  It remains to show that each string is blamed at most~$O(r^2)$ times. 
  We say that a region $R_u$ is blamed if in the strategy, $R_u$ is selected and string~$u$ is blamed.
  Since there are $O(r)$ regions, it suffices to show that region~$R_u$ is blamed at most $O(r)$ times.
  We prove this using the second item of the lemma.
  Let $I$ be the set in $\mcI$ that was used to assign region~$R_u$, 
  and let $I' \subseteq I$ be the set of full blocks of the region at the end of the game.
  If a region is selected, less than a quarter of its blocks are full, 
  thus $\#I' \le \tfrac{1}{4}\#I$.
  Each time the region is blamed, the region $R_v$ corresponds to an item $J$ in $\mcI$ that satisfies the inequality of the second item of the lemma.
  Here, the negative $3\xi$ term compensates for used blocks that were removed during the assignment.
  We explain that this term is enough. 

  Indeed, removing indices of a set $K$ from either $I$ or $J$, decreases the number of elements in $I \cap J$, 
  by at most $\# (I \cap J \cap K) \le (1+\xi)\ell/r^3$.
  The number of such $K$ is at most $2 \cdot (\xi r)$, by the first step of the proof. 
  Thus we remove at most $2(1+\xi)\xi \ell/r^2$ indices.
  For small $\xi$, this is at most a fraction~$3\xi$ of the intersection, (which has at least $(1-\xi)\ell/r^2$ indices by the lemma).
  In fact, $\xi \le 1/8$ is sufficient for this.

  We conclude that each time~$R_u$ is blamed, we obtain from~$R_v$ an index set~$J$ that satisfies the 
  inequality of the second item. Hence, each region is blamed at most $O(r)$~times.
  The proposition is proven.
\end{proof}

\subsection{Allocating requests with blame}\label{ss:blamedRequests}

We extend Bob's strategy to also allocate the remaining requests and strengthen proposition~\ref{prop:strat_blaming}.

\begin{corollary}\label{cor:strat_blaming}
  Assume $d$ is small, $e$ is large and $\eps_i \le 1/(em^3n)$ for all $i \le m$.
  Bob has a strategy in \textnormal{variant E} in which each request is allocated.
\end{corollary}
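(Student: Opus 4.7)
The plan is to handle the at most $O(m^2)$ blamed requests per string using a disjoint part of the Cantor space, where a much simpler greedy strategy suffices thanks to the smallness of each request size. Concretely, I would split the Cantor space into two clopen halves $H_1$ and $H_2$, run the strategy of proposition~\ref{prop:strat_blaming} on $H_1$, and forward every blamed request to an auxiliary Variant~B--style strategy on $H_2$.

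First, I would restrict the strategy of proposition~\ref{prop:strat_blaming} to $H_1$: this is permitted because the total measure ever allocated by that strategy is at most $O(d)+O(\xi)$, which remains below $1/2$ when $d$ is small and $\xi\le 1/8$. At the end of this phase each string has been blamed at most $B=O(m^2)$ times, and every non-blamed request has been allocated inside $H_1$.

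Second, on $H_2$ I would set up a Variant~B--style reservation for the blamed requests. For each size $\eps_i$ in the list, reserve a common sub-region of $H_2$ of measure $2B\eps_i$ (rounded up to the nearest power of $2$), the same sub-region for every string. Inside each sub-region I would run the greedy edge-colouring of Variant~A: since every vertex is incident to at most $B$ blamed requests of size $\eps_i$, the $2B$ disjoint intervals of size $\eps_i$ available in the sub-region are enough, by the edge-colouring argument in the proof of theorem~\ref{th:bglvz}. Disjointness of the sets $M_{u,v}$ for different $v$ is automatic: $H_1\cap H_2=\emptyset$, proposition~\ref{prop:strat_blaming} handles disjointness inside $H_1$, the sub-regions of $H_2$ are pairwise disjoint, and the greedy colouring is disjoint within each sub-region.

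The main obstacle I expect is checking that the reservation actually fits in $H_2$. The total reserved measure per string is
\[
\sum_{i=1}^m 2B\eps_i \;\le\; 2B\cdot m\delta \;=\; O(m^3\delta) \;\le\; O(1/(en)),
\]
which is below $1/2=|H_2|$ once $e$ is taken large enough. With this inequality in hand, the combined strategy allocates every request, and the corollary follows.
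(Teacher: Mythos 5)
The step that fails is your counting claim that ``every vertex is incident to at most $B$ blamed requests of size $\eps_i$.'' Proposition~\ref{prop:strat_blaming} bounds only the number of times a string is \emph{blamed}, i.e.\ the number of unallocated requests in which that string itself receives the blame. A request $(\{u,v\},\eps_i)$ that is left unallocated because $v$ was blamed must still be forwarded to your auxiliary strategy, and it is incident to $u$; the number of such requests incident to a fixed $u$ is not controlled by the proposition at all. Alice can make up to roughly $d/\eps_i$ requests of size $\eps_i$ containing $u$, and for each of them the blame may fall on the other endpoint (for instance when that endpoint's region already has more than half of the common blocks full while $u$'s region does not), so the number of forwarded requests of size $\eps_i$ incident to $u$ can vastly exceed $B=O(m^2)$. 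Once that happens, the sub-region of measure $2B\eps_i$ you reserved in $H_2$ is exhausted at $u$ and the greedy colouring of Variant~A has no free interval left, so the combined strategy is not winning. The rest of your plan (restricting the proposition's strategy to $H_1$, the rounding, and the measure estimate $\sum_i 2B\eps_i\le O(1/(en))$) is fine, but it rests on this unproved degree bound.

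This is exactly the difficulty the paper's proof of corollary~\ref{cor:strat_blaming} is designed to avoid: instead of trying to bound, per string, the number of forwarded requests, Bob pairs every normal region with an \emph{extra} region. When an allocation fails, only the \emph{blamed} string switches to its extra region, while the non-blamed endpoint keeps using its normal region (whose fullness is already handled by the region-selection rule); a second switch is needed only if the attempt fails again. Since each normal region is blamed $O(r)$ times and each allocated interval has size at most $1/(2\ell)\ll 1/r^3$, the total measure ever placed in an extra region stays far below the measure $\approx 1/r^2$ of any pairwise intersection of regions, so an extra region can never have half of its common blocks full and the second or third attempt always succeeds. If you want to salvage your two-phase scheme, you would need a per-string bound on \emph{all} unallocated requests containing the string, not just on those where it is blamed, and no such bound is available.
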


\begin{proof}
  We use the previous strategy with the following modification. Each time a region is assigned, 
  we assign an  {\em extra} region. We refer to the first region as the  {\em normal} one.
  The normal region is used in a first attempt of an allocation, 
  and the associated extra region is used when the normal one is blamed for a failed allocation.
  We choose $\mcI$ in the same way as in the proof of  proposition~\ref{prop:strat_blaming}, 
  but with a twice larger value of~$r$, since we need twice as many regions.

  \medskip
  \noindent
  {\em Bob's strategy given a request $(\{u,v\},\eps)$.} 
  \begin{itemize}[leftmargin=1.3em,topsep=3pt,label=--,itemsep=0pt,parsep=3pt]
    \item 
      Apply the strategy from the previous section using the normal regions. If the allocation succeeds, we are finished.  
    \item  
      Otherwise, replace the normal region of the blamed string by its extra copy.  If the intersection has a free interval, allocated it, and finish the strategy. 
    \item 
      Otherwise, also replace the other region by its extra copy and allocate an interval in the intersection. (We will show that this is always possible.)
  \end{itemize}
  {\em End of the strategy.}

  \medskip
  \noindent
  By a similar argument as for  proposition~\ref{prop:strat_blaming} we conclude that for small $d$, 
  each normal region $R_u$ is blamed at most $O(r)$ times for a failed allocation in the first step. 

  We need to show that the strategy always succeeds in the second or third step.
  Assume that $R_u$ was blamed and was replaced by its extra copy.
  We prove that if an allocation attempt fails in the second or third step, then the extra region can not be ``blamed'',
  which means that the number of full blocks in an extra region is always less than half of the common blocks.

  Since the region $R_u$ is blamed at most $O(r)$ times, the extra region allocates at most the same number of requests.
  All allocated intervals have size at most $1/(2\ell)$ 
  and this is much smaller than $1/r^3$ for large~$n$. Thus, the total measure of allocated intervals in an extra 
  region is much smaller than $r/r^3 = 1/r^2$. On the other hand, the intersection of any 2 regions has size close to $1/r^2$. 
  Thus in an extra region, less than half of the common blocks are full and hence, it can not be blamed.
  
  This also implies that in the last step an allocation must happen, since neither extra region can be blamed. 
  The corollary is proven.
\end{proof}

\medskip
\noindent
Recall that in the game of section~\ref{ss:gameExact} we may assume that 
request sizes are at least $2^{-n-2}$ and are powers of 2, see remark~\ref{rem:makeGameFinite}. 
Using $m = n$, we almost obtain the strategy required for the condition of lemma~\ref{lem:strategyImpliesTheoremExact}. 
We obtain a strategy for some $\delta(n)$ proportional to~$n^{-4}$. 
It remains to improve this to~$n^{1.01}$.
For this we run 2 different copies of a strategy given by corollary~\ref{cor:strat_blaming}. 
The first one handles requests with sizes between $2^{-n-2}$ and $cn^{-4}$, and is obtained with $m = n$, 
The second handles request for sizes between $cn^{-4}$ and $c(4\log n)^3/n$, obtained with $m = 4\log n$.
This strategy satisfies the conditions of lemma~\ref{lem:strategyImpliesTheoremExact}, 
and theorem~\ref{th:exact} is proven (except for the combinatorial lemma).

\subsection{Proof of the combinatorial lemma}\label{ss:combinatorial}

We restate  lemma~\ref{lem:extractorlike}. 
\begin{samepage}
  \begin{lemma*}
  \combLemmaOne
\end{lemma*}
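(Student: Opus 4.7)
The plan is to build $\mcI$ by the probabilistic method: sample $N$ subsets of $[\ell]$ independently, where each $I$ is generated by including each $j \in [\ell]$ with probability $1/r$, all choices independent. I will show that both items fail with probability $o(1)$, so a valid list exists. All the bounds needed follow from Chernoff-type tail estimates, with $\ell \ge e r^3 \log N$ and $e$ large enough to make every union bound converge.

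For the first item, fix a triple $(I,J,K)$ among the $N$ sets. The three relevant counts $\#I$, $\#(I\cap J)$, and $\#(I\cap J\cap K)$ are sums of i.i.d.\ Bernoulli indicators with means $\ell/r$, $\ell/r^2$, and $\ell/r^3$. Chernoff gives that each deviates from its mean by more than a factor $1 \pm \xi$ with probability at most $2\exp(-\Omega(\xi^2 \ell/r^3))$, and with $\ell \ge er^3\log N$ this is at most $N^{-4}$ once $e$ absorbs the constants. A union bound over the $O(N^3)$ triples then finishes the first item.

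The second item is harder because of the enormous number of candidate subsets $I'$. Fix $I$, fix $I' \subseteq I$ of size $\tfrac14 \#I$, and let $J$ be another set in the list. Split $\#(I \cap J) = X + Y$ with $X = \#(I' \cap J)$ and $Y = \#((I \setminus I') \cap J)$; these are independent binomials with means approximately $\ell/(4r^2)$ and $3\ell/(4r^2)$. The bad event $X \ge (\tfrac12 - 3\xi)(X+Y)$ rearranges to $X \ge \alpha Y$ for an $\alpha$ very close to~$1$ (when $\xi$ is small). This requires a deviation of order $\ell/r^2$ above the expected value of $X - \alpha Y$, and Bernstein's inequality bounds its probability by $\exp(-c\ell/r^2) \le N^{-cer}$ for a constant $c>0$ depending on $\xi$. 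The number of bad $J$ is then dominated by $\mathrm{Binomial}(N, N^{-cer})$, and another Chernoff tail estimate gives $\Pr[\#\text{bad} \ge c^\star r] \le N^{-\Omega(c c^\star e r^2)}$ once $cer$ is large enough. Finally I union-bound over the $N$ choices of $I$ and the at most $2^{(1+\xi)\ell/r} \le 2^{O(er^2 \log N)}$ choices of $I'$. The critical calibration is choosing $e$ and $c^\star$ so the exponent $\Omega(c c^\star e r^2)$ comfortably dominates the $O(er^2)$ appearing in the union bound; this is the one place where the hypothesis $\ell \ge e r^3 \log N$ is used at the edge of its strength, and it is the step I expect to be the main obstacle in a full write-up.
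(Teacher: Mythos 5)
Your construction and overall plan are the same as the paper's: the identical random model (each index placed in each set independently with probability $1/r$), Chernoff plus a union bound for the first item, and a union bound over $I$, $I'$ and collections of bad $J$'s for the second. The genuine difference is how the second item's tail is organized. The paper never proves per-$J$ concentration: it fixes a candidate sublist $J_1,\dots,J_k$ and applies one multiplicative Chernoff bound to the aggregated count $\#\{(i,j): i\in I'\cap J_j\}$, paying $N^k$ in the union bound, and needs the extra observation $\#I\le 2\ell/r$ to push $k$ from $br^2$ down to $br$. You instead bound the per-$J$ bad probability by Bernstein applied to $X-\alpha Y$ with $X=\#(I'\cap J)$ and $Y=\#((I\setminus I')\cap J)$ (valid, since conditioned on $I$ these are independent binomials and the bad event forces a deviation of order $\#I/r$), and then control the number of bad $J$'s by a binomial tail; this also treats the relative threshold $(\tfrac12-3\xi)\#(I\cap J)$ directly rather than through an absolute threshold combined with item one, which is slightly cleaner. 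Note that your count $2^{(1+\xi)\ell/r}$ of subsets $I'\subseteq I$ already presupposes a high-probability bound on $\#I$, i.e.\ the same device the paper uses.

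One calibration step in your write-up is wrong as stated, though easily repaired. The inequality $2^{(1+\xi)\ell/r}\le 2^{O(er^2\log N)}$ is false in general: the hypothesis only bounds $\ell$ from below, so $\ell$ may be much larger than $er^3\log N$, and since you converted the binomial tail into a power of $N$ (which no longer improves as $\ell$ grows) the final union bound does not close in that regime. The fix is to keep the per-$J$ probability in the form $\exp(-c\ell/r^2)$: the probability of at least $c^\star r$ bad $J$'s for a fixed pair $(I,I')$ is then at most $\exp\bigl(c^\star r\log N - cc^\star\ell/r\bigr)$, whose exponent dominates both the $2^{O(\ell/r)}$ choices of $I'$ (choose $c^\star$ a large constant) and the $N^{1+c^\star r}$ choices of $I$ and of the bad sublist (using $\ell\ge er^3\log N$), uniformly in $\ell$. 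With that adjustment your argument is complete and quantitatively matches the paper's.
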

\end{samepage}

\begin{proof} 
  We use the probabilistic method and generate the list $\mcI = [I_1, \ldots, I_N]$ randomly as follows: 
  for each $(i,j) \in [\ell] \times [N]$, place $i$ into $I_j$ with probability~$1/r$.

  We first show that the first item of the lemma is not satisfied with probability less than~$1/2$.
  For this, we show that each of the 3 requirements is violated with probability less than $1/6$.
  We do this for the requirement on~$\#(I \cap J)$. For the other 2 requirements this is done similarly.
  For two fixed different indices $i,j \in [N]$, the expected value of $\#(I_i \cap I_j)$ is $\ell/r^2$.
  By the Chernoff bound, the probability that this deviates by more than a factor $1+\xi$, 
  is at most $2\exp(-\alpha \ell/r^2)$ for some $\alpha>0$. 
  By the union bound, the probability that the requirement is false, is at most 
  \[
    N^2 \cdot 2\exp(-\alpha \ell/r^2).
  \]
  This is less than~$1/6$ for large~$e$ in the assumption on~$\ell$.

  We prove a variant of the second item of the lemma. There exists a constant~$b$ such that 
  for each index set $I$ and for each $I' \subseteq I$ of size $(\tfrac 1 4 + 2\xi) \ell/r$, with 
  probability at most $1/2$ there exists a sublist $J_1, \ldots, J_{br}$ in $\mcI$ for which 
  \begin{equation}\tag{$*$}\label{eq:ggoal}
    \# \left(I' \cap J_j \right)  \;\ge\; (\tfrac 1 2 - 3\xi)\; \frac \ell {r^2}, \qquad j = 1,\ldots, br.
  \end{equation}
  This statement, together with the first item, implies the lemma.

  We first give the argument for larger sublists $J_1, \ldots, J_k$ of size $k = br^2$.
  Fix a set $I'$ of size $(\tfrac 1 4 + 2\xi)\ell/r$ and generate $k$ sets $J_1, \ldots, J_{k}$ randomly. Consider the event
  \[
    \left\{(i,j) : i \in I' \cap J_j  \right\}  \;\ge\; k \cdot (\tfrac 1 2 - 3\xi) \cdot \frac \ell {r^2},
  \]
  The expected value of the left-hand side is $k\# I'/r$, 
  and this is smaller than the right-hand side by a constant factor, (almost a factor $2$ for small $\xi$). 
  The Chernoff bound in multiplicative form implies that the above event happens with probability at most 
  $\exp(-\beta k\ell/r^2)$ for some~$\beta>0$.
  We need to consider the probability that this happens for 
  any choice of $I$ in $\mcI$, $I' \subseteq I$ and sublist $J_1, \ldots, J_{k}$ of~$\mcI$. 
  By the union bound, the probability that \eqref{eq:ggoal} is satisfied, is at most 
  \begin{equation*}
    N \cdot 2^{\ell} \cdot N^k \cdot \exp(-\beta k \ell/r^2),
  \end{equation*}
  where we used that the number of choices for $I' \subseteq I$ is at most $2^{\# I}$ with $\# I \le \ell$.
  To show that this probability is less than $1/2$, we show that
  \begin{align}
    (k+1)\log N &\le \tfrac 1 2 \frac{\beta k \ell}{r^2} \nonumber\\
    \ell &\le \tfrac 1 2 \frac{\beta k \ell}{r^2} \tag{$**$}\label{eq:change}.
  \end{align}
  The first item follows from $k+1 \le 2k$ and the assumption on~$\ell$. 
  The second item follows for $k = br^2$ and large~$b$.

  To obtain the argument for the smaller value $k = br$, 
  we use a better bound for the number of subsets $I' \subseteq I$. 
  We show that $\# I \le 2\ell/r$ with sufficiently small probability.
  More precisely, by the Chernoff bound, the logarithmic probability 
  that the size $\# I$ exceeds twice its expected value~$\ell/r$ 
  is proportional to $\ell/r$, which in turn is proportional to $k\ell/r^2$ if $k = br$.
  Thus, the probability of this event is much smaller than the probability in the union bound.
  Therefore, we may replace the left-hand side in~\eqref{eq:change} by $2\ell/r$, and this is indeed 
  satisfied for $k = br$ and large~$b$.
\end{proof}

\subsection{Variants of theorem~\ref{th:exact} for strings of different lengths}\label{ss:exactStrong}

We prove a more general version of  theorem~\ref{th:exact} that implies the following 2 results. 

\begin{corollary}\label{cor:exact_loglog}
  The prefix information distances are equal to $\E(x,y) + O(\log \log \KP(x,y))$. 
\end{corollary}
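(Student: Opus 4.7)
The plan is to deduce the corollary from an extension of Theorem~\ref{th:exact} to strings of possibly different lengths, combined with the standard prefix-free construction of Theorem~\ref{th:equality_logarithmic}, via a case split on the size of $\E(x,y)$ relative to $\log K$, where $K=\KP(x,y)$.

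First, I would establish a variant of Theorem~\ref{th:exact}: for all $x,y$ with $\max(|x|,|y|)\le N$ and $\E(x,y)\ge 1.01\log N$, every prefix information distance equals $\E(x,y)+O(1)$. The argument closely follows Section~\ref{sec:exact}. Bob's strategy is played on the vertex set of all binary strings of length at most~$N$, which has size at most $2^{N+1}$, so Lemma~\ref{lem:extractorlike} supplies a suitable block-to-region assignment with only a constant change in parameters, and Corollary~\ref{cor:strat_blaming} applies. Lemma~\ref{lem:strategyImpliesTheoremExact} must be adapted so that the resulting prefix-free machine handles inputs of varying length natively, determining the output length from the computation rather than from any explicit encoding in the program.

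Given this extension, I split on $\E(x,y)$ vs.\ $\log K$. If $\E(x,y)\le\log K$, then $\log\E(x,y)\le \log\log K+O(1)$, and Theorem~\ref{th:equality_logarithmic} already produces a prefix-free program of length $\E(x,y)+O(\log\E(x,y))\le \E(x,y)+O(\log\log K)$. If $\E(x,y)>\log K$, I would apply the extension with $N$ equal to the smallest power of two satisfying $N\ge\max(|x|,|y|)$; in the regime where $\log N=O(\log K)$, a prefix-free description of $\lceil\log N\rceil$ costs only $O(\log\log K)$ bits, after which the machine reconstructs~$N$ and invokes the extended theorem to produce the rest of the program of length $\E(x,y)+O(1)$.

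The hard part is the sub-case of the second regime in which $\max(|x|,|y|)\gg K^{O(1)}$, i.e.\ very long but very simple strings such as $x=0^n$ with a highly compressible $n$. There the hypothesis $\E\ge 1.01\log N$ fails for $N=\max(|x|,|y|)$, and a prefix-free description of~$N$ itself would be too costly. To handle this case I would exploit that $\KP(|x|),\KP(|y|)\le K+O(1)$ to replace $x,y$ by canonical shorter surrogates of length $O(K)$ (essentially their shortest prefix-free programs), apply the extended theorem to the surrogates, and pay only $O(\log\log K)$ bits for the metadata needed to convert between the originals and their surrogates. Verifying that this reduction preserves $\E(x,y)$ up to $O(\log\log K)$ and that the combined machine remains prefix-free while dispatching between the direct and surrogate pathways is the delicate step I expect to absorb most of the work.
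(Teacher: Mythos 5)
Your first two steps are fine: the ``length at most $N$'' variant of theorem~\ref{th:exact} is believable (the vertex set only doubles, so the game analysis survives with constant changes), and the regime $\E(x,y)\le\log K$ is correctly dispatched by theorem~\ref{th:equality_logarithmic}. (One bookkeeping slip there: when $\log N=O(\log K)$ and $\E(x,y)>\log K$ you invoke the extension, but it needs $\E(x,y)\ge 1.01\log N$, which does not follow; the band $\log K<\E(x,y)<1.01\log N$ has to be sent back to theorem~\ref{th:equality_logarithmic}, which works since then $\log\E(x,y)\le\log\log N+O(1)=O(\log\log K)$. This is patchable.)

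The genuine gap is the step you yourself flag as delicate, and it does not go through as proposed. If the surrogates are (essentially) shortest prefix-free programs $x^*,y^*$, then the combined machine must recover $x^*$ from $x$ and $y^*$ from $y$, and the same recovery is hidden inside the estimate $\KP(y^*\cnd x^*)\le\KP(y^*\cnd y)+\KP(y\cnd x)+O(1)$ that you need to show $\E(x^*,y^*)\le\E(x,y)+O(\log\log K)$. But recovering a shortest program from the string requires knowing $\KP(x)$, and the complexity of the complexity, $\KP(\KP(x)\cnd x)$, is not $O(\log\log K)$: by G\'acs' lower bound it can be of order $\log K$ even for strings of small complexity and huge length (pad such a string with a simple block of zeros to land exactly in your hard regime). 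So both the ``metadata'' and the preservation of $\E$ can cost $\Theta(\log K)$, which blows the $O(\log\log K)$ budget. The paper avoids this with a different surrogate: it proves theorem~\ref{th:exact_strong}, whose overhead is the jointly conditional complexity $\KP(\ell,d_+\leftarrow\{x,y\})$, and in the key case it enumerates all pairs $(u,v)$ with $1.01\log\KP(u,v)\le\ell$ and assigns the strings occurring in them, in enumeration order, to strings of the single length $n=2+2^{\lceil\ell/1.01\rceil}$. This re-encoding is computable in both directions from $\ell$ alone, so the only metadata is a jointly conditional description of $\ell$ (and $d_+$), i.e.\ $O(\log\log K)$ bits for $\ell=\lceil 1.01\log\KP(x,y)\rceil$, and no complexity-of-complexity term ever appears; the remaining cases ($\ell\ge\E(x,y)$, or $\E(x,y\cnd\ell)\le\ell\le\E(x,y)$) are handled through the plain characterization of theorem~\ref{th:bglvz} rather than through a variable-length version of theorem~\ref{th:exact}. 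To complete your argument you would have to replace the shortest-program surrogates by such an enumeration-based (complexity-threshold) encoding, at which point you are essentially reproving theorem~\ref{th:exact_strong}.
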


\noindent
For $r \le 0$, let $\log r = 0$. 

\begin{corollary}\label{cor:exact_length}
  The difference between each prefix information distance and $\E$ is at most linear in
  \[
    \E\left(\big\lceil \log |x| \big\rceil, \big\lceil\log |y|\big\rceil\right) \;+\;  \log \big(1.01\log |xy| - \E(x,y)\big). 
  \]
\end{corollary}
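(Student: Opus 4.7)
The plan is to construct a prefix-free machine $V'$ whose shortest program $p=qrs$ for $(x,y)$ attains the claimed length, split into three self-delimiting parts: $q$ reduces to equal lengths, $r$ selects a regime, and $s$ is the main distance description. Set $n^*=2^{\max(\lceil\log|x|\rceil,\lceil\log|y|\rceil)+1}$ and pad $x\mapsto x'=x\,1\,0^{n^*-|x|-1}$, $y\mapsto y'$ analogously, both of length $n^*$. The map $x\leftrightarrow x'$ is computable given $n^*$, and since $y'$ determines $(y,n^*)$ one has $\E(x',y')\le\E(x,y)+O(1)$. To let $V'$ determine $n^*$ from either side, I take $q$ to be the prefix-free information-distance program between the short strings $\lceil\log|x|\rceil$ and $\lceil\log|y|\rceil$; by Theorem~\ref{th:equality_logarithmic}, $|q|=O(\E(\lceil\log|x|\rceil,\lceil\log|y|\rceil)+1)$ (the logarithmic slack is absorbed).

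Set $k=\E(x,y)$ and choose a threshold $k^*=\lceil1.01\log n^*\rceil+O(\E(\lceil\log|x|\rceil,\lceil\log|y|\rceil)+1)$ large enough that $k\ge k^*$ guarantees the hypothesis of Theorem~\ref{th:exact} for $(x',y')$ (using that $\E(x',y')\ge k-\E(\lceil\log|x|\rceil,\lceil\log|y|\rceil)-O(1)$). In the large regime $k\ge k^*$, let $r$ be a single marker bit and $s$ the program of length $\le k+O(1)$ from Theorem~\ref{th:exact} applied to $(x',y')$. In the small regime $k<k^*$, let $r$ be the other marker bit followed by a self-delimiting encoding of the deficit $t:=k^*-k$ (length $O(\log t+1)$), and let $s$ be the colour of $\{x',y'\}$ in a greedy $2^{k+1}$-colouring of the uniformly enumerable graph $R_k=\{(u,v)\colon\KP(u\cnd v),\KP(v\cnd u)\le k\}$ (whose vertices have degree $\le 2^k$, as in Theorem~\ref{th:bglvz}); then $|s|=k+1$.

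On input $(p,u)$, $V'$ computes $\lceil\log|u|\rceil$, applies the machine behind $q$ to obtain the other log-length and hence $n^*$, pads $u\mapsto u'$, and reads $r$'s marker bit: in the large case it runs Theorem~\ref{th:exact}'s machine on $(s,u')$; in the small case it parses $t$, sets $k=k^*-t$, reads $k+1$ bits as a colour, and retrieves the unique neighbour $v'$ of $u'$ of that colour in $R_k$. It unpads $v'\to v$ to output $v$, and by symmetry in every step $V'(p,x)=y$ and $V'(p,y)=x$. Summing, $|p|=k+O(\E(\lceil\log|x|\rceil,\lceil\log|y|\rceil)+\log t+1)$; since $n^*\le 4|xy|$ one has $\log t\le O(\log(1.01\log|xy|-k)+\E(\lceil\log|x|\rceil,\lceil\log|y|\rceil)+1)$ (with $\log r=0$ for $r\le 0$), matching the corollary.

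The chief technical obstacle is verifying prefix-freeness of $V'$ on each fixed input $u$: $q$ is prefix-free in $\lceil\log|u|\rceil$, $r$ is self-delimiting, and $s$ is either prefix-free via Theorem~\ref{th:exact}'s machine or of fixed length $k+1$ computed from the already-parsed $n^*$ and $t$. A secondary subtlety is the two-sided control of $\E(x',y')-\E(x,y)$: the easy direction gives $\le O(1)$ (using that $y'$ recovers $(y,n^*)$ and then any program $y\to x$ produces $x'$), while the reverse inequality, needed to validate Theorem~\ref{th:exact}'s hypothesis in the large regime, loses an additive $\E(\lceil\log|x|\rceil,\lceil\log|y|\rceil)$ (since passing from $y$ to $y'$ requires knowing $n^*$), which is exactly what forces $k^*$ to carry that correction term.
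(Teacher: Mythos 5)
Your construction is correct, and it reaches the corollary by a route that is related in spirit but organizationally different from the paper's. The paper first proves the more general Theorem~\ref{th:exact_strong}, stated in terms of the jointly conditional complexity $\KP(\ell,d_+\leftarrow\{x,y\})$ and proved by a three-case analysis on the position of $\ell$ relative to $\E(x,y\cnd\ell)$ and $\E(x,y)$; Corollary~\ref{cor:exact_length} is then read off by instantiating $\ell=\lceil 1.01\log(3\max(|x|,|y|))\rceil$. You instead build the prefix-free machine directly and decompose the program into a log-length information-distance part $q$ (playing the role of the paper's $\KP(\ell\leftarrow\{x,y\})$ term), a regime selector $r$ encoding the deficit $t$ (the analogue of $d_+$), and a main part $s$, with only two cases: above threshold you invoke Theorem~\ref{th:exact} on the padded strings, below threshold you fall back to a greedy-colouring program of length $k+1$. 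The two-case split works because your threshold $k^*$ already includes the slack $O(\E(\lceil\log|x|\rceil,\lceil\log|y|\rceil)+1)$, which merges the paper's cases 2 and 3. What the paper's route buys is modularity: the same Theorem~\ref{th:exact_strong} yields Theorem~\ref{th:exact}, Corollary~\ref{cor:exact_loglog}, and Corollary~\ref{cor:exact_length} by three different choices of $\ell$, whereas your argument is specific to this corollary. What your route buys is an explicit self-contained machine with a cleaner two-regime structure and no detour through jointly conditional complexity. Two minor points to tidy: you should use $R_{k+O(1)}$ rather than $R_k$ in the small regime, since padding only gives $\KP(x'\cnd y')\le\KP(x\cnd y)+O(1)$, not $\le k$ exactly; and your ``reverse inequality'' $\E(x',y')\ge k-\E(\lceil\log|x|\rceil,\lceil\log|y|\rceil)-O(1)$ in fact loses $O(\E(\ldots)+1)$ (the information-distance program carries an $O(\log\E)$ overhead by Theorem~\ref{th:equality_logarithmic}), but your choice of $k^*$ already absorbs this, so the conclusion stands.
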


\noindent
It is not too difficult to prove these corollaries using theorem~\ref{th:exact}.
However, we will obtain them as special cases of an even more general result.
This result is formulated using jointly conditional complexity, 
which was first studied in~\cite{MuchnikCodes} and later in~\cite{KolmLogical,Vereshchagin2011joint}. 

\begin{definition}\label{def:jointlyConditional}
  The {\em jointly conditional complexity} of $z$ given a set $S$ on a machine $U$, is
  \[
    \KS_U(z \leftarrow S) \;=\; \min \left\{|p|: \forall y\in S [ U(p,y) \mathop{=} z ] \right\}.
  \]
\end{definition}

\noindent
We fix a prefix-free machine $U$ that minimizes the above function $\KS_U$ up to additive $O(1)$ constants and define $\KP(z \leftarrow S) = \KS_U(z \leftarrow S)$.
A similar definition could be given for prefix-stable machines, 
but we need to concatenate programs, (and on prefix-stable machines we do not know how to recover the splitting point).
In the above definition we can replace $z$ by a pair of integers.
The following result is equivalent to theorem~\ref{th:exact}.

\begin{theorem}\label{th:exact_strong}
  If $\ell \ge 1.01 \log \KP(x,y)$ then each prefix information distance exceeds $\E(x,y)$ by at most
  \[
    \KP\big(\ell,d_+ \,\leftarrow\, \{x,y\}\big) \;+\; O(1),
  \] 
  where $d_+ = \max\big(0, \ell \mathop{-} \E(x,y)\big)$.
\end{theorem}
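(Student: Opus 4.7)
The plan is to extend the game-based machine construction of Theorem~\ref{th:exact} (section~\ref{sec:exact}) so that the scale parameter of Bob's strategy is encoded into the program $p$ via a prefix-free header rather than read off from $|x|$. I would construct a prefix-free machine $V$ such that for every $x\ne y$ and every valid $(\ell,d_+)$ there is a program $p$ of length at most $\E(x,y)+\KP(\ell,d_+\leftarrow\{x,y\})+O(1)$ with $V(p,x)=y$ and $V(p,y)=x$. Minimality of the information distance then yields the theorem.

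The program is a concatenation $p=qr$, where $q$ is a prefix-free program on a fixed optimal machine $U$ that outputs $(\ell,d_+)$ when applied to either $x$ or $y$ (so $|q|\le\KP(\ell,d_+\leftarrow\{x,y\})+O(1)$ by definition), and $r$ is the identifier of an interval allocated by Bob's strategy in a game parameterized by $(\ell,d_+)$, of length at most $\E(x,y)+O(1)$. On input $(p,w)$, $V$ reads $q$ using prefix-freeness of $U$, recovers $(\ell,d_+)$ via $U(q,w)$, and simulates Bob's strategy from section~\ref{ss:givingBlame} against a symbolic Alice whose limit weights are $m_{u,v}=d\cdot 2^{-\E(u,v)}$. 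These weights are admissible because $\sum_v 2^{-\E(u,v)}\le\sum_v 2^{-\KP(v\cnd u)}\le 1$, so the game is well defined on vertices indexed by arbitrary strings. The machine then outputs the vertex paired with $w$ by the interval containing $r$.

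The crux is the choice of the effective game scale $n^\ast$ (playing the role of $n$ in Theorem~\ref{th:exact}), which must satisfy $n^\ast\le 2^{\E(x,y)/1.01}$ so that Alice's request $d\cdot 2^{-\E(x,y)}$ stays within $\delta(n^\ast)=\Omega((n^\ast)^{-1.01})$. When $d_+=0$ (so $\E\ge\ell$), the scale $\ell$ itself works; when $d_+>0$, the value $\E(x,y)=\ell-d_+$ is recoverable from $(\ell,d_+)$, so $V$ can use $n^\ast\approx\lfloor 2^{(\ell-d_+)/1.01}\rfloor$ without spending any extra bits in $p$. The main obstacle is extending Bob's strategy from section~\ref{ss:givingBlame} to a game whose vertices are arbitrary strings and bounding the number of active vertices by $2^{n^\ast}$, so that the combinatorial lemma~\ref{lem:extractorlike} still applies with the available parameters. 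This requires restricting Alice's simulated requests to pairs whose approximated $\E$ is compatible with the current scale and a careful counting based on $\sum_v 2^{-\E(u,v)}\le 1$; the hypothesis $\ell\ge 1.01\log\KP(x,y)$ guarantees that $x$ and $y$ themselves appear among the activated vertices by controlling the joint complexity of the relevant pairs. Once Bob's strategy is in place in this extended game, the analysis from section~\ref{sec:exact} carries over, giving the required program length and hence the theorem.
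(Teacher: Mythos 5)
Your reduction of the theorem to the machine construction (header computing $(\ell,d_+)$ from either endpoint, followed by an interval identifier from a simulated game) is the right general shape, but the case $d_+>0$ cannot be handled the way you propose, and this is where the real content lies. Any vertex of a game that $V$ simulates from the header alone has complexity at most $\log(\#\mathrm{vertices})+\KP(\ell,d_+\leftarrow\{x,y\})+O(1)$, so for $x$ and $y$ to appear among the vertices the game must have on the order of $2^{\KP(x,y)}$ of them, i.e.\ up to about $2^{2^{\ell/1.01}}$; but Bob's strategies (proposition~\ref{prop:strat_blaming}, corollary~\ref{cor:strat_blaming}, lemma~\ref{lem:strategyImpliesTheoremExact}) only serve requests of size roughly $(\log \#\mathrm{vertices})^{-1.01}\approx 2^{-\ell}$, while the request for $(x,y)$ has size $d2^{-\E(x,y)}\gg 2^{-\ell}$ exactly when $d_+>0$. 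Conversely, at your proposed scale $n^\ast\approx 2^{\E(x,y)/1.01}$ the vertex set is far too small to contain $x$ and $y$ at all (take $x$ random of length $n$ and $y$ differing from $x$ in a simply describable way, so that $\E(x,y)=O(\log\log n)$ while $\KP(x,y)\approx n$); your sentence claiming that $\ell\ge 1.01\log\KP(x,y)$ ``guarantees that $x$ and $y$ themselves appear among the activated vertices'' is unsubstantiated and false in such examples. Theorem~\ref{th:gap} shows this is not a repairable technicality: no game-based machine can serve pairs with $\E$ far below the logarithm of their complexity within $\E+O(1)$ without using extra knowledge. The paper's proof therefore abandons the game in this regime and uses the header differently: since $e=\E(x,y)=\ell-d_+$ is exactly computable from $(\ell,d_+)$, the plain characterization (theorem~\ref{th:bglvz}, relativized to $e$) gives a plain program of length at most $e+O(1)$ mapping $x\leftrightarrow y$, which can be padded to a length known from the condition and appended after the header, yielding a prefix-free program of total length $\KP(\ell,d_+\leftarrow\{x,y\})+\E(x,y)+O(1)$. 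This plain-program-plus-padding idea is entirely missing from your proposal, and without it the theorem as stated is not reached.

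Even in the case $d_+=0$, where your scale choice is workable, you explicitly leave the ``main obstacle'' open: extending Bob's strategy of section~\ref{ss:givingBlame} to a vertex set of arbitrary strings, bounding the number of active vertices, and re-verifying lemma~\ref{lem:extractorlike} with the new parameters. The paper shows no new game analysis is needed: it enumerates the pairs $(u,v)$ with $1.01\log\KP(u,v)\le\ell$, injectively re-encodes the strings that occur as strings of the fixed length $n=2+2^{\lceil\ell/1.01\rceil}$, and applies theorem~\ref{th:exact} relativized to $\ell$, with a third, intermediate case $\E(x,y\cnd\ell)\le\ell\le\E(x,y)$ again settled by theorem~\ref{th:bglvz} conditional on $\ell$. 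If you insist on your direct route you must actually carry out the on-the-fly vertex bookkeeping you sketch; the cleaner fix is the paper's re-encoding reduction together with the case analysis above.
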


\noindent
Theorem~\ref{th:exact}, corollary~\ref{cor:exact_loglog} and corollary~\ref{cor:exact_length} are special cases obtained
by setting $\ell = 1.01\log (3n)$, $\ell = \lceil 1.01\log \KP(x,y)\rceil$ and $\ell = \lceil 1.01 \log (3\max(|x|,|y|)) \rceil$.

\begin{proof}
  Let $\D$ be the largest distance, i.e., the prefix-free non-bipartite distance. We also use the conditional variant of this distance
  $\D(x,y \cnd \ell)$ given by the minimal length of a program $p$ that maps  $(x,\ell)$ to $y$ and $(y,\ell)$ to $x$. 
  Let $\E(x,y \cnd \ell) = \max (\KP(x \cnd y,\ell), \KP(y \cnd x,\ell))$.
  We consider 3 cases. 

  \medskip
  \noindent
  \textbf{Case $\ell \le \E(x,y \cnd \ell)$.} By concatenating programs we have
  \[
    \D(x,y) \;\le\; \KP(\ell \leftarrow \{x,y\}) \;+\; \D(x,y \cnd \ell)  \;+\; O(1).
  \]
  We show that $\D(x,y \cnd \ell) = \E(x,y \cnd \ell) + O(1)$,
  and this finishes the proof, since dropping $\ell$ from the condition can only increase~$\E$.

  Let $n = 2+2^{\lceil \ell/1.01\rceil}$. 
  Enumerate all pairs $(u,v)$ with $1.01\log \KP(u,v) \le \ell$ and associate all enumerated strings to strings of length~$n$.
  By choice of $n$, each enumerated string $u$ can be associated to a unique $n$-bit string, which we denote as~$u'$.
  We have $\D(x,y \cnd \ell) = \D(x',y' \cnd \ell) + O(1)$. By theorem~\ref{th:exact} conditional to $\ell$, (note that the proof indeed relativizes), 
  this is equal to $\E(x',y'\cnd \ell) + O(1)$ and hence, equal to $\E(x,y \cnd \ell) + O(1)$. 

  \medskip
  \noindent
  \textbf{Case $\ell \ge \E(x,y)$.} 
  The sum of $\ell$ and $d_+$ equals $\E(x,y)$. 
  The idea is to convert a program for a plain machine to a program for a prefix-free one by prepending 
  a prefix-free description of the length of the program.
  Let $e = \E(x,y)$. By concatenating programs we have 
  \[
    \D(x,y) \;\le\; \KP(e \leftarrow \{x,y\}) \;+\; \D(x,y \cnd e) \;+\; O(1).
  \]
  By the case assumption, the first term is bounded by the jointly conditional complexity term in the theorem.
  It remains to prove that $\D(x,y \cnd e) \le \E(x,y \cnd e) + O(1)$, since this is at most $\E(x,y) + O(1)$. 

  This follows from the plain characterization of the information distance given in theorem~\ref{th:bglvz}. 
  Indeed, we have $\KS(v \cnd w,e) \le \KP(v \cnd w,e) + O(1)$, thus $\max \big(\KS(x \cnd y,e), \KS(y \cnd x,e)\big) \le e + O(1)$.
  This implies that the plain distance is at most $e + O(1)$. Hence the prefix distance $\D(x,y \cnd e)$ is at most $e + O(1)$ as well.

  \medskip
  \noindent
  \textbf{Case $\E(x,y \cnd \ell) \le \ell \le \E(x,y)$.}
  In this case we concatenate a shortest program in the definition of $\KP(\ell \leftarrow \{x,y\})$ and a program 
  of length $\ell + O(1)$ for a plain machine that maps $(x, \ell)$ to~$y$ and $(y, \ell)$ to~$x$.
  The last program is obtained from theorem~\ref{th:bglvz} conditional to~$\ell$ using $\E(x,y \cnd \ell) \le \ell$.
  The total length is $\KP(\ell \leftarrow \{x,y\}) + \ell$ and by the assumption $\ell \le \E(x,y)$ this is sufficient for the theorem.
\end{proof}

\section{Non-shared information in a set of strings}\label{sec:nonsharedSet}

The characterization of the information distance is a non-trivial example where solving 2 different tasks 
simultaneously by a single program is not `harder' than solving the `hardest' of both tasks, 
where `hard' is understood in terms of program length.
We can generalize this to more tasks. 
In particular, we consider the problem of producing a set, given any of its elements. 

The information ``distance'' in a list of strings was defined in~\cite{lzlm} as a measure 
for the total information in the list that is not shared by all its items. 
We will use a simplified variant defined for a finite set $S$ of strings
\begin{equation}\tag{$*$}\label{eq:defDistSet}
  \D_U(S) \;=\; \min \left\{|p| : \forall x \in S [U(p,x) = S] \right\}.
\end{equation}
It is not clear to me why this can be called a ``distance'' unless we restrict to the case where $\#S = 2$, but we will remain consistent with the literature.
In this section we study the characterization of this measure as the maximum of conditional complexities.

\begin{remark}\label{rem:variantsSetDist}
  Four papers have studied this notion and they all use (slightly) different definitions.
  In~\cite{lzlm} the distance is defined for lists and prefix-free machines $U$ are used. 
  More precisely, given a list $x_1, \ldots, x_s$, they consider the minimal length of a program $p$ that for all $i,j \le s$
  satisfy $U(p,x_i,j) = x_j$. Thus the program $p$ might not be able to indicate the size of~$S$. 

  In~\cite{VitanyiMultiples} the distance was defined almost identically, but for multisets instead of lists.\footnote{
    In this paper the term ``list'' is redefined as a finite sequence of strings presented in lexicographic order. 
    Such a list contains the same information as the multiset defined by its items.
  }
  In~\cite[definition 3.13]{mahmud} the requirement for the program of $p$ is stronger than in~\cite{VitanyiMultiples} 
  because it must also provide the set size.\footnote{Also sorting requirements on the output differ, but they affect the measure by at most an additive constant.}

  Finally, in~\cite{Vitanyi2017} a variant with plain complexity is used. It differs from the above definition because this paper considers multisets and
  programs have the size $\# S$ as extra input. To make the definition easier, we have dropped this extra input and use ordinary sets.
  However, all results in section~\ref{sec:nonsharedSet} still hold if multisets are used. 
\end{remark}

\subsection{Plain variant}

We first consider the plain variant.
Thus, we fix some plain machine $U$ for which $\D_U$ in \eqref{eq:defDistSet} is minimal up to an additive constant, and drop the index.
Note that for sets of $2$ elements, the distance $\D(\{x,y\})$ is equal to the plain information distance from section~\ref{sec:plain}, (up to an additive constant). 

In~\cite{Vitanyi2017} a characterization with additive precision $\log \# S + O(1)$ was given for the plain variant of the distance discussed in remark~\ref{rem:variantsSetDist}.
It was also shown that this precision could not be decreased by more than an additive term. 
We present these 2 results with slightly worse precision.\footnote{ 
In our definition, programs do not have $\#S$ as input, and hence our distance is larger by an additive $O(\log \#S)$ term.
}
For later reference, we also present short formulations of their proofs.\footnote{
  In~\cite[Theorem 2]{lzlm} the above characterization was claimed to hold with precision logarithmic both 
  in the length of the strings and the size of~$S$. (The prefix variant was used, but this does not matter with the claimed level of precision.) 
  I was unable to understand their proof.
  Independently, in~\cite[Theorem 3.14]{mahmud}, the same characterization was given for the prefix variant with precision $O(\# S\log (\# S))$, 
  and this precision does not depend on the length of the strings in~$S$. 
  We already explained that even for $\#S = 2$ this is impossible, and the given proof contains a similar mistake.
}

\begin{theorem}[\cite{Vitanyi2017}]\label{th:setDistanceVit} 
  For all $S \subseteq \{0,1\}^*$
  \[
  \D(S) \;=\; \max \{\KS(S \cnd w) : w \in S \} \;+\; O(\log \# S).
\]
\end{theorem}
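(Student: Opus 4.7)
The plan is to directly generalize the edge-coloring proof of theorem~\ref{th:bglvz} from graphs to $s$-uniform hypergraphs. The trivial direction, $\max_{w \in S} \KS(S \cnd w) \le \D(S) + O(1)$, follows from the definition of $\D(S)$: a program that maps every $w \in S$ to $S$ is, in particular, a program that maps some specific $w$ to $S$. So the work is in the other direction.

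For the upper bound, I would fix integers $n$ and $s$ and consider the family $R_{n,s}$ of $s$-element sets $S \subseteq \{0,1\}^*$ such that $\KS(S \cnd w) < n$ for every $w \in S$. This family is enumerable uniformly in $(n,s)$, since $\KS$ is upper semicomputable. View $R_{n,s}$ as an $s$-uniform hypergraph whose vertices are binary strings. The key quantitative fact is that every vertex $w$ lies in strictly less than $2^n$ hyperedges: there are fewer than $2^n$ programs of length less than $n$, and each such program on input $w$ produces at most one output, hence at most one set $S \in R_{n,s}$ with $w \in S$. Therefore, when a new hyperedge $S$ is enumerated, the number of previously colored hyperedges sharing some vertex with $S$ is at most $s \cdot (2^n - 1) < s \cdot 2^n$, so a greedy coloring with $s \cdot 2^n + 1$ colors suffices to ensure that any two hyperedges meeting at a common vertex receive different colors. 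Encode the colors as binary strings of length exactly $n + \lceil \log s \rceil + 1$.

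Now I would build a plain machine $U$ as follows: its program $p$ is parsed as the concatenation $\bar{s} \cdot c$, where $\bar{s}$ is a self-delimiting encoding of $s$ of length $2\lceil\log s\rceil + O(1)$ and $c$ is a color; since $\bar{s}$ is self-delimiting and $|c| = n + \lceil\log s\rceil + 1$, both $s$ and $n$ can be recovered from $p$ and $|p|$. Given also $w$, the machine enumerates $R_{n,s}$, performs the greedy coloring online, and outputs the unique set in $R_{n,s}$ that contains $w$ and is colored $c$ (uniqueness is guaranteed by the coloring rule). For any set $S$ with $n := 1 + \max_{w \in S} \KS(S \cnd w)$, this produces a program of length $n + O(\log s)$, giving $\D_U(S) \le \max_{w \in S} \KS(S \cnd w) + O(\log \#S)$, and comparing $U$ with an optimal machine yields the theorem.

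The only subtle step is the accounting for the additive $O(\log \#S)$ term: the color contributes $\lceil \log s\rceil$ extra bits beyond~$n$, and the self-delimiting prefix for~$s$ contributes another $O(\log s)$. Both are absorbed into $O(\log \#S)$, which matches the stated precision exactly.
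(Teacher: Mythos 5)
Your proposal is correct and follows essentially the same route as the paper's own proof: the same enumerable hypergraph $R_{n,s}$, the same degree bound of $2^n$ per vertex, a greedy proper coloring with roughly $s2^n$ colors used as programs, and a prepended self-delimiting encoding of $s$ so that $n$ can be recovered from the program length, yielding the $O(\log\#S)$ overhead. The only differences are cosmetic (e.g.\ $s2^n+1$ versus $s2^n$ colors and the exact placement of the length bookkeeping), so nothing further is needed.
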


\begin{proof}
  The proof is similar to the proof of theorem~\ref{th:bglvz}. 
  The $\ge$-inequality follows by definition, 
  thus we only need to prove the $\le$-inequality.
  First we prove the inequality for sets $S$ of any fixed size~$s$. 
  Consider the following relation on sets of size~$s$:
  \[
    R_n(S) \quad \Longleftrightarrow \quad \forall x \mathop \in S \big[\KS(S \cnd x) < n\big]. 
  \]
  The collection of sets $S$ that satisfy this relation can be enumerated.

  Consider a hypergraph whose vertices are strings. Thus, there are infinitely many vertices.
  The hyperedges are given by sets $S$ of vertices for which $R_n(S)$ holds.
  Note that every vertex is incident on less than $2^n$ hyperedges.

  We assign colors to the hyperedges such that every two hyperedges that have at least 1 common node, have different colors.
  For this, we enumerate all hyperedges and color them in a greedy way using $s2^n$ colors.
  This number of colors is enough, because on each of its $s$ vertices  
  less than $2^n$ other hyperedges are incident. Thus less than $s2^n$ colors are unavailable, and at least one remains.

  Finally we construct a machine $V$ such that $\D_V(S)$ satisfies the inequality. 
  We first do this for a fixed value of~$s$.
  On input a program $p$ and a string $w$, the machine constructs the above hypergraph with $n = |p|-\lceil \log s\rceil$. 
  It interprets $w$ as a node and the program $p$ as a color. As soon as an edge $S$ is enumerated that is incident on $w$ 
  and has color $p$, it outputs the set $S$ and halts.
  By choice of $R_n$, the value of $\KS_V(S \cnd w)$ is at most the right hand side of the theorem. 

  If $s$ is not fixed, we prepend a prefix-free description of $s$ to the program~$p$.
  For example, we can use descriptions of length exactly $2\lceil\log s\rceil + 1$, and after this change, we can still recover $n$ 
  from the length of the program.
\end{proof}

\noindent
The following result implies that the $O(\log \# S)$ precision in the above theorem can not be decreased by more than a constant factor.
It is a variant of~\cite[Theorem 2.1]{Vitanyi2017} that follows by the same proof.

\begin{theorem}
  \label{th:precisionSetDistance}
  For all $s$, there exists a set $S \subseteq \{0,1\}^{\lceil 2\log s\rceil}$ with $s$ elements such that
    \[
      \max_{x \in S} \KS(S \cnd x) \;\le\; \log s + O(1) \;\le\; \D(S) - \log s + O(1). 
  \]
\end{theorem}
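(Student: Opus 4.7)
The plan is to exhibit a family $\mathcal{F}$ of $s$-element subsets of $\{0,1\}^n$ (with $n = \lceil 2\log s\rceil$) enjoying four properties: it is pairwise-intersecting, has cardinality at least $s^2/C$ for a universal constant $C$, every element of $\{0,1\}^n$ lies in at most $s$ members, and $\mathcal{F}$ is uniformly computable from~$s$. From such a family I would extract the desired $S$: every member will automatically satisfy $\max_{x \in S}\KS(S\cnd x) \le \log s + O(1)$, while a counting argument forces some member to satisfy $\D(S) \ge 2\log s - O(1)$.

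For the construction, when $q := s-1$ is a prime power I take $\mathcal{F}$ to be the set of lines of the Desarguesian projective plane $\mathrm{PG}(2,q)$, identifying its $q^2+q+1 = s^2-s+1$ points with distinct elements of $\{0,1\}^n$ (possible since $s^2-s+1 \le s^2 \le 2^n$). Each line has $s$ points, each point lies on $s$ lines, any two lines meet, and the family is uniformly computable in~$s$. For general $s$ I would either start from the plane of the nearest prime-power order (which lies within a bounded ratio by Bertrand's postulate) and truncate/embed, or show by a probabilistic counting argument that a family with the listed properties exists and then take the lexicographically first such one; either way the family is uniformly computable in~$s$.

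For the upper bound on conditional complexity, given $x \in S$ one recovers $n = |x|$, then $s$, then $\mathcal{F}$ (all with $O(1)$ additional bits), and finally identifies $S$ among the at most $s$ members of $\mathcal{F}$ through~$x$ via a $\lceil\log s\rceil$-bit index; hence $\KS(S\cnd x) \le \log s + O(1)$. For the lower bound on $\D$, let $U$ be the optimal plain machine and call an $s$-set $T$ \emph{handled} by a program $p$ if $U(p,y) = T$ for every $y \in T$. If two handled $s$-sets share an element, both equal the common value of $U$ there, so they coincide; hence handled $s$-sets are pairwise disjoint. Inside the pairwise-intersecting family $\mathcal{F}$ this forces at most one handled member per program, so $|\{S \in \mathcal{F} : \D(S) \le k\}| \le 2^{k+1}$, and choosing $k = \lfloor \log |\mathcal{F}| \rfloor - 2 = 2\log s - O(1)$ yields some $S \in \mathcal{F}$ with $\D(S) > k$.

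The main obstacle is the combinatorial step: producing for \emph{every} $s$ a uniformly computable, pairwise-intersecting $s$-uniform family of density $\Omega(s^2)$ with bounded incidence. The counting in the last step is tight, so any substantial weakening of the pairwise-intersecting property or of the cardinality $|\mathcal{F}| \gtrsim s^2$ immediately brings the guaranteed gap below $\log s$; hence the theorem's $\log s$ precision rests essentially on the density of such uniform designs.
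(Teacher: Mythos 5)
Your construction and counting argument are essentially the paper's own proof: it likewise takes the lines of a projective plane of order $q$ (with $s=q+1$), gets $\KS(S\cnd x)\le\log s+O(1)$ from the $q+1$ lines through each point, and gets $\D(S)\ge 2\log s-O(1)$ by noting that distinct lines require distinct programs because any two lines meet, then counting programs against the $q^2+q+1$ lines. The only point where you go beyond the paper (which silently restricts to $s=q+1$ for a prime power $q$) is the case of general $s$, and there your ``truncate'' fallback does not work as stated: removing any point from a line destroys its (unique) intersection with every other line through that point, so pairwise intersection --- which your counting step needs --- is lost, and the alternative probabilistic existence claim would need an actual argument.
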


\begin{proof}
  A projective plane over a finite field consists of
  a set of points, a set of lines, and an incidence relation between lines and points.
  For a field of size $q$, we use the following properties of such a plane:
  \\(a) every line contains $q+1$ points, 
  \\(b) the number of points is equal to $q^2 + q + 1$,
  \\(c) every point lies on precisely $q+1$ lines, 
  \\(d) every 2 different lines intersect in precisely 1 point, 
  \\(e) the number of lines is equal to $q^2 + q + 1$.

  We will select a line and prove the claim for the set $S$ given by all points on this line. 
  By~(a), such an $S$ has size $s = q+1$. By~(b), its elements can be represented as strings 
  of length $\lceil 2\log s\rceil$. 
  If a point~$x$ lies on a line~$L$, then
  \[
    \KS(L\cnd x) \le \log s + O(1),
  \]
  by~(c).
  For every line~$L$, let~$p_L$ be a shortest program that on input any point from~$L$, produces~$L$.
  For different lines $L$, the programs $p_L$ must be different, because of~(d), 
  (for any 2 different lines $L$ and~$L'$, the intersection~$x$ satisfies $U(p_L,x) \not= U(p_{L'},x)$).
  Hence, by~(e),
  there must be a line~$L$ with $|p_L| \ge \lfloor\log q^2\rfloor = 2\log s - O(1)$.
  This implies $\D(L) \ge 2\log s - O(1)$ and hence the lemma.
\end{proof}

\begin{remark}
  For each constant $c$ and for $\#S \ge \Omega(\KP(S)^{1/c})$, the result of theorem~\ref{th:setDistanceVit} also follows from a Kolmogorov complexity variant of the 
  Slepian-Wolf theorem given in~\cite{BauwensZimandUniversal}. This variant states that there
  exists a randomized compression algorithm $\mathcal{C}$ such that: \footnote{
  Remarkably, there also exists such an algorithm that runs in time polynomial in $|z|/\eps$ and produces slightly longer strings of length $k + O(\log^2 (|z|/\eps))$.
  }
  \begin{itemize}[leftmargin=*]
    \item On input a string~$z$, a target size~$k$ and an error bound~$\eps$, 
      with probability 1 the algorithm produces a string $\mathcal{C}_{\eps,k}(z)$ of length $k + O(\log (|z|/\eps)$.
    \item 
      For some machine $M$ and for each string $w$ with $\KS(z \cnd w) \le k$, with probability $1-\eps$
      over the randomness in $\mathcal C$ we have 
  \[
    M(\mathcal{C}_{\eps,k}(z),w) \mathop{=} z.
  \]
  \end{itemize}
  To obtain the program that proves the bound of theorem~\ref{th:setDistanceVit}, 
  let $z$ be a shortest program of $S$, 
  $\eps = 1/(2\#S)$ and $k = \E(S)$.
  By the union bound, the probability that the program fails to print $z$ for some $w \in S$ is at most~$1/2$.
  Hence the required program of length $\E(S) + O(\log (n\# S))$ exists. 
  The given precision equals~$O(c\log \# S)$ by the assumption \mbox{$\#S \ge \Omega(\KP(S)^{1/c})$}.
\end{remark}

\subsection{Prefix variant}\label{ss:prefixSet}

The prefix-stable and prefix-free versions of the distance are obtained by fixing optimal machines of the respective types in~\eqref{eq:defDistSet}. 
Let 
\[
  \E(S) = \max_{x \in S} \KP(S \cnd x)\,.
\]
We restate theorem~\ref{th:exactSet} using these definitions.

\begin{theorem*}
  If $\E(S) \ge (\param\# S)^{\#S}\log n$ for some $S \subseteq \{0,1\}^n$,
  then both prefix set distances are equal to~$\E(S) + O(\log \# S)$.
\end{theorem*}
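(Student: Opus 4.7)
The plan is to extend the game-theoretic argument from section~\ref{sec:exact} to the hypergraph setting with hyperedges of size $s = \#S$. First I would formulate a variant of the game of section~\ref{ss:gameExact}: Alice makes requests $(T,\eps)$ where $T \subseteq \{0,1\}^n$ has exactly $s$ elements, subject to $\sum_{T \ni u} m_T \le d$ for every vertex $u$ and $\eps \le \delta(n)$; Bob replies with an interval of size $\ge \eps$ placed in a set $M_T \subseteq \Omega$ so that for every $u$ the family $\{M_T : T \ni u\}$ remains pairwise disjoint (which translates to a prefix-free machine on input $u$). An analog of lemma~\ref{lem:strategyImpliesTheoremExact}, using Alice's blind strategy that sets $m_T = d \cdot 2^{-\E(T)}$ driven by enumerations of $\KP(T \cnd u)$ for each $u \in T$, and prepending a prefix-free description of $s$ to the program, would convert a uniformly computable winning strategy for Bob into a prefix-free machine realizing the upper bound $\E(S) + O(\log s)$.

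Bob's strategy generalizes the region-based construction of sections~\ref{ss:givingBlame}--\ref{ss:blamedRequests}. Partition the Cantor space into $\ell$ blocks and, for each string, independently assign each block to one of $r = O(s)$ regions. Regions for a fixed string are associated with a single request size each, created dynamically when needed as in variant~C. On request $(T,\eps)$, Bob picks for every $u \in T$ a non-full region associated with~$\eps$ and tries to place $[q]$ in the intersection $\bigcap_{u \in T} R_u$. If this fails, by pigeonhole some $u^* \in T$ has at least a fraction $1/s$ of the common blocks already full; Bob blames $u^*$ and replaces its region by a backup copy, as in corollary~\ref{cor:strat_blaming}. The density calculation now relies on $s$-wise intersections of measure $\approx 1/r^s$, which together with the requirement $\eps \le 1/(2\ell)$ forces a block count $\ell \ge \Omega(r^s n)$ and hence request sizes at most $\approx n^{-s^{O(s)}}$, matching the hypothesis $\E(S) \ge (5s)^s \log n$ up to the constants absorbed in the $O(\log s)$ term.

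The main obstacle is the generalization of the combinatorial lemma~\ref{lem:extractorlike} from pairs and triples to $s$-wise intersections. I expect to need a list $\mcI$ of about $r \cdot 2^n$ subsets of $[\ell]$ such that (a) every $k$-wise intersection for $k \le s$ has size $(1 \pm \xi)\,\ell/r^k$, and (b) for every $I \in \mcI$ and every $I' \subseteq I$ with $\#I' \le (1/s)\,\#I$, the number of $(s-1)$-tuples in $\mcI$ whose joint intersection with $I$ has at least a $1/(2s)$-fraction inside $I'$ is at most $O(r^{s-1})$; summed over the $O(r)$ regions of a string, this bounds the total number of blamings per string by $\mathrm{poly}(s)$. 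Applying the probabilistic method with the multiplicative Chernoff bound, and taking a union bound over $(s-1)$-tuples, would require $\ell \ge \Omega(r^s \cdot s \cdot \log N)$ with $N \approx r \cdot 2^n$, which is compatible with the assumed lower bound on $\E(S)$. Once this combinatorial ingredient is in place, the remaining steps (bounding the number of assigned regions per string, verifying that backup regions can never themselves be blamed because their allocated measure is $\ll 1/r^s$, and plugging the strategy into the analog of lemma~\ref{lem:strategyImpliesTheoremExact}) follow the template of section~\ref{sec:exact}, with every constant now depending on~$s$ and the description of $s$ accounting for the final $O(\log \#S)$ term.
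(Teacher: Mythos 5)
Your proposal lifts the ``blame one string, then fall back to an extra region'' mechanism of corollary~\ref{cor:strat_blaming} to $s$-wise intersections, and the combinatorial lemma you sketch is essentially lemma~\ref{lem:combLemmaSet}. However, this alone does not prove the theorem; it proves only the weaker proposition~\ref{prop:exactSet}, which assumes $\E(u,v) \ge 3\#S\log(n\#S)$ for \emph{every pair} of distinct $u,v\in S$. The gap is conceptual. In a friendship/blame strategy, when two strings $u,v$ coappear in too many requests, Bob declares them friends and thereafter refuses to allocate requests $(S,\eps)$ with $\{u,v\}\subseteq S$. He is allowed to do this in propositions~\ref{prop:exactSetEasy}--\ref{prop:exactSet} only because friendship forces $\E(u,v)$ to be small (the friend relation is sparse and enumerable), which under the pairwise hypothesis means such sets $S$ never satisfy the precondition and therefore never require an allocation. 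Theorem~\ref{th:exactSet} assumes only that $\E(S)=\max_{x\in S}\KP(S\cnd x)$ is large; it says nothing about pairwise distances. One can easily have $\E(S)\approx n$ while two elements $u,v\in S$ differ in a single bit, so $\E(u,v)=O(1)$. Your strategy would declare $u,v$ friends and drop the request $(S,\eps)$, but the game (and the machine it encodes) still requires it to be allocated.

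The ingredient you are missing is the group-merging hierarchy of section~\ref{ss:blamingGroups}. There, a ``request'' is a \emph{partition} $C$ of $S$ into groups; substrategy $t$ handles partitions with $t$ groups. When two groups become friends they are merged and the request is forwarded to substrategy $t-1$, with a region now associated to the merged group rather than to a single string. After at most $s-1$ merges the request reaches substrategy $1$ and is allocated in its own block. This recursion is also exactly what produces the $(\param s)^s$ in the hypothesis: each level multiplies the number of labels by roughly $(r_t)^{2t}$, and composing $s$ levels gives the tower $(sn)^{(\param s)^s}$ bound on the inverse request size, whereas your single-level count $\ell \approx r^s n$ only yields a $(sn)^{-O(s)}$ threshold and hence a hypothesis of the form used in propositions~\ref{prop:exactSetEasy}--\ref{prop:exactSet}, not the one in the theorem. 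To close the gap you would need to introduce group-labelled regions, run $s$ cascading substrategies that pass merged partitions downward, and re-derive the recurrence $m_{t-1}=m_t(r_t)^{2t}$ controlling how many labels each string can accumulate at each level.
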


\noindent
{Remarks.}
\begin{simpleitem}
  \item 
    The $O(\log \# S)$ precision of the equality can not be decreased by more than a constant factor,
    since theorem~\ref{th:precisionSetDistance} also holds for the prefix distances by the same proof. 
    Note that the lower bound condition on $\E(S)$
    can be satisfied by appending a long enough random string to an element of~$S$.
  \item 
    We do not know whether the statement also holds for smaller values of $\E$, for example, if $\E(S) \ge c\log (n\#S)$ for some constant~$c$.
  \item 
    In proposition~\ref{prop:exactSet} the same equality is proven with a different condition
    for the set $S \subseteq \{0,1\}^n$: each pair $(u,v)$ of different strings 
    in $S$, should satisfy $\E(u,v) \ge \param\# S \log (n\#S)$. This condition is incomparible with the one of the theorem.
\end{simpleitem}

\medskip
\noindent
In the remainder of this paper we prove theorem~\ref{th:exactSet}. 
We first present a game. In the following subsections, we present a sequence of strategies for Bob 
that become increasingly stronger, but also require smaller request sizes.
From the strategy in section~\ref{ss:blamingFriends}, we obtain Proposition~\ref{prop:exactSet}.
All strategies rely on the same combinatorial lemma that is proven in the last subsection.

\bigskip
\noindent
{\em Description of the game.} The game for  theorem~\ref{th:exactSet} is similar as for theorem~\ref{th:exact}. 
Its parameters are: the set size~$s$, the length~$n$, $d>0$, and a list of request sizes $\eps_1, \ldots, \eps_m$ of length~$m$.
Alice's requests are pairs $(S,\eps)$ where $S$ is a set containing~$s$ strings of length~$n$, and~$\eps$ belongs to the list of request sizes.
Alice's requests must satisfy the following restriction:
for each string $u$, the sum of the sizes $\eps$ for all requests $(S,\eps)$ with $u \in S$, should be at most~$d$.
For all sets $S$, Bob maintains a subset $M_S$ of the Cantor space.
For each request $(S,\eps)$, he needs to enumerate an interval of size at least~$\eps$ into~$M_S$, unless such an interval was enumerated previously.
For each $S'$ that intersects $S$, he needs to enumerate an interval that is disjoint from~$M_{S'}$.
(As before, he must add an interval that does not overlap with the current set~$M_S$. 
Thus, this requirement must also hold for~$S = S'$.)

\begin{lemma}\label{lem:strat_implies_exactSet}
  Let $m = sn$ and $g(s) = (\param s)^s/2$. 
  Suppose there exists a polynomial $p$ such that Bob has a winning strategy in all games with $d = 1/p(s)$ 
  and $\eps_i \le (sn)^{-g(s)}$, then theorem~\ref{th:exactSet} is true.
\end{lemma}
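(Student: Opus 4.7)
The plan is to follow the template of lemma~\ref{lem:strategyImpliesTheoremExact}. Fix the set size $s = \#S$. Assume Bob's winning strategy is uniformly computable in the game parameters (a standard reduction, in the spirit of remark~\ref{rem:makeGameFinite}, lets us search for it exhaustively on a finite game tree). I will build a prefix-free machine $V_s$ which, for every $s$-element $S \subseteq \{0,1\}^n$ satisfying $\E(S) \ge (\param s)^s \log n$, produces a program of length $\E(S) + O(\log s)$ that maps every $u \in S$ to $S$. A single machine $V$ witnessing theorem~\ref{th:exactSet} is then obtained by prepending a self-delimiting code for $s$, costing an additional $O(\log s)$ bits.

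On input $(q, u)$ with $n = |u|$, $V_s$ simulates Bob's winning strategy for parameters $s$, $n$, $d = 1/p(s)$ against an emulated Alice strategy driven by upper-bound approximations to $\E$. Alice enumerates a computable sequence of decreasing integer upper bounds $k(S) \searrow \E(S)$ uniformly over all $s$-element $S \subseteq \{0,1\}^n$, and every time $k(S)$ drops to a value with $d \cdot 2^{-k(S)} \le (sn)^{-g(s)}$ she issues the request $(S, d \cdot 2^{-k(S)})$. The budget constraint $\sum_{S \ni u} m_{S,u} \le d$ is Kraft's inequality scaled by $d$: for each fixed $u$,
\[
  \sum_{S \ni u} 2^{-\E(S)} \;\le\; \sum_{S \ni u} 2^{-\KP(S \cnd u)} \;\le\; 1.
\]
Define $V_s(q, u) = S$ whenever Bob enumerates $[q]$ into some $M_S$ with $u \in S$. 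For each fixed $u$, the game rules ensure that the intervals enumerated into $\bigcup_{S \ni u} M_S$ are pairwise disjoint (both across different $S$ and within each $M_S$), so $V_s(\cdot, u)$ is prefix-free. Bob's winning condition then yields, for each admissible $S$, an interval $[q] \subseteq M_S$ of size at least $d \cdot 2^{-\E(S)}$, giving $|q| \le \E(S) + \log(1/d) = \E(S) + O(\log s)$.

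The delicate point I expect to require the most care is the parameter check that every request size lies in the allowed range $\eps \le (sn)^{-g(s)}$. Under the hypothesis $\E(S) \ge (\param s)^s \log n = 2 g(s) \log n$, the required inequality $d \cdot 2^{-\E(S)} \le (sn)^{-g(s)}$ rearranges to $d \le (n/s)^{g(s)}$, which holds trivially for any $d \le 1$ as soon as $n \ge s$. The complementary regime $n < s$ does not actually arise under the hypothesis: the trivial bound $\E(S) \le \KP(S) + O(1) \le O(sn)$, obtained by listing all elements of $S$ explicitly, combined with $\E(S) \ge (\param s)^s \log n$, forces $n$ to be much larger than $s$. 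Once this parameter check is in hand, gluing the $V_s$ via a prefix-free code for $s$ produces a prefix-free machine that witnesses theorem~\ref{th:exactSet}.
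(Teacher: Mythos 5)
Your proposal follows the same high-level route as the paper's proof sketch — simulate Bob's winning strategy against an Alice driven by upper-bound approximations of $\E$, define $V$ from Bob's enumerations, verify prefix-freeness from the disjointness rules, and check the budget via Kraft — and your parameter check that $d \cdot 2^{-\E(S)} \le (sn)^{-g(s)}$ whenever $n \ge s$ and $\E(S) \ge 2g(s)\log n$ is correct. However, there is a gap in your setup of Alice's side. Unlike the game behind lemma~\ref{lem:strategyImpliesTheoremExact}, the game for theorem~\ref{th:exactSet} takes a \emph{list} $\eps_1,\ldots,\eps_m$ of $m = sn$ admissible request sizes as a parameter, and Alice may only use sizes from that list. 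You never choose this list, and you never bound the initial value of Alice's approximation $k(S)$, so as written her requests $(S, d\cdot 2^{-k(S)})$ run through an a priori unbounded collection of sizes and she is not a legal player. The paper closes this step by noting $\E(S) \le (s-1)n + O(1) \le sn$ (each $\KP(S\cnd x)$ is bounded by roughly $(s-1)n$), so that Alice only ever needs sizes $d2^{-k}$ with $g(s)\log(sn) - \log(1/d) \lesssim k \le sn$ — fewer than $sn = m$ distinct values since $g(s)\log(sn)$ dominates the lower-order terms for $s \ge 2$. Once you fix the list of sizes and insert this bound, your argument goes through.

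A secondary, non-fatal difference: for the boundary regime $n < s$ you derive a contradiction between $\E(S) \le O(sn)$ and the hypothesis $\E(S) \ge (\param s)^s\log n$, whereas the paper instead invokes the plain characterization (theorem~\ref{th:setDistanceVit}) directly when $n \le s$, noting that the extra $O(\log\E(S))$ overhead of making a plain program prefix-free is $O(\log(sn)) = O(\log s)$ in that regime. Your version is also fine (for $n \ge 2$; the case $n = 1$ is trivial), just a different way of disposing of the corner case.
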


\begin{proof}[Proof sketch.]
  The proof is very similar as for  lemma~\ref{lem:strategyImpliesTheoremExact}. We may assume that $n \ge s$, 
  because if $n \le s$, the result follows from the characterization for plain complexity.
  We use an approximation of $\E(S)$ to create a strategy for Alice, let it play against Bob's winning strategy, 
  and use Bob's moves to construct a machine that satisfies the conditions of the theorem.
  Note that $\E(S) \le (s-1)n + O(1)$, since each conditional complexity is bounded by this value. 
  Thus for large $n$, we may assume that $\E(S) \le sn$.

  Each time some value $\E(S)$ is updated to value $k$ with $(sn)^{-g(s)} \le d2^{-k} \le d2^{-sn}$, 
  Alice makes a request $(S,d2^{-k})$.
  This strategy is played against Bob's winning strategy, and these moves provide a distance that does not exceed~$\E(S) + O(\log \# S)$.
  We conclude that for all $S$ with $\E(S) \ge g(n) \log (snm)$, 
  the equality holds with precision $\log (1/d) + O(1) \le O(\log s)$, since $d$ is polynomial in~$1/s$.
  The theorem holds using $g(s)\log (sn) \le 2g(s) \log n$.
\end{proof}

\subsection{A strategy with leaders}\label{ss:blamingLeaders}

\begin{proposition}\label{prop:exactSetEasy}
  If $S \subseteq \{0,1\}^n$ such that for all pairs $(u,v)$ of different strings in $S$ we have $\KP(v \cnd u) \ge 3\#S\log (n\#S)$,
  then the prefix distances are equal to $\E(S) + O(\log \# S)$.
\end{proposition}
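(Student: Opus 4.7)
The plan is to apply the game technique of section~\ref{sec:exact}. A variant of lemma~\ref{lem:strat_implies_exactSet}, restricted to Alice strategies that only generate requests for sets satisfying the pairwise separation hypothesis, reduces the proposition to exhibiting a uniformly computable family of winning strategies for Bob in the set-valued game, with request sizes as small as $(ns)^{-O(s)}$.

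For Bob's strategy, I would extend the ``regions and blaming'' approach of corollary~\ref{cor:strat_blaming} to sets. Each string $u$ owns a dynamic collection of regions in the Cantor space, selected from a list $\mcI$ of index sets provided by a variant of lemma~\ref{lem:extractorlike}. For each request $(S,\eps)$, Bob designates the lexicographically smallest element $u\in S$ as the \emph{leader}, picks a non-full region of $u$ associated to the size $\eps$ (assigning a fresh region from $\mcI$ if all existing ones are full), and tries to allocate an interval of size $\eps$ inside this region that is disjoint from every $M_{S'}$ with $S\cap S'\ne\emptyset$. If that fails, Bob blames one of the other elements of $S$ (the one carrying the blocking weight) and switches to an extra region of that string, exactly as in the proof of corollary~\ref{cor:strat_blaming}.

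The pairwise separation hypothesis is used to bound the pressure placed on the regions. When Alice plays the strategy sketched in the proof of lemma~\ref{lem:strat_implies_exactSet}, the hypothesis rules out generating any request $(S,\eps)$ unless all pairs $(u,v)$ of distinct elements of $S$ satisfy $\KP(v\cnd u)\ge 3s\log(ns)$. Hence for every fixed pair $(u,v)$ the total weight of requests whose set contains both $u$ and $v$ is at most roughly $d\cdot(ns)^{-3s}$, because the number of sets $S\ni u,v$ with $\E(S)\le E$ is bounded by $2^{E-3s\log(ns)+O(\log s)}$ via the description length of $S\setminus\{u,v\}$ given $u$. This tiny pair pressure is what guarantees that, once a region of the leader is selected, only a polynomially bounded number of regions of other strings can ever collide with its intersections, which is the quantitative input required by the extractor-style lemma to bound the total number of blames per region by $O(\mathrm{poly}(s))$.

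The main obstacle is the simultaneous disjointness requirement: a single allocation for $S$ must avoid $M_{S'}$ for every $S'$ meeting $S$, and a single conflict may spread across all $s$ elements at once. In the pair case of corollary~\ref{cor:strat_blaming} it was enough to analyse blames one string at a time; here the combinatorial bookkeeping must balance the blame budgets of up to $s$ strings simultaneously, which is why the extractor lemma has to be generalised from triples $(I,J,K)$ to $s$-tuples of index sets. The factor $3s$ in the separation hypothesis is calibrated precisely to absorb the resulting union bound over the $\binom{s}{2}$ pairs inside each requested set, and this is the step I expect to require the most care.
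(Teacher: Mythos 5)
Your proposal diverges from the paper's proof in several structural respects, and the main one is a genuine gap. You propose that Bob picks a region of a single designated ``leader'' $u\in S$ and tries to allocate inside that one region, switching to an extra region of a blamed string if it fails. The paper's strategy instead selects a region $R_v$ for \emph{every} $v\in S$ and tries to allocate a free interval in the \emph{intersection} $\bigcap_{v\in S} R_v$. This is essential: the allocated interval must be disjoint from $M_{S'}$ for every $S'$ meeting $S$, and those prior allocations were made inside regions of the other $s-1$ strings for unrelated requests, so without coordinating with their regions there is nothing preventing them from covering the leader's region entirely. Moreover, the generalized combinatorial lemma (lemma~\ref{lem:combLemmaSet}) is phrased for pairwise disjoint $(s{-}1)$-selections~$T$ and controls $\#\bigl(I'\cap\mcI[T]\bigr)$ against $\#\bigl(I\cap\mcI[T]\bigr)$; in the paper's proof those $(s{-}1)$-selections are precisely the index sets of the regions chosen for the other $s-1$ strings. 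In a single-region strategy there is no object playing the role of~$T$, so the lemma does not apply, and the bound $O(sr^{s-1})$ on blames per region is not available.

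The role of the hypothesis is also used differently from what you describe. You try to derive a ``pair pressure'' estimate (total request weight on pairs $\{u,v\}\subseteq S$ is at most $d\,(ns)^{-3s}$), but the combinatorial fact underlying the claim is unclear, and in any case it is not what drives the paper's argument. The paper instead modifies the game so that Bob may skip a request by blaming one string~$u$ and declaring the other members of~$S$ to be its leaders, subject to the budget that each string has at most $s^2(cm)^s$ leaders. That budget directly implies $\KP(u\cnd v)\le s\log(sn)+O(s)$ whenever $v$ is a leader of $u$, which contradicts the pairwise hypothesis $\KP(u\cnd v)\ge 3\#S\log(n\#S)$; thus for sets $S$ satisfying the hypothesis Bob never actually skips. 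Related to this, your reduction step also does not go through as stated: Alice's requests are produced from an enumerated approximation of $\E$ (as in lemma~\ref{lem:strat_implies_exactSet}), so she cannot be ``restricted'' a priori to separated sets; the game itself must allow Bob a skipping mechanism, with the hypothesis then guaranteeing that the sets of interest are never skipped.
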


\noindent
To prove this proposition, we adapt the game.
Bob may decide not to allocate a request $(S, \eps)$, 
but if he does so, he must blame one string in~$S$, 
and declare at least one other string in $S$ to be the {\em leader} of the blamed string. 
During the game, a string receives more and more leaders, and 
Bob needs to satisfy the following restriction for some large constant~$c$ (independent of $n,m,s$):

\begin{center}
  {\em each string has at most $s^2(cm)^{s}$ different leaders.}
\end{center}

\noindent
In the usual way, the proposition follows from a winning strategy for Bob with $d$ polynomial in $1/s$, $m = sn$,
and $\eps_i \le (csn)^{s+O(1)}$ for large~$c$.
Indeed, if $u$ is a leader of $v$,
the above requirement implies that $\KP(u \cnd v) \le s\log (sn) + O(s)$. 
Thus no allocation is required for requests $(S,\eps)$ with sets $S$ that contain both~$u$ and~$v$.
Similarly, if $\eps \ge (csn)^{-s-O(1)}$ in a request $(S,\eps)$, 
then $\KP(u \cnd v) \le (s+O(1))\log sn + O(s)$ for all different $u$ and $v$ in~$S$, 
and again no allocation is required.

\bigskip
\noindent
The strategy is similar as for theorem~\ref{th:exact} and is based on collecting requests of the same size in contiguous areas. 
The Cantor space is partitioned into blocks of equal size and regions are assigned to them dynamically,
using a list $\mcI$ of index sets that satisfies some combinatorial properties.
Each time we need to assign a new region for a string, we select a fresh set of indices from the list, 
and define the region to be the unused blocks with these indices.
A block is  {\em full} if at least a fraction $1/s$ of its measure is allocated.

\bigskip
\noindent
\begin{samepage}
\noindent
{\em Bob's strategy to allocate a request~$(S,\eps)$.}
\begin{enumerate}[topsep=2pt,leftmargin=1.63em]
  \item \label{item:select}
    For each $v$ in $S$ select a region $R_v$ of $v$ with request size~$\eps$ for which the fraction of full blocks is less than~$1/(2s)$.
    If such a region does not exist, assign a new region for size~$\eps$.

  \item \label{item:block} 
    Find a free interval that intersects all selected regions~$R_v$.
    If such an interval is found, allocate it and terminate the strategy. 
    Otherwise, select a region $R_u$ for which at least a fraction $1/s$ of the common blocks are full.
    Blame this region~$R_u$.

  \item  \label{item:blame} 
    If $S\setminus \{u\}$ contains a leader of~$u$, then nothing needs to be done and the strategy is terminated.
    Otherwise, declare all elements of $S \setminus \{u\}$ to be leaders of~$u$. 
\end{enumerate}
\noindent
{\em End of the strategy.} 
\end{samepage}

\medskip
\noindent
%
We now present the combinatorial properties that the list~$\mcI$ should satisfy.
A $t$-selection is a subset of $[N]$ of size~$t$.
Given a nonempty subset $T \subseteq [N]$, let $\mcI[T] = \bigcap_{j \in T} \mcI_j$.

\newcommand{\combLemmaSet}{
  Let $\xi$ be small and $e$ large. For all $s \ge 2, r,N$ and $\ell \ge es r^{s+1}\log N$, 
  there exists a list $\mcI$ of $N$ subsets of $[\ell]$ such that
  \begin{itemize}[leftmargin=*]
    \item 
      For all $t \le s+1$ and each $t$-selection $T$: \[1-\xi \;\le\; \frac{\#\mcI[T]}{\ell/r^t}  \;\le\; 1+\xi.\]

    \item For every $I$ in $\mcI$, every $I' \subseteq I$ of size $\#I/(2s)$, there exist at most $k = O(sr^{s-1})$ pairwise disjoint $(s{-}1)$-selections $T$ such that
      \[
	\# \Big(I' \cap \mcI[T]\Big) \;\ge\; \tfrac{1-\xi}{s} \;\# \Big(I \cap \mcI[T] \Big).
	\]
  \end{itemize}
}
\begin{lemma}\label{lem:combLemmaSet}
  \combLemmaSet
\end{lemma}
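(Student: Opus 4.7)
The plan is to generalize the probabilistic argument from Lemma~\ref{lem:extractorlike}. Generate $\mcI = [I_1,\dots,I_N]$ at random by placing each $i\in[\ell]$ independently into each $I_j$ with probability $1/r$, and show that both items hold with probability strictly larger than $1/2$, hence simultaneously with positive probability.

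For the first item I would treat each $t\le s+1$ separately. For a fixed $t$-selection $T$, the size $\#\mcI[T]$ is a sum of $\ell$ independent Bernoulli variables with parameter $r^{-t}$, so by the multiplicative Chernoff bound the probability of a $(1\pm\xi)$-deviation is at most $2\exp(-\alpha \ell / r^t)$ for some $\alpha=\alpha(\xi)>0$. Union-bounding over all $t$-selections (at most $(s+2)N^{s+1}$ in total) and using the hypothesis $\ell \ge e sr^{s+1}\log N$ with large $e$ makes the total failure probability at most $1/4$.

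For the second item, fix $j_0\in[N]$ playing the role of the index of $I$, condition on the value $I=I_{j_0}$, and fix a subset $I'\subseteq I$ of the prescribed size. The event of interest involves $k = O(sr^{s-1})$ pairwise disjoint $(s{-}1)$-selections $T_1,\dots,T_k$. At most one $T_j$ can contain $j_0$, so I would discard that one (absorbing it into the constant hidden in $k$) and assume all $T_j \subseteq [N]\setminus\{j_0\}$. Since the $T_j$'s are pairwise disjoint and do not involve $j_0$, the random bits defining $\mcI[T_j]$ use disjoint coordinates and are independent of each other and of $I$. Write $X_j = \#(I'\cap \mcI[T_j])$ and $Y_j = \#((I\setminus I')\cap \mcI[T_j])$, both sums of independent Bernoullis with parameter $r^{-(s-1)}$, independent across $j$. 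The bad event ``$X_j \ge \tfrac{1-\xi}{s}(X_j+Y_j)$'' forces the ratio $X_j/(X_j+Y_j)$ above $(1-\xi)/s$, whereas its expectation $\#I'/\#I = 1/(2s)$ is smaller by a factor approaching $2$; a standard multiplicative Chernoff bound gives per-$j$ failure probability $\exp(-\beta \#I/r^{s-1})$ with $\beta=\beta(s,\xi)>0$, and by independence the joint probability across all $k$ is $\exp(-\beta k\#I/r^{s-1})$.

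Finally I would apply a union bound over $j_0$ (a factor $N$), over $I'\subseteq I$ (a factor $2^{\#I}\le 2^{2\ell/r}$, using the first item to restrict to $\#I\le 2\ell/r$, whose complement already has negligible probability by Chernoff), and over choices of $k$ disjoint $(s{-}1)$-selections (a factor $N^{k(s-1)}$). The requirement becomes
\[
\log N \;+\; \tfrac{2\ell}{r}\log 2 \;+\; k(s-1)\log N \;\le\; \tfrac{\beta k\#I}{r^{s-1}} \;\asymp\; \tfrac{\beta k\ell}{r^s}.
\]
Picking $k = csr^{s-1}$ with $c$ large dominates the middle term, and the hypothesis $\ell \ge esr^{s+1}\log N$ with $e$ large dominates the first and third terms. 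Thus the total failure probability for the second item can also be driven below $1/4$, completing the proof.

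The main obstacle is the Chernoff step for the ratio event: one has to verify that the gap between the true expected ratio $1/(2s)$ and the threshold $(1-\xi)/s$ yields a deviation constant $\beta$ that, together with the combinatorial cost $k(s-1)\log N$ of enumerating disjoint $(s{-}1)$-selections, still closes under $\ell = \Omega(sr^{s+1}\log N)$. Tracking the $s$-dependence through $\beta$ and through the choice of $k = O(sr^{s-1})$ is the delicate bookkeeping; the rest is a direct adaptation of the case $s=2$ treated in Lemma~\ref{lem:extractorlike}.
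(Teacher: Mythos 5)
Your proposal is correct and follows the same overall method as the paper: random placement of each $i\in[\ell]$ into each $I_j$ with probability $1/r$, Chernoff plus a union bound for the first item, and for the second item a union bound over the index of $I$, over $I'$ (restricted via $\#I\le 2\ell/r$ to cost only $2^{2\ell/r}$, which is exactly what lets $k=O(sr^{s-1})$ rather than $O(sr^{s})$ close against $\ell\ge esr^{s+1}\log N$), and over the $k$ disjoint $(s{-}1)$-selections. The one genuine difference is how the deviation estimate is organized. The paper does not treat the $k$ ratio events separately: it first uses the first item to replace the relative threshold $\tfrac{1-\xi}{s}\#(I\cap\mcI[T])$ by the absolute threshold $\tfrac{(1-\xi)^2}{s}\cdot\ell/r^{s}$, then sums the counts over all $k$ selections and applies a single multiplicative Chernoff bound to the aggregate $\#\{(i,j): i\in I'\cap\mcI[T_j]\}$, whose mean $k\#I'/r^{s-1}$ sits a constant factor (close to $2$, independent of $s$) below the target; this sidesteps ratio events, conditioning on $I$, and any independence bookkeeping across selections. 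Your route instead conditions on $I$, discards the at most one selection containing its index, and multiplies independent per-selection probabilities for the events $X_j\ge\tfrac{1-\xi}{s}(X_j+Y_j)$; this works (the event forces $X_j$ above or $Y_j$ below its mean by a constant factor, giving $\exp(-\Theta_\xi(\#I/(sr^{s-1})))$ per selection, and the $1/s$ loss in the exponent is absorbed exactly as in your bookkeeping), and it has the small merit of handling explicitly the case $j_0\in T_j$, which the paper's aggregation silently absorbs. The paper also proceeds in two stages (first $k=O(sr^{s})$ with the crude $2^{\ell}$ count of subsets, then the $2^{2\ell/r}$ refinement), whereas you go directly to $k=O(sr^{s-1})$; that is only a presentational difference. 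In short: same construction and union-bound skeleton, with your per-selection independence argument in place of the paper's single aggregate Chernoff step, and both close under the stated hypothesis on $\ell$ to the same degree of rigor.
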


\begin{proof}[Proof of  proposition~\ref{prop:exactSetEasy}.] 
  Note that it is enough to show the lemma for $s$ being a power of~$2$ (this allows us to avoid explicit rounding of interval sizes).
  We apply the lemma with $r=\tilde{c}m$ for some large~$\tilde{c}$ that we determine shortly, with $N = r2^n$, and with $\ell$ being the smallest power of 2 that exceeds the lower bound of the lemma.
  We show that the above strategy satisfies the requirement of the adapted game 
  for all $n$, $s \ge 2$, $d$ proportional to $1/s^4$ and request sizes $\eps \le 1/(\ell s)$, i.e., $\eps \le O(1/(s^2n (cm)^{s+1})$. 

  In a similar way as for proposition~\ref{prop:strat_blaming}, one can show that for large $\tilde{c}$ and small $d = \Theta(1/s^4)$,
  at most $\xi r/(2s^2)$ regions are assigned for each string.
  Thus, the list~$\mcI$ contains enough index sets for all assignments in the strategy.

  It remains to show that the requirement on the number of leaders is satisfied.
  Fix a region~$R_u$. 
  Each time~$R_u$ is selected and the number of leaders in step~3 increases,
  the selected regions of the strings in $S \setminus \{u\}$ define an $(s{-}1)$-selection $T$ of~$\mcI$ 
  given by the index sets used to assign these regions.
  By a similar analysis as for proposition~\ref{prop:strat_blaming}, 
  one can show that at most a fraction $s \cdot (1+\xi) \cdot (\xi/(2s^2))$ of indices are removed from the intersection during an assignment. 
  (Here we use the property that intersections of size $t = s+1$ contain approximately a fraction~$r^{-t}$ of the blocks.)
  We conclude that the $(s{-}1)$-selection satisfies the inequality of the second item of the combinatorial lemma.
  By construction of step 3, subsequent extensions correspond to disjoint selections.
  This implies that for the fixed region $R_u$, at most $O(sr^{s-1})$ times a tuple of $(s{-}1)$ leaders are declared.

  Since at most $r$ regions are assigned for~$u$, 
  the number of leaders is at most~$O(r \cdot s \cdot sr^{s-1})$, which is bounded by~$s^2 (cm)^s$ for large~$c$.
\end{proof}

\subsection{A strategy with pairs of friends}\label{ss:blamingFriends}

\begin{proposition}\label{prop:exactSet}
  For $S \subseteq \{0,1\}^n$ such that $\E(u,v) \ge 3\#S\log (n\#S)$ for all different $u$ and $v$ in~$S$, 
  the prefix distances are equal to $\E(S) + O(\log \# S)$.
\end{proposition}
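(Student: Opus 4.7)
The plan is to adapt the strategy of Section~\ref{ss:blamingLeaders} by replacing the asymmetric leader relation with a symmetric friendship relation on pairs, so that declaring $u$ and $v$ to be friends corresponds to a bound on $\E(u,v)$ rather than just on one direction $\KP(u\cnd v)$. Since the hypothesis of the proposition is symmetric in $u$ and $v$, a symmetric blame rule is natural; and since $\E(u,v)\ge 3\#S\log(n\#S)$ for every pair in $S$, we only need friendships to imply that both $\KP(u\cnd v)$ and $\KP(v\cnd u)$ are at most polylogarithmic.

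First I would modify the game: when an allocation fails, Bob is allowed to pick a pair $(u,v)\in S$ and register it as a (symmetric) friendship, subject to the restriction that every string has at most $s^2(cm)^s$ friends. Via the machine construction of Lemma~\ref{lem:strat_implies_exactSet}, this quota translates to a simultaneous bound of~$O(s\log(ms))$ on $\KP(u\cnd v)$ and on $\KP(v\cnd u)$ whenever $u,v$ are friends, i.e.\ to $\E(u,v)\le O(s\log(ms))$. Under the hypothesis, Alice's strategy then prevents any pair in~$S$ from ever becoming friends, so every request on $S$ must actually be allocated, and the usual program concatenation gives $\D(S)\le \E(S)+O(\log\#S)$.

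Second I would adapt Bob's strategy. As in Section~\ref{ss:blamingLeaders}, regions are assigned dynamically using an index set list $\mcI$, and full blocks are defined as before. When the intersection of the selected regions $R_w$ for $w\in S$ contains no free interval, Bob now searches for a pair $(u,v)\subseteq S$ whose pairwise common intersection $R_u\cap R_v$ is already substantially full, and declares $\{u,v\}$ to be friends. The combinatorial requirements on~$\mcI$ should replace the "$(s{-}1)$-selection" statement of Lemma~\ref{lem:combLemmaSet} by a symmetric one about pairs: for every $I\in\mcI$ and every $I'\subseteq I$ of bounded density, there should exist only $O(sr^{s-1})$ other items $J$ for which $\#(I'\cap \mcI[\{J\}])$ is a large fraction of $\#(I\cap\mcI[\{J\}])$, with a suitable extension to $(s{-}2)$-selections $T$ so that $\#(I'\cap \mcI[\{J\}\cup T])$ controls pair-blame. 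The probabilistic argument from Section~\ref{ss:combinatorial} should then yield this variant with the same parameters.

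The main obstacle will be step three, specifically guaranteeing that a pair $(u,v)$ always exists to blame when the full intersection is clogged. In the leaders proof, one uses the simple fact that if the intersection of $s$ selected regions is entirely allocated, then some region must contain at least a $1/s$ fraction of full common blocks; the rest is bookkeeping. For pairs, one needs a symmetric Ramsey-type step: show that a fully clogged $s$-fold intersection forces some pairwise intersection to contain a significant density of blocks that were allocated by requests involving both endpoints, so that the blame is genuinely bilateral and neither endpoint exceeds its friend quota. Balancing these counts with the symmetric combinatorial lemma is the delicate point; once it is set up correctly, the proof should proceed by the same template as Proposition~\ref{prop:exactSetEasy}, concluding with the same conversion of a winning game strategy into a prefix-free machine achieving $\D(S)\le \E(S)+O(\log\#S)$.
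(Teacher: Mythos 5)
The gap is exactly at the point you flag as ``the delicate point'': the symmetric Ramsey-type step you need does not exist. When the $s$-fold intersection of the selected regions contains no free interval, the blocks that clog it are full because of requests that involve \emph{one} string of $S$ each --- and those requests may involve only strings outside $S$ otherwise. Nothing forces any pairwise intersection $R_u\cap R_v$ with $u,v\in S$ to be clogged by requests in which $u$ and $v$ \emph{co-appear}. So a ``bilateral'' blame of a pair inside $S$ cannot in general be justified, and if you blame a pair anyway whenever an allocation fails, the innocent endpoint can accumulate an unbounded number of friends: a string $v$ may appear in up to $d/\eps$ requests (a huge number, since $\eps$ is tiny), each time paired with a different heavily-used partner, so no quota of the form $s^2(cm)^s$ can be enforced. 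This is precisely why the paper does not convert failed allocations directly into friendships.

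The paper's actual route decouples the two mechanisms. Friendship is declared only when the \emph{total measure} of requests in which $u$ and $v$ co-appear reaches a threshold $f=r^{-2s}/(2s^3)$; this immediately caps the number of friends of any string by $(s-1)d/f$ and gives small $\KP(u\cnd v)$ and $\KP(v\cnd u)$ simultaneously, hence small $\E(u,v)$, so requests containing a pair of friends need no allocation. Blame itself stays per-string, exactly as in the leaders strategy of section~\ref{ss:blamingLeaders}. The remaining problem --- requests with no friends whose normal regions are clogged --- is handled by the extra-region device of section~\ref{ss:blamedRequests}: each blamed normal region is swapped for an extra copy and the step is repeated (at most $s$ times). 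The leader analysis is then reused quantitatively: a string has at most $s^2r^s$ leaders, each leader accounts for at most $f$ of co-appearing measure reaching the extra regions, so the allocated measure in an extra region is at most $s^2r^s f$, which for the chosen $f$ is too small ($s$ times smaller than any $s$-fold intersection) for an extra region ever to be blamed; hence the repeated step always terminates with an allocation. Your proposal is missing both of these ingredients (the measure-threshold friendship rule and the extra-region fallback), and the combinatorial lemma you hope to prove about pairwise intersections being clogged ``bilaterally'' would be false for the adversary Alice can actually play.
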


\noindent
Again the game is the same as in section~\ref{ss:prefixSet} with a different requirement for Bob.
He may decide not to allocate a request $(S, \eps)$, 
but if he does so, he must declare two different strings in~$S$ to be  {\em friends} of each other. 
(One might think that these two strings are blamed together and this situation creates a friendship.
Moreover, in the strategy below, strings are declared friends when they appear together in too many ``difficult'' requests.) 
Being friends is a symmetrical relation on strings.
During the game a string collects friends. 
Bob's requirement limits the number of friends: for some constant~$c$ and $r = cm$

\begin{center}
  {\em each string can have at most $r^{2s}$ different friends.}
\end{center}

\noindent
Again proposition~\ref{prop:exactSet} follows by showing that there exists a winning strategy for Bob for some constant $c$, $d = \text{poly}(s^{-1})$, $m = sn$, 
and $\eps_i \le 1/(s^{2}n(cm)^{2s})$.
Indeed, the above requirement implies that if $u$ and $v$ are friends, then $\KP(u\cnd v) \le 2s\log (sn) + O(s)$ and similarly for  $\KP(v \cnd u)$; 
thus also $\E(u,v)$ satisfies this bound. Also, if $\E(S)$ is small, then $\E(u,v)$ is small for all $u$ and $v$ in~$S$.

\bigskip
\noindent
In the previous subsection, each string has few leaders. But some string might be the leader of many strings, in other words, he may have many followers.
Thus we can not allow all followers to be friends. Instead we use the following rule to create friendships for some fraction $f>0$ that we choose later:
2 strings become friends if the total measure of requests in which they appear together is at least~$f$. 
This implies that a string $u$ can have at most $(s-1) \cdot (d/f)$ friends, 
(because each request containing $u$ can increase the fraction of at most $s-1$ strings in a request).

Suppose we use the same strategy as before. It may now happen that a string is blamed for an unallocated request that contains no friends. 
Still, the above game requires us to allocate such requests.
For this, we make a modification similar as in section~\ref{ss:blamedRequests}.
Each time the strategy assigns a region, 
we also assign an associated copy that we call {\em extra region}. The original one is called {\em normal region}. 
If a request contains no pair of friends and can not be allocated in the selected regions, 
then one of the regions is blamed, 
and we replace the blamed region by its associated extra copy, and repeat the strategy.
We will ensure that only normal regions can be blamed. Hence, after at most $s$ repetitions, the strategy makes an allocation.
See below for the detailed strategy.

In order for this strategy to work, we must ensure that extra regions can never be blamed. 
We show that for small $f$,
the total measure of allocated requests in an extra region is at most~$r^s/(2s)$. 
This is $s$ times smaller than any intersection of $s$ regions, (by the combinatorial lemma with essentially the same parameters, see below), and hence, an extra region can never be blamed.
Indeed, each time a string $u$ is blamed, the request contains a leader. For each leader of $u$, 
the measure of requests containing this leader is at most $f$, (because the extra region never allocates requests containing friends).
Since a string can have at most $s^2r^s$ leaders, the total measure is at most $s^2r^s \cdot f$. 
Hence, it suffices to choose $f = r^{-2s}/(2s^3)$ to satisfy the requirement.
For $d = s^{-4}/2$, the number of friends a string can have is at most $sd/f = r^{-2s}$.

We choose all other parameters in the combinatorial lemma in the same way as in the previous paragraph, 
except $r$ is chosen twice larger, because we need a double amount of regions. 
For convenience, we present the full strategy.

\bigskip
\noindent
\begin{samepage}
  \noindent
\!\!{\em Bob's strategy to allocate a request~$(S,\eps)$.}
\begin{enumerate}[topsep=2pt,leftmargin=1.63em]
  \item \label{item:friends}
    Declare all pairs of strings in $S$ that coappear in at least a measure $f = r^{-2s}/(2s^3)$ of requests, to be friends.
    If $S$ contains 2 strings that are a pair of friends, terminate the strategy, (since no allocation is needed).

  \item \label{item:select}
    For each $v$ in~$S$ select a normal region~$R_v$ of~$v$ with request size~$\eps$ for which the fraction of full blocks is less than~$1/(2s)$.
    If no such region exists, assign a new normal and extra region for size~$\eps$.

  \item \label{item:block} 
    Find an interval of size $\eps$ that belongs to all regions $R_v$ and is free for all strings.
    If such an interval is found, allocate it and terminate the strategy. 
    Otherwise, select a string $u$ for which at least a fraction~$1/s$ of the common blocks is full, and
    {\em blame} the region~$R_u$. Replace region $R_u$ by its extra copy and repeat this step.

\end{enumerate}
\noindent
{\em End of the strategy.} 
\end{samepage}

\subsection{A strategy with groups of friends}\label{ss:blamingGroups}

We now present the strategy that implies theorem~\ref{th:exactSet}. 
It consists of $s$ substrategies that exchange a more general type of request.
Such a request is given by a pair $(C,\eps)$ 
where $C$ is a partition of some $s$-element set.
The sets in $C$ should always be nonempty and we refer to them as groups (of friends).
A request $(S,\eps)$ of the game is viewed as the partition~$C$ containing the $s$ singleton subsets of~$S$. 

The strategy in the previous section either allocates a request, 
or declares two strings to be friends. This last operation, we view as merging of 2 singleton groups.
Hence, we obtain a new request $(C',\eps)$ where $C'$ contains a group of 2 friends and $s-2$ singletons. 
More generally, given a request $(C,\eps)$, the idea is to run the strategy of the previous section 
and use a separate region for each group that appears. More precisely, if 3 strings form a group (of friends), 
we allocate the same region for the 3 strings, and associate this group to the region. 
The result is that either an interval is allocated or 2 groups are merged.
After at most $s-1$ iterations we either allocated the request or obtain a single group.
Such groups are allocated using separate blocks.
We show that each string can only belong to a few different groups. This allows us to use blocks of reasonably large size.
To implement this strategy, we need a way to assign regions in an online way, because we do not know in advance 
which groups will appear. Fortunately, such allocations we already obtained in the previous subsections.

\bigskip
\noindent
Now the details. 
We partition the Cantor space in $s$ approximately equal parts and run~$s$ substrategies in parallel.
Substrategy~$t$ receives requests $(C,\eps)$ where $C$ contains $t$ groups.
(Recall that the groups are nonempty and that $C$ is a partition of some $s$-element set.)
Also recall that a request $(\{x_1,\ldots, x_s\},\eps)$ of the game is given to substrategy~$s$ as 
$
  \big(\big\{ \{x_1\}, \ldots, \{x_s\}\big\},\eps\big). 
  $
If the $t$-th substrategy does not allocate an interval, then it produces a request for substrategy~$t-1$. 
Finally, substrategy~$1$ allocates each request using a separate block.

In the previous subsection, the strategy associates regions to request sizes. 
Now we associate regions to pairs $(\eps_i, S')$ of request size $\eps_i$ and group~$S'$. 
We refer to such pairs as  {\em labels}. 
Substrategy~$s$ (which is executed first) receives requests with singleton groups, 
thus for a string $u$, all labels are of the form~$(\eps,\{u\})$.
Therefore, the substrategy executes precisely the strategy of the previous subsection. 
The other substrategies operate similarly. 
If $S' \in C$, then we say that the label $(\eps, S')$ {\em appears} in the request $(C,\eps)$.

Let $m_{s} = sn$, and let $c$ be the constant from the previous subsection.
For all $t =  s, s-1,\ldots,2$, let $r_{t} = cm_{t}$ and $m_{t-1} = m_t \cdot (r_{t})^{2t}$.
We obtain the following.

\begin{quote} {\em
  For each string~$u$, there are at most $m_t$ different labels that appear in a request given to substrategy~$t$.
  }
\end{quote}

\noindent
This property is trivially true for $t = s$, since this stage only receives requests with singleton groups, and we only use 
request $sn$ different request sizes.
%
We say that {\em a block is full for a group} $S'$ if at least a fraction $\#S'/s$ of its measure is allocated.
The previous strategy is only changed in the first step, where a call is made to another substrategy. 
The other steps are almost identical.

\medskip
\noindent
\begin{samepage}
 {\em The $t$-th substrategy to allocate a request $(C,\eps)$.} 
 \begin{enumerate}[leftmargin=1.3em,topsep=3pt]
   \item 
     Declare every pair of groups in $C$ that coappear together in a measure $(r_t)^{2t}/2t^3$ of requests, to be friends.
     If $C$ contains a pair of friends, then generate a request $C'$ for strategy $t-1$ where the pair is merged to a single group.

   \item 
     For each group $S'$ in $C$, select a normal region $R_{S'}$ with label $(\eps,S')$ 
     for which less than a fraction $1/(2t)$ of blocks are full.
     If no such region exists, we assign a new normal and extra region with label $(\eps,S')$, and select the normal region.

    \item 
       If there exists a free common interval of size~$\eps$, allocate this interval and terminate the strategy.
      Otherwise {\em blame} the region $R_U$ for which the intersection contains at least a fraction $1/t$ of blocks that are full for~$U$.
      Replace the region $R_U$ by its associated extra region and repeat this step.
 \end{enumerate}
 {\em End of the substrategy.}
\end{samepage}
 
\bigskip
\noindent
We show that the requirement on the number of labels is satisfied.
We use downward induction on~$t$. For $t = s$ this is already proven. 
Assume that each string receives at most $m_t$ different labels in the substrategy $t$.  
By a similar analysis as before, it assigns at most $r_t = cm_t$ different regions for each string (in fact a factor $\xi/(2t^2)$ less). 
Thus the string becomes friends with at most $(r_t)^{2t}$ other groups. 
After merging, the string belongs to at most $m_t \cdot (r_t)^{2t}$ different groups.
The induction step is proven. 

In Substage 1, we obtain $r_1$ different requests, and for each request we divide the space of substrategy 1 in $1/(sr_1)$ blocks. 
The selection of such a block is easy. Hence, all requests with sizes $\eps \le 1/(s^2r_1)$ are allocated.
We can bound $s^2 r_1$ by 
\[
   (csn)^{(2s +1)^s}.
\]
We may assume $n \ge s$, 
and for large $n$ and $s \ge 2$ this is at most $(sn)^{(\param s)^s}$.
Hence, a winning strategy exists if all requests sizes are bounded by the inverse of this quantity.
To finish the proof of  theorem~\ref{th:exactSet} it only remains to prove the combinatorial lemma.

\subsection{Proof of the combinatorial lemma}\label{ss:combLemmasSet}

We restate lemma~\ref{lem:combLemmaSet}. 

\begin{lemma*}
  \combLemmaSet
\end{lemma*}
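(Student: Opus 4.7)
The plan is to prove this by the probabilistic method, in close analogy with the proof of lemma~\ref{lem:extractorlike}. I construct $\mcI = [I_1, \ldots, I_N]$ randomly by placing each $i \in [\ell]$ into each $I_j$ independently with probability~$1/r$, and show that both items hold simultaneously with positive probability.

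The first item is the easier direction. For each $t$-selection $T$ with $t \le s+1$, the size $\#\mcI[T]$ is a sum of $\ell$ independent Bernoulli variables with success probability $r^{-t}$ and mean $\ell/r^t$; the multiplicative Chernoff bound gives deviation by more than a factor $1 \pm \xi$ with probability $2\exp(-\alpha\ell/r^t)$ for some $\alpha>0$ depending only on $\xi$. A union bound over all $T$ of size at most $s+1$, of which there are at most $(s+1)N^{s+1}$, gives total failure probability $O(N^{s+1}\exp(-\alpha\ell/r^{s+1}))$, which is below $1/4$ once $e$ in the hypothesis $\ell \ge esr^{s+1}\log N$ is large.

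For the second item I fix $j \in [N]$, a subset $I' \subseteq I_j$ of size $\#I_j/(2s)$, and pairwise disjoint $(s-1)$-selections $T_1,\ldots,T_k$ with $k = Csr^{s-1}$ for a large absolute constant~$C$. Since the $T_l$ are disjoint, at most one contains $j$; after discarding it, the random sets $\mcI[T_l]$ depend on disjoint collections of Bernoulli flips across $l$ and none uses column $j$, so they are conditionally independent given $I_j, I'$. Writing $X_l = \#(I' \cap \mcI[T_l])$ and $Y_l = \#((I_j \setminus I') \cap \mcI[T_l])$, the inequality of the lemma rearranges to $X_l \ge \frac{1-\xi}{s-1+\xi}\,Y_l$. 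Conditional on $I_j, I'$, the variables $X_l$ and $Y_l$ are independent binomials with expected ratio $\#I'/(\#I_j - \#I') \le 1/(2s-1)$, whereas the inequality demands a ratio of roughly $1/(s-1)$. Since $(2s-1)/(s-1) \ge 2$ for $s \ge 2$, this forces $X_l$ to exceed its own expectation by a constant factor, and multiplicative Chernoff gives per-$l$ failure probability at most $\exp(-\gamma\ell/(sr^s))$ for an absolute constant $\gamma>0$. Conditional independence then yields $\exp(-\gamma k\ell/(sr^s)) = \exp(-\gamma C\ell/r)$ for the whole tuple.

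The main obstacle is the final union bound: a naive $2^{\#I_j}$ bound for the number of $I'$ would be too coarse, but the entropy estimate $\binom{\#I_j}{\#I_j/(2s)} \le 2^{\#I_j\,H(1/(2s))} = 2^{O(\ell\log s/(sr))}$, where $H$ denotes binary entropy and $\#I_j \le (1+\xi)\ell/r$ by conditioning on the first item, is tight enough. Combined with $N$ choices for $j$ and $N^{(s-1)k}$ choices for the tuple $(T_1,\ldots,T_k)$, the logarithm of the union bound becomes
\[
  \log N \;+\; O\!\left(\frac{\ell\log s}{sr}\right) \;+\; (s-1)k\log N \;-\; \gamma C\frac{\ell}{r}.
\]
Dividing by $\ell/r$ and using $\ell \ge esr^{s+1}\log N$ with $k = Csr^{s-1}$, the first three terms are bounded respectively by $1/(esr^s)$, by an absolute constant times $\log s/s$, and by $C(s-1)/(er)$; taking $C$ large (depending on $\gamma$) and then $e$ large (depending on $C$) makes the expression strictly negative. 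Hence the second item fails with probability at most $1/4$, and some $\mcI$ satisfies both.
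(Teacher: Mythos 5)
Your proof is correct and follows essentially the same argument as the paper: a random $[\ell]\times[N]$ assignment, multiplicative Chernoff bounds, and a union bound in which the first item is used to control $\#I$ and hence the count of candidate $I'\subseteq I$. The only (minor, stylistic) deviations are that you apply Chernoff per selection $T_l$ and multiply the conditionally independent failure probabilities, rather than applying it once to the aggregate $\sum_l \#(I'\cap\mcI[T_l])$, and that you bound the number of $I'$ directly via the entropy estimate $\binom{\#I}{\#I/(2s)}\le 2^{\#I\,H(1/(2s))}$, where the paper first works with the crude $2^\ell$ bound (giving $k=O(sr^s)$) and then refines to $2^{2\ell/r}$ — both routes land on the same $k=O(sr^{s-1})$ and the same parameter trade-off.
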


\begin{proof}
  The proof is similar as for lemma~\ref{lem:extractorlike}, and is repeated for convenience. We use the probabilistic method. 
  We assign $\mcI = [I_1, \ldots, I_N]$ as follows. For all $i \in [\ell]$ and $j \in [N]$, place $i$ in $I_j$ with probability~$1/r$.
  By the Chernoff bound in multiplicative form, the first item holds with probability more than~$1/2$. (The details are similar as for lemma~\ref{lem:extractorlike}.) 

  Let $b$ be a large constant that we determine later and let $k = bsr^{s}$. 
  We first prove the lemma with this weaker bound. Afterwards, we explain how we can decrease $k$ by a factor~$r$.
  Consider the following variant of the second requirement:
  for each index set $I \in \mcI$, each $I' \subseteq I$ of size $(1+\xi)\ell/(2sr)$, 
  there exists a list of pairwise disjoint $(s-1)$-selections $T_1, \ldots, T_{k}$ such that for all $j \in [k]$:
  \[
    \# (I' \cap \bigcap T_j) \; \ge\; \frac {(1-\xi)^2}{s} \frac{\ell}{r^s}.
  \]
  Together with the first requirement, this implies the second requirement of the lemma.
  After summing over $j \le k$, this statement implies 
  \[
    \# \big\{(i,j) : i \in I' \cap \bigcap T_j\big\}  \;\;\ge\;\; \frac{(1-\xi)^2}{s} \cdot \frac{k\ell}{r^s}.
  \]
  It suffices to show that this inequality holds with probability less than~$1/2$.
  The expected value of the left-hand side is at most $k\# I'/r^{s-1}$. 
  For small $\xi$, this exceeds the right-hand side by a constant fraction larger than~1.
  By the Chernoff bound in multiplicative form and the union bound, the probability that this inequality holds is at most
  \[
    2^{\ell} N^{sk} \exp(-\alpha \frac {k\ell} {sr^{s}})
  \]
  for some small constant $\alpha > 0$. 
  This is strictly smaller than~$1/2$ if
  \begin{align}
    \ell &\;\le\; \tfrac 1 2 \alpha \frac {k\ell}{sr^s} \tag{$*$}\label{eq:setBound}\\
    sk\log N &\;\le\; \tfrac 1 2 \alpha \frac {k\ell}{sr^s}.\nonumber 
  \end{align}
  By assumption on $\ell$ and the choice of $k$, these inequalities hold.

  To decrease $k$ by a linear factor, we bound the number of subsets $I' \subseteq I$ as $2^{2\ell/r}$. 
  We can do this, because the probability that $\#I > 2\ell/r$ can be neglected.
  Indeed, the probability of this event is $\exp( -\alpha' \ell/r)$ and for $k = bsr^{s-1}$ and large $b$, 
  this is at least proportional to the exponent in the union bound.
  With this better bound for the number of $I'$,  
  the left-hand side of  \eqref{eq:setBound} decreases by a factor~$r$, and for the given value of $k$, 
  this bound is satisfied as well.
\end{proof}

\section{Open questions}\label{sec:openQuestions}

In  section~\ref{sec:prefix-distance} we defined 4 prefix information distances on strings. 
We observed 4 trivial relations: the bipartite distances are bounded by the non-bipartite ones, 
and the prefix-stable distances are bounded by the prefix-free ones, (up to additive $O(1)$ constants). 
Under the assumptions of theorem~\ref{th:exact} they are all equal.

\begin{question}
  Which of these distances are always equal up to additive constants?
\end{question}

\noindent
Under the assumptions of theorem~\ref{th:exact} they are also equal to $\max (\KP(x \cnd y), \KP(y \cnd x)) + O(1)$ and hence, satisfy the triangle inequality.

\begin{question}
  Which of the 4 distances always satisfies the triangle inequality?
\end{question}

\noindent
The assumption of theorem~\ref{th:exactSet} requires the maximum to be at least $(\param s)^s \log n$, where $s = \# S$. 
We do not know whether this difference can have a double exponential improvement in~$s$.

\begin{question}
  Does  theorem~\ref{th:exactSet} hold under the weaker assumption that the maximum is at least $1.01 \log (ns)$? 
\end{question}

\bibliography{bib}
\end{document}